\newcommand{\bra}[1]{\langle #1|}
\newcommand{\ket}[1]{|#1\rangle}
\newcommand{\braket}[2]{\langle #1|#2\rangle}
\newcommand{\ketbra}[2]{|#1\rangle\!\langle#2|}
\newcommand{\id}{{\mathbbm 1}}
\newcommand{\mdag}{^{\dag}}
\newcounter{theorems}
\newtheorem{thm}[theorems]{Theorem}
\newtheorem{prop}[theorems]{Proposition}
\newtheorem{cor}[theorems]{Corollary}
\newtheorem{lem}[theorems]{Lemma}
\newtheorem{rem}[theorems]{Remark}
\newtheorem{ex.}{Example}[theorems]
\newtheorem*{cor*}{Corollary}
\newtheorem*{thm*}{Theorem}
\newtheorem*{prop*}{Proposition}
\newtheorem*{lem*}{Lemma}
\newtheorem*{rem*}{Remark}
\DeclareMathOperator{\tr}{Tr}
\DeclareMathOperator{\cut}{cut}
\DeclareMathOperator{\im}{i}
\DeclareMathOperator{\IC}{\begin{array}{c}
		\rightsquigarrow \\[-1.6 ex]
		\leftarrow
\end{array}}
\DeclareMathOperator{\CI}{\begin{array}{c}
		\leftsquigarrow \\[-1.6 ex]
		\rightarrow
\end{array}}
\begin{document}
\title{Of Local Operations and Physical Wires}
\author{Dario Egloff}
\affiliation{Institute of Theoretical Physics, Universit{\"a}t Ulm, Albert-Einstein-Allee 11D-89069 Ulm, Germany}
\author{Juan M. Matera}
\affiliation{IFLP-CONICET, Departamento de F{\'i}sica, Facultad de Ciencias Exactas, Universidad Nacional de La Plata, C.C. 67, La Plata 1900, Argentina}
\author{Thomas Theurer}
\affiliation{Institute of Theoretical Physics, Universit{\"a}t Ulm, Albert-Einstein-Allee 11D-89069 Ulm, Germany}
\author{Martin B. Plenio}
\affiliation{Institute of Theoretical Physics, Universit{\"a}t Ulm, Albert-Einstein-Allee 11D-89069 Ulm, Germany}

\begin{abstract}
	
In this work (multipartite) entanglement, discord and coherence are unified as different aspects of a single underlying resource theory defined through simple and operationally meaningful elemental operations. This is achieved by revisiting the resource theory defining entanglement, Local Operations and Classical Communication (LOCC), placing the focus on the underlying quantum nature of the communication channels. 
Taking the natural elemental operations in the resulting generalization of LOCC yields a resource theory that singles out coherence in the wire connecting the spatially separated systems as an operationally useful resource. The approach naturally allows to consider a reduced setting as well, namely the one with only the wire connected to a single quantum system, 
which leads to discord-like resources. The general form of free operations in this latter setting is derived and presented as a closed form. We discuss in what sense the present approach defines a resource theory of quantum discord and in which situations such an interpretation is sound -- and why in general discord is not a resource.
This unified and operationally meaningful approach makes transparent many features of entanglement that in LOCC might seem surprising, such as
the possibility to use a particle to entangle two parties, without it ever being entangled with either of them, or that there exist different forms of multipartite entanglement.
\end{abstract}

\maketitle

\noindent
One of the oldest questions in the field of quantum mechanics asks in what consists the difference between classical and quantum states. While certainly it is hard to compare frameworks that are so fundamentally different in nature,the main qualitative difference on a formal level is that quantum mechanics deals with probability amplitudes instead of probabilities. As a consequence, one of the main predictions of quantum mechanics is that physical systems can exhibit coherent superpositions of those states associated to  sharply defined values of the observables~\cite{Schr.1935} which can for instance be observed in interference experiments. It is mainly this difference that prevents quantum theory from
being explicable by a deterministic hidden variable model with variables that are local~\cite{EPR.1935,Bell.66,Asp.81,Hensen2015,Clauser1969} or non-contextual~\cite{Kochen1975,Mermin1993,Amselem2009}. 
The question then becomes how one can understand this difference in detail and furthermore how one can quantify it. This is relevant in  its own right and also to give solid foundations to the debates about how quantum mechanical observed coherences in biological systems~\cite{Huelga2013} are, or to objectively compare different platforms for quantum computers by measuring how much more resources they provide than their classical counterparts.

A particularly transparent approach
is given by the theory of local operations and classical communication (LOCC) (see~\cite{Plenio2007,Horodecki2009}, for reviews), which incorporates the idea of the impossibility of creating non-local superpositions (entanglement) if one has two distant parties that can only communicate classically, akin to Bell's argument for the non-classicality of quantum physics~\cite{Bell.66} (see section~\ref{sec:LOCCwiring}).

\begin{figure}[htb]
	\label{fig1}
	\includegraphics[width=8cm]{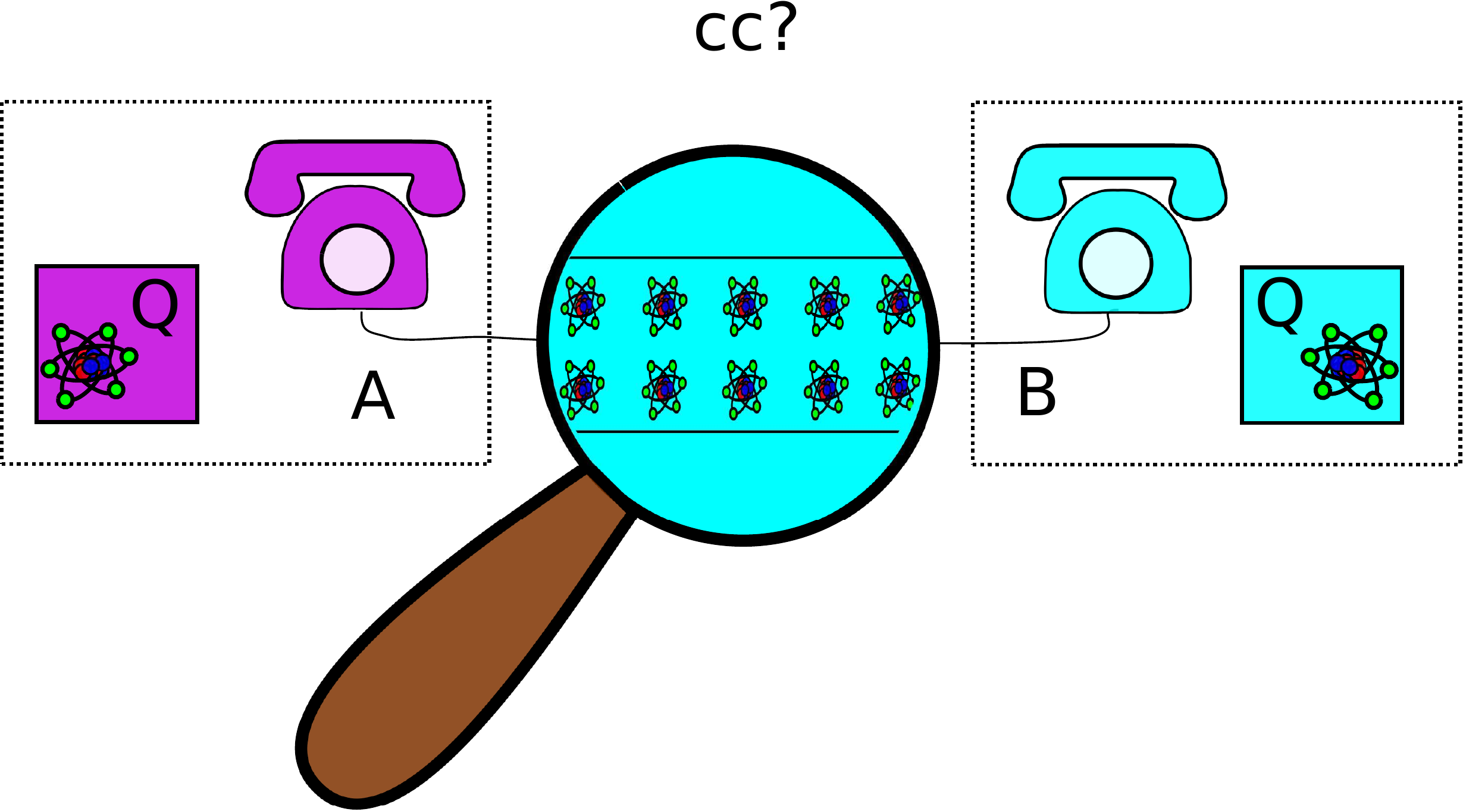}
	\caption{Wires for classical communication are quantum systems. }
\end{figure}

However, entanglement is not the only non-classical feature of quantum mechanics. There are other forms of superposition that can also give advantages over classical states. An instance in which this became apparent is the protocol of deterministic quantum computing with one qubit (DQC1) that can outperform any known classical algorithms even if no bipartite entanglement is present~\cite{Parker2002,Datta2005,DSC.08,Lanyon2008,Ali2014}. It was argued that a property denoted as discord~\cite{OZ.01,HV.01,Modi2012} would be the resource responsible for this operational advantage. While discord is an interesting measure, it is hard to argue that it is a resource of non-classicality, since it can be created already by mixing discord-free states (in fact product states) and mixing states amounts to forgetting information -- a task that is easily accomplished classically~\cite{OZ.01}. Some advances in understanding the resources in DQC1 have recently been made using a different form of non-classicality, called coherence,  which is exhibited by superpositions of states in a fixed orthonormal basis, whose elements and their statistical mixtures are the incoherent states ~\cite{Aberg2007,BCP.13,Winter2016,Streltsov2017b}. Based on this theory, complementing recently studied connections between entanglement and coherence~\cite{Streltsov2015b,Yao2015,Chitambar2016a,Chitambar2016c,Streltsov2016a,Hu2016,Hu2016a,Hu2017,Yadin2016,Zhu2017,Radhakrishnan2016,Streltsov2017,Girolami2017,Bu2017} it was shown that the precision of the DQC1 protocol is a function of the coherence of the qubit one uses as a control and that any state with some property called basis depended discord is a resource in this setting ~\cite{Ma2016,MEKP.16}. 
      
The standard (basis independent) discord  is recovered when a minimization  over this quantity is considered~\cite{MEKP.16}.
Ironically, in the theories used in these papers coherent states and entangled states are interconvertible resources and optimal instances of basis depended discord, making this more general variant of superposition in some sense equivalent to entanglement again. 
Even more generally, there is an operation that maps any superposition of a given set of linearly independent states (not necessarily orthogonal) to entangled states and non-superposition states to separable states~\cite{Killoran2016,Theurer2017,Regula2017}. This means that while entanglement does not seem to be the most generic non-classical resource, as it does not easily encompass all forms of superposition that seem useful for quantum tasks, the resources that enable to produce entanglement might well be.
Seemingly, the ability to do something truly quantum entails the possibility to produce entanglement. It is this ``universal character"~\cite{Sperling2017} of entanglement, that motivates the present paper.

Starting from LOCC, we treat the wire needed to do classical communication as a full physical (quantum) system, which can in principle realize states with some degree of coherence. We will show how the flow of this coherence can be used to create entanglement shared among the ends
of the wire (see Fig.~\ref{fig1}). We find a set of natural elemental operations 
on system and wires -- designated free in this setting -- that are equivalent to LOCC when wires are in incoherent states but, if not, allow to convert coherence into entanglement between the systems connected by the wire. In this setting, coherence in the wires is exactly as useful as entanglement between the two quantum systems, similarly as one might expect from a similar relationship between different forms of quantum cryptography (see e.g.~\cite{Shor2000,Horodecki2005a}). This setting thus explains the recently studied relation between these different resources~\cite{Streltsov2015b,MEKP.16,Yao2015,Chitambar2016a,Chitambar2016c,Streltsov2016a,Hu2016,Hu2016a,Hu2017,Ma2016,Yadin2016,Zhu2017,Radhakrishnan2016,Streltsov2017,Girolami2017,Bu2017} as the interplay of different facets of the same resource theory. Additionally, the present framework reduces to an effective theory of basis dependent quantum discord  in the case of a wire connected with just one quantum system. We will argue in what sense one can see it as a theoretical framework of quantum discord, and also discuss in some detail why -- even so -- discord cannot be called a resource. Indeed it even makes sense to focus on the effective theory on the wire alone, yielding a theory of coherence which we show to be very similar -- though not identical -- to the theory of coherence defined in~\cite{BCP.13}, but being motivated operationally instead of abstractly. 
Finally we generalize the setting to the multipartite case and show how in the tripartite case the non-equivalence of the $W$ and the $GHZ$ state (see~\cite{DVC.00}) can be understood operationally. All proofs are in the appendix.

\section{Local operations and physical wires}\label{sec:LOCCwiring}
As explained in the introduction, we are aiming at better understanding and quantifying non-classicality.
The tool of choice to develop this understanding are resource theories, since they provide a systematic guide for analysing situations where one wants to find and/or quantify the properties that can be useful for some tasks.
	
Abstractly, and glossing over details, to get a resource theory one puts a meaningful, but artificial restriction to what operations are allowed (free), within a given framework. The restriction should be chosen such that the connection to the property one wants to focus on is as clear as possible (which does not need to coincide with the distinction between ``easy" and ``hard"). There may be some preparations that are free operations and accordingly some states that are free. Since the states that are not free cannot be prepared by free operations, in some cases they might help to do an operation that otherwise would not be free, these are the resource states. There are more useful states and less useful states (if from one state one can reach another one with free operations the first is more useful, since it can be used for anything the second can be), imposing a partial order on the states. A measure for the resource can therefore only be meaningful if it is monotonous under the free operations, restricting strongly possible candidates.

One of the most successful resource theories in quantum information, and the starting point of our considerations, is the theory of local operations and classical communication (LOCC). This theory aims to capture the idea of EPR and Bell that in classical physics it is not possible to reproduce the effect of having non-local superpositions of states~\cite{EPR.1935,Bell.66}.
LOCC can be described by its elemental operations consisting of arbitrary local quantum operations on one system, post-selection and classical broadcasting of the result. Any LOCC operation is a concatenation of such operations (potentially depending on the results of the previous ones). Here we want to treat the broadcasting as an operation using a physical wire, instead of how it is usually done as an implicit exchange of classical information.  The broadcasting is then simply given by forwarding the state of the wire as an ancilla to the party in question. 

In this way, it makes sense to talk about the state of the communication channel as a quantum state. The standard LOCC theory is recovered by assuming that the state of the wire is forced to be incoherent, i.e,   $\rho_{wires}=\sum_i p(i) \ketbra{i}{i}$, and in a product state with both parties, meaning that the probability distribution is encoded in the diagonals. 
Note that while the basis one uses to encode a probability distribution in the wire is in principle arbitrary, one needs to fix it in advance; to be precise,  while it may change in time, this change must not depend on the measurement outcomes of the protocol, but must be defined at the beginning of the protocol.
Henceforth we will call this basis incoherent and denote it by $\cal Z$. With this definition it also makes sense to allow some classical processing in the wire as a free operation on this extended theory, that is, permutations of these basis states.
As an example of why it is important to fix the basis, and therefore which states are incoherent, in advance to get a fair description of the role coherence plays, let us consider the BB84 protocol~\cite{BB.84}. 

The goal of  BB84 is to distribute random keys in a safe way. This is achieved by sending  qubits through a quantum channel where each of them is  encoded by Alice in one of two possible non-commuting bases depending on the outcome of a random measurement. On the other side, Bob measures the qubit in a random basis. After repeating this many times, Alice and Bob publicly compare the bases chosen and keep only the data of the measurements that were done in the same bases. They can then compare a fraction of the remaining data to be certain (in the asymptotic limit) that nobody interfered. 
A naive approach to describe this protocol is that, as on each round the state of the qubit sent is diagonal in some basis, then the full protocol is classical. Why does this algorithm beat classical key distribution? The crucial departure from classical physics of this protocol is the random choice of bases and the fact that the security of the algorithm relies on the fact that the choice of basis in which the qubits are diagonal in is unknown to a possible eavesdropper Eve. Therefore, to analyse the security of this protocol, we need to take the perspective of Eve.
For her, whatever basis she assume to be incoherent, there will be some states sent by Alice that will be coherent, for a long enough sequence. That is because she does not know the outcomes of Alice measurements that defined the encoding bases. For this reason a fair description of what Eve can do needs to assume that she chooses the basis before the protocol starts.

As noted in the introduction, we only need one quantum system $Q$ together with a wire $W$ to make sense of this theory. We start by describing this case in some detail, before coming back to the case with multiple parties $Q=\otimes_i Q^i$ and wires $W=\otimes_j W^j$. For simplicity, to change the phases of the basis states is also assumed to be a free operation -- but as shown below this does not significantly alter the theory. We assume that both $Q$ and $W$ are finite dimensional systems, but don't keep their dimensions fixed (see footnote~\footnote{\label{fn:rep} For any ancilla $W_2$ to $W_1$ we fix an incoherent basis and label the full basis of the new $W$ by natural numbers and use implicitly the change of representation $|i(j,k)>_W \equiv |j>_{W_1} \otimes |k> _{W_2}$} for more details). The free operations in this case consist of iteratively applying the following elemental Kraus operations (which should form a CPTP map on $W\otimes Q$, that is $\sum_{\alpha} {K^{\alpha}}\mdag  K^{\alpha} = \id$) and post-selecting (for a neater notation we only write out the spaces on which the Kraus operations act non-trivially, on all others the identity operation is assumed):

\begin{enumerate}
	\item \emph{Permutations:} Permutations $\sigma$ of basis states in the basis ${\cal Z}$ on the wire W, $\sum_i\ketbra{\sigma(i)}{i}_{W}$.
	\item \emph{Phases:} Diagonal unitary evolutions in the basis ${\cal Z}$ on $W$, $\sum_j e^{\im \phi(j)} \ketbra{j}{j}_{W}$.
	\item \emph{Observed quantum operations:} Any generalized measurements on $Q$, encoding its outcomes as an incoherent state of an ancillary subsystem $W_a$ of $W$, $ \ket{\alpha}_{W_a}\otimes F^{\alpha}_{Q}$, for Kraus operators $F^{\alpha}_{Q}$ acting on $Q$.
	\item \emph{Classical to quantum forwarding:} Transfer a subsystem 
	$W_s$ of $W$ to a subsystem $Q_t$ of $Q$:   $\sum_{j} \bra{j}_{W_s}\otimes\ket{j}_{Q_t}$.
\end{enumerate} 
For simplicity, here we assume the incoherent basis $\cal Z$ to be the same for all times, but one can easily get the more general setting by replacing the free operations $K^\alpha$ by $(U_W(t,t_0) \otimes \id_Q) K^\alpha$, where $U_W(t,t_0)$ defines the (necessarily predefined, see above) change of basis for the time-step $t_0\rightarrow t$. Note that this includes replacing the identity by $(U_W(t,t_0) \otimes \id_Q)$ and that the identity operation itself is not a free operation any more. Sticking to this rule the results stay unchanged, because concatenating free operations on $t_0, t_1$ and $t_1,t_2$ yields free operations on $t_0, t_2$. This is relevant if one wants to change the basis one calls classical in time, as is usually the case in settings where discord is thought to be a meaningful measure. We will come back to this 
in subsection~\ref{subsec:quantum_discord}.

Note also that we treat encoding and decoding asymmetrically. The rational we employ here is that  it is hard to encode quantum information in a wire, but if coherence is provided and it is possible to sustain, transport and control it, you may very well also be able to use it. That is why we only allow to encode classical information in the wire, while any state can be retrieved.
We call the set of free operations in this theory of local operations and physical wires $LOP(W \IC Q)=LOP$, were here we use the symbols $\leftarrow$ and $\rightarrow$ for classical encoding, while $\leftsquigarrow$ and $\rightsquigarrow$ are used for transferring a quantum system. Items 3. and 4. are depicted in Fig.~\ref{fig:operations}.
\begin{figure}[h]
	\centering
	\includegraphics[width=9cm]{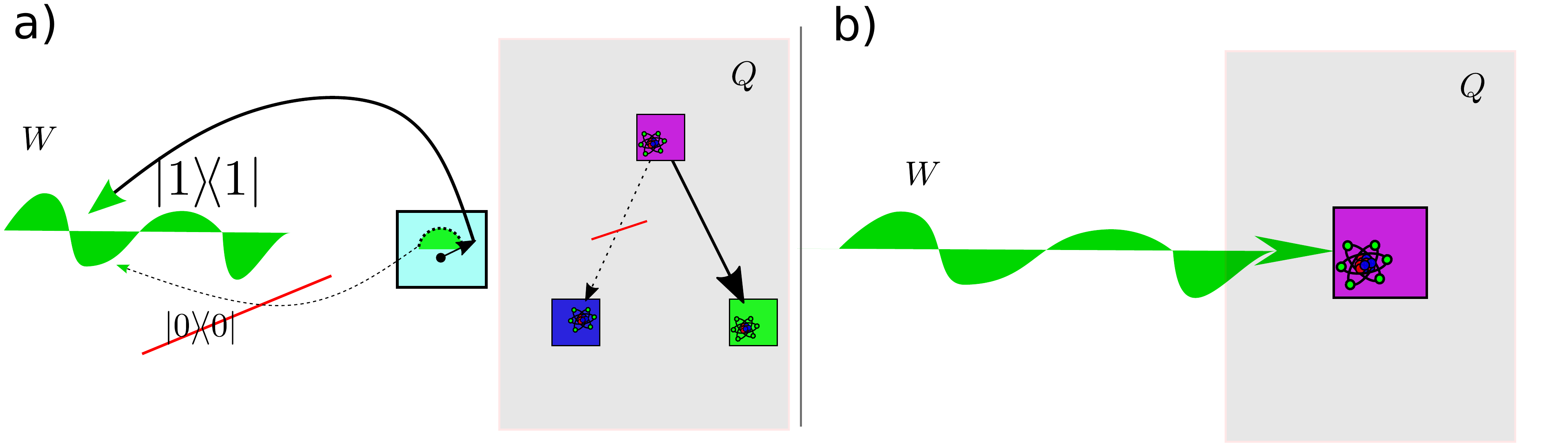}
	\caption{The picture depicts items 3. and 4. of the elemental operations defining LOP. Item 3., depicted in sub-figure a), means that to do operations on the quantum state, measure what happens and use the incoherent basis of the wire to encode the observation is considered to be a  classical operation on the wire and therefore free. Item 4, depicted in sub-figure b), means that the wire is effectively treated as a quantum system. This means that it can interact with the quantum system in a quantum way. However if the wire is an extended object it might be hard to change its state in a controlled way by this interaction. Therefore it is considered a free operation to forward the state of a subsystem of the wire to the quantum side, but not to alter the state of the wire in a quantum way.}
	\label{fig:operations}
\end{figure}
As promised, we first show that the phases are not really relevant (remember that all proofs can be found in the appendix):
\begin{prop}\label{prop:phases}
	Any operation in LOP can be done with arbitrarily high probability of success by a combination of permutations, observed quantum operations and classical to quantum forwarding.
\end{prop}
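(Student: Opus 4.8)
The plan is to show that every phase can be pushed through an arbitrary \(LOP\) protocol until it is either absorbed into a free operation or disappears, at the cost of a success probability that can be made arbitrarily close to one. First I would reduce the claim to a single elemental phase \(U_\phi=\sum_j e^{\im\phi(j)}\ketbra{j}{j}_{W}\). Since any \(LOP\) operation is a concatenation of elemental Kraus operations and the success probabilities of independent gadgets multiply, it suffices to emulate one phase with success probability \(1-\epsilon\) for arbitrary \(\epsilon>0\): concatenating \(n\) such gadgets still succeeds with probability at least \((1-\epsilon)^n\), which tends to one.

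The bookkeeping is the easy part. A phase on the wire commutes with a permutation \(\sigma\) up to the relabelling \(\phi\mapsto\phi\circ\sigma^{-1}\), and it commutes trivially with any observed quantum operation or forwarding acting on disjoint registers; hence phases can be slid freely through the protocol. Two reductions then remove most of them. A phase acting on a register that carries no coherence — in particular the ancilla \(\ket{\alpha}_{W_a}\) produced by an observed quantum operation, or any register fixed in a \(\mathcal Z\)-basis state — contributes only a branch-wise global phase and may be dropped. A phase standing immediately before a classical-to-quantum forwarding satisfies \(\bigl(\sum_j\bra{j}_{W_s}\otimes\ket{j}_{Q_t}\bigr)U_\phi^{W_s}=U_\phi^{Q_t}\bigl(\sum_j\bra{j}_{W_s}\otimes\ket{j}_{Q_t}\bigr)\), i.e. it turns into the diagonal unitary \(U_\phi^{Q_t}\) on the quantum side. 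That operator is itself free, being an observed quantum operation with the single, unitary Kraus operator \(F=U_\phi\) and a trivial outcome. So every phase that is eventually followed by a forwarding of the register it acts on can be realised without phases at all.

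The one genuinely nontrivial case is a phase acting on coherence that is never forwarded and must survive on the wire at the end of the protocol. Here I would forward the relevant subsystem to \(Q\), apply the now-free diagonal unitary there, and transfer the state back to the wire. Because transferring coherence from \(Q\) to \(W\) is deliberately \emph{not} among the free operations, this last step cannot be deterministic; I would instead attempt it with a teleportation-type gadget in which the correction that would otherwise require a wire phase is eliminated by post-selecting on the favourable measurement outcomes, leaving only permutation-type corrections, which are free. Enlarging the ancilla then makes the probability of obtaining an admissible outcome approach one, which is exactly the ``arbitrarily high probability'' appearing in the statement.

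The main obstacle is precisely this back-transfer. The encode/decode asymmetry built into \(LOP\) lets coherence flow only from \(W\) to \(Q\), so reproducing a phase on surviving wire coherence is the only point at which determinism fails and the only point requiring real care: one must verify that \emph{every} correction needed to return the state to the wire is a free (permutation-like) operation rather than another phase, so that the argument does not become circular. Everything else reduces to the commutation algebra of diagonal unitaries with permutations together with the observation that a diagonal unitary is free as soon as it lives on \(Q\).
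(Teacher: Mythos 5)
Your reductions for phases that act on incoherent registers, or that are immediately followed by a forwarding of the very register they act on, are fine, and the overall strategy of pushing diagonal unitaries onto $Q$ (where they are free) is the right one. The proof breaks down, however, exactly at the step you yourself flag as the hard one. Once you forward the coherent subsystem $W_s$ to $Q$, the coherence has left the wire for good: the resulting state is incoherent-quantum on the remaining wire, and every Kraus operator of every LOP operation maps incoherent-quantum states to incoherent-quantum states (LOP is contained in IQO, even under post-selection), so no free protocol --- not even a post-selected one --- can ever return a coherent state to $W$. The ``teleportation-type gadget'' you invoke therefore does not exist; and even setting this aside, teleporting from $Q$ to $W$ would consume a maximally entangled state across the $W$--$Q$ cut, which by Lem.~\ref{lem:bij} is equivalent to a maximally coherent state on $W$ and is itself a non-free resource. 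Your claim that ``enlarging the ancilla'' drives the single-shot success probability to one is also unsupported: post-selecting the phase-free outcomes of a $d$-dimensional Bell-type measurement succeeds with probability $1/d$ regardless of how the ancilla is chosen.

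The paper's proof avoids any back-transfer by never letting the coherence leave the wire. One applies the free injection $B=\sum_i\ketbra{i}{i}_W\otimes\ket{i}_{Q_2}$ of Lem.~\ref{lem:bij}, which copies the incoherent label onto an ancilla $Q_2$ while keeping the coherent amplitudes on $W$; applies $U_\phi$ on $Q_2$ (free, since it acts on $Q$); and then erases $Q_2$ by a Fourier-basis measurement $\bra{\hat k}_{Q_2}$. The outcome $k=d$ leaves exactly the phased state on $W$ and occurs with probability $1/d$ independently of the input; any other outcome leaves a \emph{known} residual diagonal phase $\sum_j e^{-2\pi\im kj/d}\ketbra{j}{j}_W$, which is folded into the target phase for the next round. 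It is this repeat-until-success structure with correctable failures --- not ancilla enlargement --- that yields success probability $1-(1-1/d)^M\rightarrow 1$. To rescue your argument, replace ``forward $W_s$ and teleport back'' by ``copy the incoherent label with $B$, phase the copy, and uncompute the copy by Fourier measurement with retry''.
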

It is easy to see that the set of free states that can be prepared (probabilistically or not) by these operations is given by the so-called incoherent-quantum states, $CQ^{(n)}_{{\cal Z}}=\{\rho\mid\rho=\sum_{m}p_m \ketbra {m}{m}_W \otimes {\sigma_m}_Q, |m\rangle \in {\cal Z}  \}$~\cite{Streltsov2015b,Ma2016}. 
Therefore LOP is a subset of the incoherent-quantum operations (IQO), that is given by any operations that map incoherent-quantum states to incoherent-quantum states, even after post-selection~\cite{Ma2016}.
It is not so easy to see how strict the inclusion is. We will discuss this question later in Prop.~\ref{prop:gennoninj}, once we have gathered more insight into the theory.

To get a clearer idea of the operations at hand, we will need to find a more compact form
of the operations belonging to LOP.
To this end and of independent interest it is helpful to know bijections 
that are elements of
LOP. Knowing bijections in a resource theory is useful because they define equivalence classes on states, in that all states that can be reached by bijections can be freely interconverted,
making them equivalent resources (as for any task one state can be used for, any state that can be mapped into it is at least as useful). A trivial bijection in $LOP$ is given by any unitary on $Q$, which means that any measure of the theory necessarily has to be invariant under basis changes of $Q$. The following lemma establishes such a bijection between the states on the wire $W$  and maximally correlated states on $W\otimes Q$ (also see~\cite{Streltsov2015b,Hu2016}).
\begin{lem}\label{lem:bij}	
	The operator $B:W\rightarrow W\otimes Q$, $B=\sum_i\ketbra{i}{i}_W\otimes \ket{i}_{Q}$ defines an injection and defining ${\cal B}[\rho]=(B\rho B\mdag)$,  ${\cal B}\in LOP$. Conversely there exists a map ${\cal B}^{-1}:W\otimes Q \rightarrow W$, with  ${\cal B}^{-1} \circ {\cal B} = \id_W$ and ${\cal B}^{-1} \in LOP$.
\end{lem}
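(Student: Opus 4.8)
The plan is to prove the three assertions in turn: injectivity of $B$, membership ${\cal B}\in LOP$, and existence of a left inverse ${\cal B}^{-1}\in LOP$. For injectivity I would compute $B\mdag B$ directly from $B\mdag=\sum_j\ketbra{j}{j}_W\otimes\bra{j}_Q$; using $\braket{j}{i}=\delta_{ij}$ on both tensor factors gives $B\mdag B=\sum_i\ketbra{i}{i}_W=\id_W$, so $B$ is an isometry and in particular injective. The same identity yields $\rho=B\mdag(B\rho B\mdag)B$ at the level of operators, so ${\cal B}$ is invertible as a linear map; what the lemma really asserts is that both ${\cal B}$ and this inverse are implementable by the elemental operations.

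\emph{Realising ${\cal B}$.} Since the dimension of $W$ is not held fixed, I would adjoin a fresh incoherent ancilla $W_a$ in the state $\ket{0}$ and observe that the copy rule $\ket{i}_W\ket{0}_{W_a}\mapsto\ket{i}_W\ket{i}_{W_a}$ extends to a permutation of the product basis of $W\otimes W_a$ (a generalised controlled-shift permutation), hence is free of type~1. Applied to $\sum_i c_i\ket{i}_W$ it produces $\sum_i c_i\ket{i}_W\ket{i}_{W_a}$, and forwarding $W_a$ to a fresh quantum register via $\sum_j\bra{j}_{W_a}\otimes\ket{j}_{Q}$ (type~4) yields $\sum_i c_i\ket{i}_W\ket{i}_{Q}=B\ket{\psi}$. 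Thus ${\cal B}$ is a concatenation of free operations and lies in $LOP$.

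\emph{Realising ${\cal B}^{-1}$.} This is the step I expect to be the main obstacle, because the obvious inverse $B\mdag$ acts nontrivially on $Q$, whereas the elemental operations forward only from $W$ to $Q$ (never back) and otherwise merely measure $Q$; moreover a measurement of $Q$ in the incoherent basis is useless here, as it would collapse exactly the superposition one wants to keep. Instead I would measure $Q$ in the complementary basis $\ket{\widetilde k}=\tfrac{1}{\sqrt d}\sum_l e^{2\pi\im kl/d}\ket{l}$, recording the outcome in an ancilla $W_a$; this is an observed quantum operation of type~3 with $F^k=\ketbra{\widetilde k}{\widetilde k}_Q$. On $\sum_i c_i\ket{i}_W\ket{i}_{Q}$, outcome $k$ leaves $Q$ in the fixed state $\ket{\widetilde k}$, decoupled from $W$, and (up to normalisation) leaves $W$ in $\sum_i c_i e^{-2\pi\im ki/d}\ket{i}_W$. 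Because the residual phase kickback depends only on the recorded $k$, it is undone by the diagonal unitary $\sum_{k,i}e^{2\pi\im ki/d}\ketbra{k}{k}_{W_a}\otimes\ketbra{i}{i}_W$, which is a free phase operation of type~2. After this correction the $W$-register carries $\ket{\psi}$ independently of $k$, so the map is deterministic; discarding the now-decoupled registers $Q$ and $W_a$ leaves ${\cal B}^{-1}\circ{\cal B}=\id_W$.

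\emph{Remaining checks.} Finally I would extend this by linearity to confirm that the construction returns $\rho$ for every input of the form ${\cal B}[\rho]=B\rho B\mdag$, and verify that adjoining incoherent ancillas and discarding decoupled subsystems are admissible since the dimensions are not fixed. The single genuinely quantum ingredient is the phase kickback exploited in the last step, which is precisely what permits inversion without ever moving information from $Q$ back into the wire.
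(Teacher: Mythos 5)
Your proposal is correct and follows essentially the same route as the paper's proof: ${\cal B}$ is realised by adjoining an incoherent ancilla, applying a controlled-shift permutation of the wire basis and forwarding the ancilla to $Q$, while ${\cal B}^{-1}$ is realised by a Fourier-basis measurement of $Q$ followed by an outcome-conditioned diagonal phase correction on $W$. The only differences are cosmetic (you keep $Q$ as a projected state and record $k$ in a wire ancilla rather than using a destructive measurement with classically conditioned correction), so no further comparison is needed.
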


We are now ready to give an equivalent characterization of the operations in $LOP$, which has the advantage of being an explicitly finite concatenation of maps of one fixed form. On \ one hand this might help to find a minimal number of necessary Kraus operators~\cite{Streltsov2017a}, on the other, a priori it is not obvious that such a simplification exists, since for LOCC, which is a very similar theory, the number of rounds needed for a transformation is unbounded~\cite{Chitambar2011}. The proposition states
that one can write any element of $LOP$ as a concatenation of $N$
$LOP$ maps (and $N$ is bounded by the Hilbert space dimension of the wire) that are composed of Kraus operators $K_j^{\alpha_{j}}$ having a specific functional form. Due to the possibility of post-selection, one has to consider different paths given by the outcomes of the generalized measurements, that is, which Kraus operators $K_j^{\alpha_{j}}$ have been measured; after $t$ maps (labelled from $1$ to $t$), these paths are denoted by $\vec{\alpha}_{t}$, and $\alpha_{t+1}$ denotes the possible outcomes of the map labelled by $t+1$. Both the maps and the length of the protocol may vary depending on the outcomes of previous measurements, but for any path $\vec{\alpha}$, the total length $N=N(\vec{\alpha})$ can be restricted to be less or equal to the dimension of the Hilbert space of the wire. Fig.~\ref{fig:tree3} depicts a generic example for a protocol which acts on an initially three-dimensional wire and a quantum system.

\begin{figure}[h]
	\centering
	\includegraphics[width=9cm]{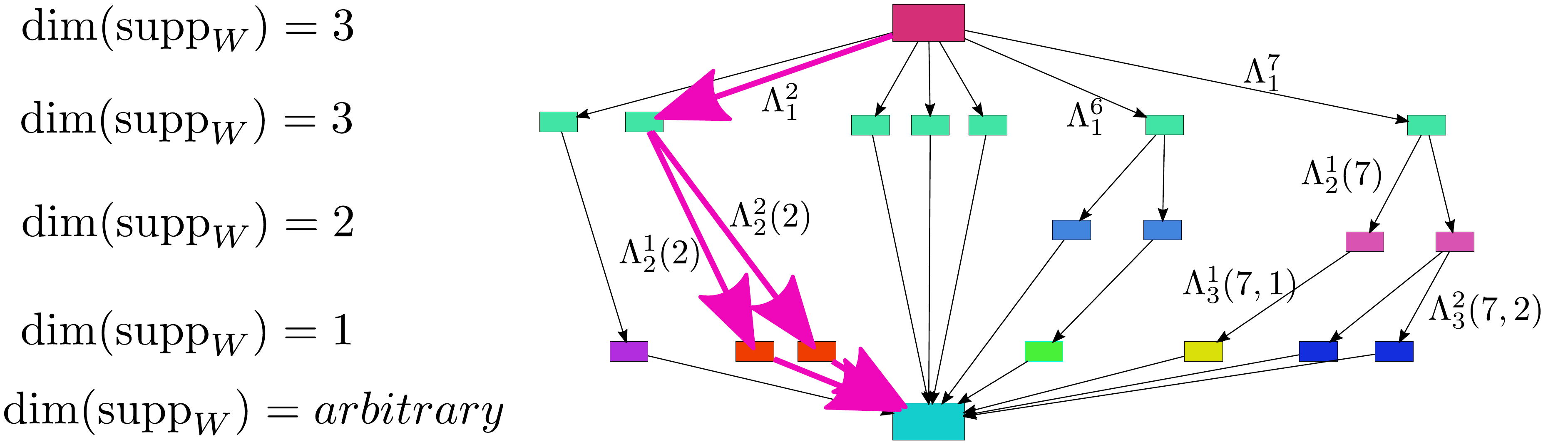}
	\caption{Depicted is a diagram of a possible protocol defining an LOP map as described in Prop.~\ref{prop:form} for the special case of an input state that is three-dimensional on the wire. All the arrows starting from the same node represent a CPTP map. Each arrow represents a sub-map with given outcome (labelled from 1 on the left to $n$ on the right). The dimension of the maximal support of the input or output states on the wire is stated on the left of the diagram. The branches can be grouped into four different types, where the types differ by how the dimension of the maximal support of the state on the wire (given wlog by $r(\vec{\alpha}_{t-1})$, after applying $t$ maps with the outcomes $\vec{\alpha}_{t}$), changes during the protocol. Any possible branch type is depicted at least once. In this picture the respective groups of outcomes for the four branch types are (we only name the outcome vectors up to the point that the group is defined): $\{(1),(2)\}$, $\{(3),(4),(5)\}$, $\{(6,1)\}$ and $\{(6,2),(7)\}$. See the Appendix for more details on the branch types.  The highlighted branch is given by $\Lambda_2(2)\circ \Lambda_1^2$, that is, the CPTP map $\Lambda_2(2)=\Lambda_2^1(2)+\Lambda_2^2(2)$ (composed by two sub-maps), after getting the outcome 2 in the first map. The length of the protocol $N$ is 2 for these outcomes, while it is 3 for e.g. the outcomes (7,2). In principle the dimension of the support of the outcome state on the wire after applying an LOP map is arbitrary (though for the depicted protocol it is at most $3\times3+1\times2+7\times1=18$).}
	\label{fig:tree3}
\end{figure}

\begin{prop}\label{prop:form}
	Let $\Lambda$ be a CPTP map acting on $W\otimes Q$. $\Lambda$ is an LOP operation exactly if it can be written as a finite sequence of maps $\Lambda=\sum_{\vec{\alpha}_{N}}\Lambda_{N}^{\alpha_N}(\vec{\alpha}_{N-1}) \circ\ldots \circ\Lambda_1^{\alpha_1}$, (where $N \leq \dim(W)$ is the length of the protocol, possibly depending on the previous outcomes $\vec{\alpha}_{t}=(\alpha_1,\ldots,\alpha_{t})$, 	$\vec{\alpha}_{0}=\vec{\alpha}_{-1}=0$) with CPTP maps $\Lambda_t(\vec{\alpha}_{t-1})=\sum_{\alpha_{t}}\Lambda_t^{\alpha_t}(\vec{\alpha}_{t-1})$ 
	having a Kraus decomposition $ \Lambda_t(\vec{\alpha}_{t-1})[\rho]=\sum_{\alpha_{t}} K_t^{\alpha_{t}}(\vec{\alpha}_{t-1})\mdag \rho K_t^{\alpha_{t}}(\vec{\alpha}_{t-1})$ of the form
	\begin{align}\label{eq:form}
	K_t^{\alpha_{t}}(\vec{\alpha}_{t-1})
	=
	\sum_{i=1}^{r(\vec{\alpha}_{t-2})}  \ketbra{\sigma_{\vec{\alpha}_{t}} \circ \min[r(\vec{\alpha}_{t-1}),i]}{i}_W
	\otimes E_t^{ \alpha_{t}}\left(\vec{\alpha}_{t-1},i\right),
	\end{align}
	with $\sigma_{\vec{\alpha}_{t}}$ being an injective map to the positive labels of a new incoherent basis (see footnote~\cite{Note1}) and $E_t^{ \alpha_{t}}\left(\vec{\alpha}_{t-1},i\right)$ being an arbitrary operator acting on $Q$, potentially depending on previous outcomes and controlled by the populations of the wire and $1\leq r(\vec{\alpha}_{t})<r(\vec{\alpha}_{t-1})$ for any $t\in \{1,\ldots, N-1\}$ and $r(0)=\dim(W)$.
\end{prop}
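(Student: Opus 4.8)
The plan is to prove the characterization as two implications: first that every map built from Kraus operators of the form in Eq.~(\ref{eq:form}) lies in $LOP$ (the ``if'' direction), and then the converse, that every $LOP$ map admits such a decomposition with the strict support decrease and the bound $N\le\dim(W)$ (the ``only if'' direction). The ``if'' direction is a direct construction, whereas the converse carries the real content and rests on a compression argument.

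For the ``if'' direction I would realize a single Kraus operator of the form Eq.~(\ref{eq:form}) as a short concatenation of elemental operations. The operator $E_t^{\alpha_t}(\vec\alpha_{t-1},i)$ is a $Q$-operation \emph{controlled by the wire population} $i$; to produce such a control inside $LOP$ I would first copy the population onto a fresh register of $Q$ using the injection $\mathcal B$ of Lemma~\ref{lem:bij} (which is free), then apply an observed quantum operation that performs $E_t^{\alpha_t}(\cdot,i)$ conditioned on that copy and records the outcome $\alpha_t$ in an incoherent wire ancilla. The clamping $\min[r(\vec\alpha_{t-1}),i]$ together with the relabelling $\sigma_{\vec\alpha_t}$ are then obtained by forwarding the ``overflow'' wire subsystem to $Q$ (classical-to-quantum forwarding, which collapses the distinguishability above level $r(\vec\alpha_{t-1})$ non-injectively) followed by a permutation onto the new incoherent basis. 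Reading off the composite Kraus operators of this concatenation reproduces Eq.~(\ref{eq:form}); phases may be dropped by Proposition~\ref{prop:phases} or absorbed as scalars into the $E_t^{\alpha_t}$, so nothing else is required.

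For the converse I would start from an arbitrary $LOP$ protocol, i.e.\ an adaptively chosen concatenation of the four elemental operations, and analyse one branch $\vec\alpha$ at a time. First I would normalise the branch: phases are removed via Proposition~\ref{prop:phases} and absorbed into the $Q$-operations, and each observed quantum operation is rewritten so that, along a fixed branch, its incoherent outcome flag is treated as a relabelling of the wire basis rather than as a genuinely new coherent degree of freedom. The key observation is then that along a fixed branch the \emph{coherent} wire support is non-increasing: permutations and outcome flags merely relabel it, an observed quantum operation attaches a definite flag and hence multiplies the number of populated orthogonal wire states by one, and \emph{only} classical-to-quantum forwarding strictly reduces it, by consuming and transferring to $Q$ at least one wire dimension. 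I would therefore group the branch into blocks, each a maximal run of support-preserving operations together with the next operation that strictly reduces the coherent support (the final block being exempt from requiring such a reduction), and argue by reordering—forwarding only reads wire populations and commutes with the $Q$-operations that do not yet depend on the forwarded data—that each block acts as one Kraus operator of the form Eq.~(\ref{eq:form}): the read population $i$ controls $E_t^{\alpha_t}$, the reduction appears as the clamp $\min[r(\vec\alpha_{t-1}),i]$, and the accumulated permutations and flags appear as the injective relabelling $\sigma_{\vec\alpha_t}$.

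Since the coherent support is a positive integer that strictly decreases at the end of each block, from $r(0)=\dim(W)$ down to at least $1$, the number of blocks along any branch is at most $\dim(W)$; this gives both $N\le\dim(W)$ and the monotonicity $1\le r(\vec\alpha_t)<r(\vec\alpha_{t-1})$, and collecting the blocks over all branches reassembles the adaptive tree of Prop.~\ref{prop:form}. I expect the main obstacle to be exactly this reorganisation: one must show that an \emph{arbitrary} interleaving of the four operations, with full adaptivity on previous outcomes, can be commuted and regrouped into the ``read the populations, act on $Q$, shrink the wire'' block structure without altering the overall channel. The delicate points are the commutation of forwardings past the $Q$-operations performed before them, the careful accounting of forwarded registers and outcome flags so that they reappear as the control index $i$ and the branch label $\alpha_t$ rather than as leftover dimensions, and checking that the merging of support-preserving operations into the following block is consistent across all branches of the protocol tree.
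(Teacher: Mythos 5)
Your overall architecture (an ``if'' direction by explicit construction, an ``only if'' direction whose real content is compressing an arbitrary protocol to $N\le\dim(W)$ canonical maps with strictly decreasing $r$) matches the paper, and your observation that classical-to-quantum forwarding is the only elemental operation that can strictly shrink the wire support is correct. The ``if'' direction is essentially the paper's: add an incoherent ancilla, permute so as to copy the population and clamp it, forward the copy to $Q$, apply the controlled operation there, and remove the duplicate with ${\cal B}^{-1}$ of Lemma~\ref{lem:bij}.

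The ``only if'' direction, however, has a genuine gap, and it is not only the unexecuted commutation argument you flag. Your central claim --- that each block (a maximal support-preserving run followed by one support-reducing operation) ``acts as one Kraus operator of the form Eq.~(\ref{eq:form})'' --- is false. A single Kraus operator of that form can only be non-injective on the wire labels in one very special way: everything above level $r$ is collapsed onto level $r$ via $\min[r,\cdot]$, followed by an injection. But a single classical-to-quantum forwarding of a subsystem $W_s$ with $\dim(W_s)\geq 2$ induces the relabelling $(j,k)\mapsto j$, which has \emph{several} collapsed pre-image classes; e.g.\ for $\dim(W)=4=2\times 2$ one gets $f(1)=f(2)=1$, $f(3)=f(4)=2$, which cannot be written as $\sigma\circ\min[r,\cdot]$ for any injection $\sigma$ and any $r$. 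So a block consisting of exactly one forwarding already refutes the one-block-one-Kraus-operator correspondence, and your counting ``number of blocks $=$ number of support reductions'' does not yield the bound $N\le\dim(W)$ in the form stated. The missing idea is the content of Lemma~\ref{lem:inj_ops}: a map with Kraus operators $\sum_i\ketbra{f(i)}{i}_W\otimes E^\alpha(i)$ for general non-injective $f$ must be \emph{expanded} into a sequence of canonical maps, each collapsing one pre-image class of $f$ at a time (after a reordering permutation), with $r$ strictly decreasing along the way.

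It is also worth noting that the paper avoids your ``main obstacle'' (reordering and regrouping an arbitrary adaptive interleaving of elemental operations) entirely, by taking a different route: it first shows that every \emph{elemental} operation is some finite sequence of canonical maps (permutations, phases and observed quantum operations trivially, forwarding via Lemma~\ref{lem:inj_ops}), so that any LOP map is an arbitrarily long sequence of canonical maps; it then proves a purely algebraic pairwise compression lemma --- splitting each injection $\sigma_{\vec{\alpha}_t}$ into a permutation and an order-preserving injection $a$ --- showing that any two consecutive canonical maps whose supports do not strictly decrease can either be merged into one canonical map or rewritten as two canonical maps that do strictly decrease, while explicitly re-verifying the CPTP condition of each rewritten step. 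That last verification is another point your sketch omits: after regrouping, each $\Lambda_t(\vec{\alpha}_{t-1})$ must individually be trace preserving, which is not automatic and is checked by hand in the paper. To repair your proof you would either need to carry out the commutation argument together with an analogue of Lemma~\ref{lem:inj_ops} for each block, or simply adopt the paper's two-stage strategy.
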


 The above proposition is useful to connect other theories that have been discussed in the literature with the one presented here. From Prop.~\ref{prop:form} it follows that destructive measurements on any subsystem $W_s$ of $W$ are free (destructive measurements being a set of Kraus operators mapping to a one-dimensional Hilbert space, which consists of only one, trivially incoherent, state). This can also be more directly seen, as one can forward the subsystem $W_s$ from the wire to the quantum side and perform the measurement there. Note that one can understand any Positive Operator Valued Measure (POVM) as a destructive measurement, since one is not interested in the outcome \cite{NC.00}, which can be stated as: \emph{any POVM  can be implemented inside LOP}.  
Also note that any special incoherent operation SIO can be performed on $W$. Special incoherent operations have Kraus operators which commute with dephasing, forcing them to have the form $F^{\alpha}=\sum_i c_{\alpha}(i)\ketbra{\sigma_{\alpha}(i)}{i}$, for permutations $\sigma_{\alpha}$ and complex $c_{\alpha}(i)$~\cite{Winter2016,Yadin2016}. Even more restrictive, physical incoherent operations PIO are special incoherent operations where the permutations are fixed for all the Kraus operators ($\sigma_{\alpha}=\sigma$). Both are obviously special cases of the form given in Prop.~\ref{prop:form}. Indeed,
\begin{prop}[SIO, PIO and LOP]\label{prop:PIO}
	Let $\Lambda$ be a CPTP map acting on $W\otimes Q$. $\Lambda$ is an LOP operation exactly if it can be written as a sequence of maps $\Lambda=\Lambda_M(\vec{\alpha}_{M-1}) \circ\ldots \circ\Lambda_1$, for some finite $M$,
	where each $\Lambda_i$ is
	\begin{itemize}
		\item[(a)]  a physical incoherent operation on $W$
		\item[or (b)] a destructive measurement in one fully coherent basis of a subsystem of $W$
		\item[or (c)]	a controlled unitary (${\bf U}_{\rm control}=\sum_m \ketbra{m}{m}_{\rm control} \otimes {\bf U}_{{\rm target}}(m)$) with control $W$ and target $Q$ 
		\item[or (d)] a generalized measurement of $Q$, encoding the result on $W$ ($\rho \mapsto \sum \limits_{\alpha} \ketbra{\alpha}{\alpha}_W\otimes K^{\alpha}\rho {K^{\alpha}}\mdag$).
	\end{itemize}
	One can equivalently replace item (a) by ``a special incoherent operation on $I$''.
\end{prop}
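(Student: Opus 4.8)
The plan is to prove the two inclusions separately, using the normal form of Prop.~\ref{prop:form} for the non-trivial direction. First I would show that every finite concatenation of maps of types (a)--(d) is an LOP operation. Since LOP is by definition closed under concatenation of its elemental operations, it suffices to place each of the four primitives inside LOP. Type (d) is literally the elemental ``observed quantum operation''. Type (c), a single Kraus operator $\sum_m\ketbra mm_W\otimes\mathbf U(m)_Q$, is the special case of the form~\eqref{eq:form} with trivial permutation, no support reduction and $E_i=\mathbf U(i)$, hence an LOP map of protocol length one. Type (b) is free because a destructive measurement of a wire subsystem is realized by forwarding that subsystem to $Q$ (item 4) and measuring it there (item 3), as already observed. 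Finally a PIO (type (a)) is a special SIO, and an SIO with Kraus operators $\sum_i c_\alpha(i)\ketbra{\sigma_\alpha(i)}{i}$ is the instance of~\eqref{eq:form} with $E_i^{\alpha}=c_\alpha(i)\,\id_Q$ and a genuine permutation $\sigma_\alpha$, so it too lies in LOP. Since moreover every PIO is an SIO, this already shows that the ``SIO version'' of the list generates at least LOP.

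For the converse I would start from an arbitrary $\Lambda\in$ LOP, write it in the normal form of Prop.~\ref{prop:form}, and prove that each elementary map $\Lambda_t(\vec\alpha_{t-1})$ --- whose Kraus operators have the shape~\eqref{eq:form} --- is itself a short concatenation of primitives. Writing $r=r(\vec\alpha_{t-1})$, I would decompose the action of $K_t^{\alpha_t}$ into three ingredients: the wire-controlled operation $E_i^{\alpha_t}$ on $Q$, the injective relabeling $\ket i\mapsto\ket{\sigma(i)}$ on the block $i<r$, and the collapse $\ket i\mapsto\ket{\sigma(r)}$ of the block $i\ge r$. The controlled instrument $\{E_i^{\alpha_t}\}$ is implemented by a Stinespring dilation $\sum_i\ketbra ii_W\otimes V_i$, a controlled unitary with control $W$ and target $Q$ together with a fresh ancilla (type (c)), followed by a measurement of that ancilla encoding the outcome $\alpha_t$ on $W$ (type (d)); the relabeling of the low block is a PIO (type (a)). Because $r(\vec\alpha_t)$ strictly decreases along the protocol, only finitely many such maps occur, so the total concatenation is finite.

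The main obstacle is the collapse of the high-index block, and it is here that trace preservation of $\Lambda_t$ must be exploited. The naive coherent merge $\sum_{i\ge r}\ketbra{\sigma(r)}{i}$ is not a contraction, hence not a legal stand-alone channel, and the input may carry coherences between the low and high blocks that must survive in the output. Expanding $\sum_{\alpha_t}{K_t^{\alpha_t}}\mdag K_t^{\alpha_t}=\id$ forces the relation $\sum_{\alpha_t}{E_i^{\alpha_t}}\mdag E_{i'}^{\alpha_t}=\delta_{ii'}\,\id_Q$ for all $i,i'\ge r$ (while $\{E_i^{\alpha_t}\}_{\alpha_t}$ is a genuine instrument for each $i<r$). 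This says precisely that on the high block the assignment $\ket i\ket\psi\mapsto\sum_{\alpha_t}\ket{\alpha_t}E_i^{\alpha_t}\ket\psi$ is an isometry, so the wire index there is not destroyed but faithfully transferred to the $Q$-side. I would therefore realize the collapse by \emph{forwarding} the high block to $Q$ --- itself a concatenation of a controlled copy onto a fresh $Q$-register $Q'$ (type (c)), a destructive measurement of the old wire block in the coherent Fourier basis (type (b)), and a phase correction on $Q$ conditioned on the recorded outcome (type (c)) --- after which the summation over $i$ is done by a genuine $Q$-side instrument with Kraus operators $M^{\alpha_t}=\sum_i\bra i_{Q'}\otimes E_i^{\alpha_t}$, recorded by a type (d) operation. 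The orthogonality relation is exactly what makes $\sum_{\alpha_t}{M^{\alpha_t}}\mdag M^{\alpha_t}=\id$, so this final step is a legal measurement. Verifying that the resulting composite reproduces $K_t^{\alpha_t}$ \emph{including} the cross terms between the permuted low block and the collapsed high block --- i.e.\ threading the coherences correctly through the forwarding --- is the crux of the calculation, organized along the branch types of Fig.~\ref{fig:tree3}.

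Finally, the equivalence of the PIO and SIO formulations follows for free: every PIO is an SIO, so any concatenation in the PIO list is one in the SIO list, giving LOP${}\subseteq{}$(SIO version); and since SIO${}\subseteq{}$LOP together with (b)--(d)${}\subseteq{}$LOP gives (SIO version)${}\subseteq{}$LOP, both lists characterize exactly LOP.
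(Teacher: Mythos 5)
Your forward inclusion (every concatenation of (a)--(d) lies in LOP) is correct and essentially identical to the paper's: (d) is literally the elemental observed quantum operation, (b) is forwarding plus a measurement on $Q$, and PIO $\subseteq$ SIO together with the SIO/controlled-unitary Kraus form being a special case of Eq.~\eqref{eq:form} handles (a) and (c).

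Your converse, however, contains a genuine gap, and it is precisely at the point you yourself flag as ``the crux.'' The high-index block $\{i\geq r\}$ is a \emph{direct summand} of the wire's Hilbert space, not a tensor factor, so it is not a ``subsystem of $W$'' in the sense of primitives (b) and of classical-to-quantum forwarding: you cannot destructively measure it, or forward it, while leaving the low block on the wire. If instead you forward the whole register to $Q$, run the instrument $M^{\alpha_t}=\sum_i\bra{i}_{Q'}\otimes E_i^{\alpha_t}$ there, and try to return the surviving label $\sigma(\min[r,i])$ to the wire, the only available re-encoding is type (d), which produces $\sum_\alpha\ketbra{\alpha}{\alpha}_W\otimes(\cdot)$ and hence dephases the wire label --- but a generic $\Lambda_t$ of the form~\eqref{eq:form} must preserve coherences on the wire between $\sigma(j)$ for $j<r$ and the collapsed label $\sigma(r)$. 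This is not a presentational issue: LOP by design forbids coherent encoding from $Q$ onto $W$, so any route that first moves the wire label to the quantum side cannot be completed. The correct implementation (used in the paper's proof of Prop.~\ref{prop:form}) never does this: the collapse $\min[r,\cdot]$ is realized as a \emph{permutation on an enlarged wire} $W_1\otimes W_2$, namely $\sum_i\ketbra{\min[r,i]}{i}_{W_1}\otimes\ketbra{i}{0}_{W_2}$, so that no information is destroyed on the wire side; only the duplicate index in $W_2$ is forwarded to $Q$, the controlled $E$'s and the $\sigma$-relabelling are applied, and the duplicate is then removed with ${\cal B}^{-1}$. More importantly, the paper's converse avoids the normal form of Prop.~\ref{prop:form} altogether: since LOP is generated by its four elemental operations, and permutations/phases are PIO (type (a)) while observed quantum operations are type (d), the only thing to prove is that classical-to-quantum forwarding decomposes into (a)--(d); this is done by a controlled shift (type (c)), a destructive Fourier-basis measurement of the forwarded wire subsystem (type (b)) --- legitimate there because that subsystem \emph{is} a tensor factor --- and an outcome-dependent phase correction. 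Restructure your converse along those lines.
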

Noting that items (a), (c) and (d) together form an effective theory of the special incoherent operations
by restricting the theory to the effect on the wire~\cite{Yadin2016}, we find that the present approach (apart from giving a completely different motivation) only differs by allowing measurements. This difference however, is crucial; that special incoherent operations commute with dephasing means that apart from being unable to create coherences, they affect populations only depending on populations, making them incoherent in a very strict sense. So strict in fact that coherences don't have any effect that can be measured by free operations, meaning that coherences are useless for any observable task in that resource theory and there are thus no resource states in the theory. In contrast, ideally (but maybe not always necessary) one would find that having enough non-free states at hand removes completely any restrictions of the theory, showing that this is really the resource that helps overcome the restriction and giving sense to questions of how much resources one needs for a given task. Exactly this we find in the present approach: if supplemented by enough coherent ancillary states, $LOP$ can be used to achieve any desired quantum operation.
\begin{thm}[Resource states]\label{prop:resource}
	Let $\Lambda$ be a CPTP map acting on $W_1\otimes Q$, with $W_1$ having dimension $d$. Let $\ket{\psi} = \sum_{i=1}^d \frac{1}{\sqrt{d}} \ket{i}$ be a maximally coherent state on $W_2$. Then there is an operation $\Lambda'\in LOP(W_2\otimes W_1\IC Q)$, with $\tr_{W_2} [\Lambda'[\ketbra{\psi}{\psi}_{W_2}\otimes\rho_{W_1,Q}]]=\Lambda[\rho_{W_1,Q}]$.
\end{thm}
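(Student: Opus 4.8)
The plan is to exploit the one asymmetry built into LOP: coherence can be read \emph{off} the wire for free (forwarding is a coherent isometry), but it can never be written \emph{onto} the wire coherently, since encoding into $W$ is restricted to the incoherent basis. The maximally coherent ancilla $\ket{\psi}$ is precisely the resource that removes this one obstruction, and once it is available the whole problem reduces to universality on the quantum side. Concretely I would use a \emph{forward--process--write-back} scheme: relocate the input wire onto $Q$ (free and coherent), run $\Lambda$ there (where LOP is unrestricted), and then deposit the part of the output destined for the wire back onto $W_1$ using $\ket{\psi}$.

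First I would move the input wire $W_1$ to an auxiliary quantum register $A$ by classical-to-quantum forwarding (item 4), $\sum_j\ket{j}_{A}\bra{j}_{W_1}$. This is a coherent isometry, so the joint state $\rho_{W_1,Q}$ is relocated to $\mathcal{H}_A\otimes\mathcal{H}_Q$ with all coherences intact. On the quantum side every operation is free: introducing a fresh ancilla in $\ket{0}$, applying a suitable unitary and discarding part of the system (a Stinespring dilation realized through observed quantum operations, item 3, with the environment removed by a free destructive measurement) lets me implement $\Lambda$ as a channel $A\otimes Q\to B\otimes Q$, where the register $B$ (of dimension $d$) carries the component of the output that must end up on the wire.

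The heart of the argument is the write-back gadget, a coherent analogue of teleportation that converts the quantum register $B$ into a wire state at the cost of $\ket{\psi}$. Writing the joint state of $B$ with the remaining registers $R$ as $\sum_j\ket{j}_B\otimes\ket{\phi_j}_R$, I would apply a controlled shift with control on the wire $W_2$ and target on $B$ (free by Prop.~\ref{prop:PIO}(c)), producing
\begin{align}
\frac{1}{\sqrt{d}}\sum_{j,k}\ket{k}_{W_2}\otimes\ket{j-k}_{B}\otimes\ket{\phi_j}_R ,
\end{align}
then measure $B$ in the incoherent basis (item 3) with outcome $l$, and finally undo the induced shift by the outcome-dependent permutation $\ket{m}_{W_2}\mapsto\ket{m+l}_{W_2}$ (item 1). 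Each outcome occurs with probability $1/d$ and leaves the same normalized state $\sum_j\ket{j}_{W_2}\otimes\ket{\phi_j}_R$, so the content of $B$ and all its correlations with $R$ are now carried coherently by the wire and no which-branch information survives. A free swap of the wire subsystems $W_1\leftrightarrow W_2$ places this on $W_1$, and tracing out $W_2$ together with the spent forwarding register and the classical outcome ancillas---all uncorrelated with the output---yields exactly $\Lambda[\rho_{W_1,Q}]$.

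The step I expect to be the main obstacle, and the only place the resource is genuinely consumed, is this write-back gadget. One must check both that a \emph{single} maximally coherent state of the matching dimension suffices and that summing over the measurement branches reproduces the channel deterministically rather than merely probabilistically. The uniform amplitudes of $\ket{\psi}$ are exactly what make every branch succeed with an identical corrected state, which is what upgrades the protocol from heralded to deterministic after the partial trace; verifying this branch-independence carefully is the crux, the remaining steps being routine bookkeeping of free operations.
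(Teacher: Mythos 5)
Your proposal is correct, and its overall architecture --- forward $W_1$ onto the quantum side, implement $\Lambda$ there where everything is free, then spend $\ket{\psi}_{W_2}$ to write the relevant output register back onto the wire --- is exactly the paper's. The one place you genuinely diverge is the write-back gadget. The paper first applies Lem.~\ref{lem:bij} to convert $\ket{\psi}_{W_2}$ into a maximally entangled pair $\tfrac{1}{\sqrt d}\sum_i\ket{i}_{W_2}\ket{i}_{Q_3}$ and then runs standard teleportation: a Bell-basis measurement on $Q_3\otimes Q_2$ with $d^2$ outcomes, followed by an outcome-dependent diagonal unitary \emph{and} permutation on the wire. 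Your gadget instead uses $\ket{\psi}_{W_2}$ directly as the control of a controlled shift on $B\subset Q$ (free by Prop.~\ref{prop:PIO}(c)) followed by a computational-basis measurement of $B$; a short check (which you flag as the crux, correctly) confirms each of the $d$ outcomes occurs with probability $1/d$ and is corrected by a permutation alone, with no phase correction and no need for Lem.~\ref{lem:bij}. This is a mild but real economy: fewer branches, and the only wire operations consumed are permutations and a controlled unitary, so the argument does not lean on the free phase operations at all. Both protocols consume exactly one maximally coherent state of dimension $d$ and yield the channel deterministically after summing branches, so the two proofs establish the same statement with the same resource accounting.
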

We conclude that coherence is a meaningful resource in $LOP$.
Furthermore, from Lem.~\ref{lem:bij} it follows that the resource of coherence $\sum_{i=1}^d \frac{1}{\sqrt{d}} \ket{i}_W$ can be reversibly converted into entanglement between the wire and the system $Q$, $\sum_{i=1}^d \frac{1}{\sqrt{d}} \ket{ii}_{WQ}$ which hence is  an equivalent resource of the theory (also see~\cite{Streltsov2015b,MEKP.16}). 

The above properties resemble much those found for incoherent-quantum operations (IQO)~\cite{Ma2016} and as stated above, the free operations of this theory contain $LOP$. But how strict is this inclusion?
In the following proposition, we show that in general incoherent-quantum operations cannot be performed by LOP operations. However,
\begin{prop}[LOP and IQO]\label{prop:gennoninj}
	Be $\Lambda$ an incoherent-quantum operation on $W_1\otimes Q$, which is exactly the case if it is CPTP and has a Kraus decomposition with Kraus operators of the form $K^{\alpha}=\sum_i \ketbra{f_{\alpha}(i)}{i}_{W_1} \otimes E^{\alpha}(i)$, for some functions $f_{\alpha}$ acting on the labels of the incoherent basis and some operators $E^{\alpha}(i)$ acting on $Q$. Let $d := \dim(W_1)$. If $d=2$, $\Lambda \in LOP$. For $d\geq 3$ $LOP\neq IQO$, but there is a stochastic implementation of the map in $LOP$ with 
	a success rate of at least
	$1/d$; i.e. there is an operation $\Lambda' \in LOP(W_2\otimes W_1\IC Q)$  with $\Lambda'[\ketbra{0}{0}_{W_2}\otimes\rho]=\ketbra{0}{0}_{W_2}\otimes\Lambda_0'[\rho]+\ketbra{1}{1}_{W_2}\otimes\Lambda_1'[\rho]$ with $\Lambda_1'[\rho]=\Lambda[ \rho]/d\; \forall \rho$.
\end{prop}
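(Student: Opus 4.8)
The plan is to read off the exact gap between the two classes and then close it stochastically. By Prop.~\ref{prop:form} every LOP Kraus operator acts on the wire through an \emph{injective} relabelling, whereas a general incoherent--quantum Kraus operator $K^{\alpha}=\sum_i\ketbra{f_{\alpha}(i)}{i}_{W_1}\otimes E^{\alpha}(i)$ permits an arbitrary, possibly many--to--one $f_{\alpha}$. Hence the only obstruction is the coherent \emph{merging} of several wire populations into one, and the genuine difficulty is that such a merge must reproduce the output--wire coherence between distinct images $j=f_{\alpha}(i)$ under operations that are asymmetric: coherence flows freely from $W$ into $Q$ (forwarding) but can be written back onto $W$ only classically (encoding). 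Naively forwarding the whole wire and acting on $Q$ therefore cannot work, because a coherent output on the wire could not be restored. This asymmetry is the main obstacle, and the device that overcomes it is to keep the \emph{group label} on the wire while forwarding and projecting only the within--group degree of freedom.

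For the stochastic statement (and the bound $1/d$) I would argue as follows. Using $\mathcal B$ of Lem.~\ref{lem:bij} I copy the wire populations onto an ancilla on the $Q$ side, run the observed quantum operation $\sum_i\ketbra{i}{i}\otimes E^{\alpha}(i)$ on that copy (which both applies $E^{\alpha}(i)$ and reveals the outcome $\alpha$), and then uncompute the copy with a controlled shift (control $W_1$), so that the wire still carries $i$ coherently but $\alpha$ is now known. Knowing $\alpha$ fixes the grouping, so I permute $W_1$ (a free operation) into a tensor form $W_{\mathrm{grp}}\otimes W_{\mathrm{in}}$ in which $W_{\mathrm{grp}}$ stores the image $j=f_{\alpha}(i)$ and $W_{\mathrm{in}}$ (padded to dimension $d$) stores an injective within--group index. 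The crucial move is to forward only $W_{\mathrm{in}}$ to $Q$ and project it onto $\tfrac{1}{\sqrt d}\sum_{k}\ket{k}$: since $W_{\mathrm{grp}}$ is never measured, all coherences between different $j$ survive on the wire, and the projection contributes exactly the factor $1/\sqrt d$ to every surviving component, so the successful outcome realises $\tfrac{1}{\sqrt d}K^{\alpha}$ precisely. Recording success/failure on the flag $W_2$ and summing over the classically distinct $\alpha$ gives $\Lambda_1'[\rho]=\tfrac1d\sum_{\alpha}K^{\alpha}\rho K^{\alpha\dagger}=\Lambda[\rho]/d$, whence the success probability is $\tr(\Lambda[\rho])/d=1/d$.

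For $d=2$ I would upgrade this to a deterministic protocol. Splitting the Kraus operators by whether $f_{\alpha}$ is injective, the injective ones are a wire permutation followed by a controlled unitary (items (a),(c) of Prop.~\ref{prop:PIO}, using a Stinespring dilation of $E^{\alpha}$) and are already free; the only non--injective option on a qubit is the constant map, for which the output wire is the single state $\ket{f_{\alpha}(0)}$ and carries \emph{no} coherence to preserve. Such a merge can be done deterministically by a controlled unitary followed by a destructive measurement of the wire in the coherent basis $\{\ket{+},\ket{-}\}$ (item (b)), keeping both outcomes. I expect the qubit trace condition $\sum_{\alpha}K^{\alpha\dagger}K^{\alpha}=\id$ --- whose off--diagonal (in the wire) part forces cancellations among the merging operators --- to let me choose, via the isometric freedom of the Kraus decomposition, a representation in which these operators group into exactly such $\ket{\pm}$ pairs, giving a genuine CPTP LOP protocol and thus the inclusion of IQO in LOP for $d=2$; together with the always--valid reverse inclusion this yields equality.

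For $d\geq 3$ the same analysis explains why $\Lambda$ need not lie in LOP. The forward--everything realisation used for a constant $f_{\alpha}$ works precisely because its output wire is coherence--free; it fails as soon as $f_{\alpha}$ has a group of size $\geq 2$ \emph{and} at least one further group, for then the output wire must retain coherence between the images $j$ while the populations inside a group must be summed coherently. By Prop.~\ref{prop:form} every LOP branch relabels the wire injectively, so no single branch can fuse two populations while keeping them coherent with a third: preserving the cross--group coherence forbids a group--resolving measurement, while the within--group sum can only be produced by a (necessarily probabilistic) projection as above. I would therefore exhibit an explicit such IQO on a three--dimensional wire --- e.g.\ one merging $\{0,1\}\mapsto 0$ and $\{2\}\mapsto 1$ with a prescribed coherent weight --- and certify $\Lambda\notin LOP$ either by tracking the wire support and coherences through the normal form of Prop.~\ref{prop:form} or via a coherence monotone. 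The impossibility appears exactly when a non--trivial group coexists with a second group, i.e.\ only for $d\geq 3$, which simultaneously explains why every $d=2$ map is free; making this impossibility fully rigorous is the step I expect to require the most care.
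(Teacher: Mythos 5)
Your stochastic $1/d$ protocol is essentially the paper's: copy the wire index to the quantum side, apply $E^{\alpha}(i)$ and record $\alpha$ there, write $f_{\alpha}(i)$ back onto the wire by a (now $\alpha$-dependent) permutation, and erase the leftover injective index by projecting it onto the uniform superposition, which contributes $1/\sqrt{d}$ to every branch. That part is fine. The rest has genuine gaps. First, your diagnosis of the gap between LOP and IQO is based on a misreading of Prop.~\ref{prop:form}: the wire relabelling $i\mapsto\sigma_{\vec{\alpha}_{t}}\circ\min[r(\vec{\alpha}_{t-1}),i]$ is \emph{not} injective --- the cut $\min[r,\cdot]$ merges the whole block $\{r,\dots,r(\vec{\alpha}_{t-2})\}$ into a single level while preserving its coherence with the levels below $r$. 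So your central claim for $d\geq 3$, that ``no single branch can fuse two populations while keeping them coherent with a third,'' is false, and the impossibility argument built on it collapses. The paper's separation is instead an explicit counterexample: a channel on a three-dimensional wire with four rank-deficient incoherent Kraus operators $K_1,\dots,K_4$, ruled out of LOP by combining the closed three-level normal form with a rank-one lemma (any incoherent Kraus operator reproducing $K_4^{\dagger}K_4$ must be $\lambda U K_4$ for an incoherent unitary $U$, which is incompatible with the required cut structure). You give neither the example nor the certification, and you concede as much.

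Second, the $d=2$ argument does not work as sketched. You cannot ``split the Kraus operators by whether $f_{\alpha}$ is injective'' and implement the two families separately: neither subset is trace-preserving on its own, and every LOP round must be a complete CPTP map whose Kraus operators all share the same cut. This completeness problem is precisely the crux, and your appeal to a destructive $\ket{\pm}$ measurement plus ``isometric freedom'' does not address it (nor would a $\ket{\pm}$ measurement reproduce an arbitrary family of merging operators $E^{\alpha}(i)$). The paper's solution is a two-round protocol: define $E^{0}(i)=\sqrt{\sum_{\alpha\in R}E^{\alpha}(i)^{\dagger}E^{\alpha}(i)}$ and let the first round consist of the injective operators together with the single injective operator $K^{0}=\sum_{i}\ketbra{i}{i}\otimes E^{0}(i)$; only on outcome $0$ does a second round perform the merge $1,2\mapsto 1$ with operators $E^{\alpha}(i)E^{0}(i)^{-1}$ (Penrose pseudo-inverse, plus a completing operator on the kernel). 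Trace preservation of that second round hinges on the off-diagonal identity $\sum_{\alpha\in R}E^{\alpha}(1)^{\dagger}E^{\alpha}(2)=0$, which you correctly identify as relevant but never actually deploy. As it stands, your proposal proves only the $1/d$ stochastic statement.
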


Meaning that even though the theories $LOP$ and $IQO$ differ, they are very similar. 
Equally close are the theories they induce on the wire (by tracing out $Q$ at the end). We conclude that the effective theory we 
obtain by restricting $LOP$ to the wire -- even though being operationally motivated -- is in many aspects similar to the abstractly motivated theory of coherence introduced in~\cite{BCP.13}, were the free operations are any that can be decomposed into Kraus operators that cannot create coherence. Specifically, the free states are the same, the resource states are the same and the same amount of resources is needed to remove all restrictions of the theories, in both theories any destructive measurements can be performed, for qubits the theories are equivalent and for higher dimensions the theories are stochastically equivalent. 
Furthermore, for a given $\epsilon$, having enough copies of any state which is not incoherent-quantum, LOP allows to prepare a maximally coherent state with fidelity $f>1-\epsilon$ and probability $p> 1-\epsilon$ (see Appendix~\ref{sec:resources}).
Therefore, since maximally coherent states can be used as a resource to implement any quantum operation (Thm.~\ref{prop:resource}), all not incoherent-quantum states are resources for non-classicality of the wire in the present setting. This is true independently of how one defines non-classicality, as long as it does not include full quantum theory. More specifically, since for any free operation the state remains incoherent-quantum if it was at the beginning, one can make a stochastic model for any free operation, which correctly describes the change of the reduce state of the wire, as long as at the beginning the full state is incoherent quantum. Explicitly, one just needs to replace the elemental operations, that is, permutations, tracing out and encoding a classical outcome of a measurement by their obvious non-contextual (and local) classical counterparts. For the measurement this means encoding the classical probabilities as defined by the Born rule and the current state of the external quantum system to measure, but this can only depend on some previous populations of the wire: its previous classical states. Even so, having enough copies of any state which is not incoherent-quantum will allow to do any operation on the full system, including those that have no non-contextual (or non-local, see the next section about entanglement) classical counterparts.

\subsection{Quantum Discord}\label{subsec:quantum_discord}
Before continuing and applying the tools we have just developed to multipartite entanglement, let us stop a moment and draw the connection to quantum discord. In a sense to be made precise now, $LOP$ can be seen as a theory of quantum discord. As the free states of $LOP$ are the incoherent-quantum states defined above, the theory can be seen as quantifying how much a state differs from an incoherent-quantum state, that is, by definition, how much basis-dependent quantum discord it has (called measurement dependent in~\cite{Modi2012}). Take any measure of basis-dependent discord which can at the same time be normalized such that it yields $1$ for a singlet state. Assume also that the measure  is upper bounded by the entropy of the local state of the wire. Then, by minimizing this quantity over all possible bases we obtain a measure of discord satisfying the properties stated in section II.A.1. of the review~\cite{Modi2012}. While starting from coherence theory it seems strange to do this optimization, since one starts by choosing a natural basis as incoherent and only then the notion of coherence makes sense, the present approach is much nearer to one original setting in which quantum discord was introduced~\cite{OZ.01}. Indeed the theory of coherence of the wire in the present approach is an effective theory and the incoherent basis is arbitrary, it just depends on what one assumes to be the default basis to encode the information in. 
This allows to interpret quantum discord as a natural lower bound that defines the non-classical resource one has to assume at least, independently of the basis chosen. However, this is not the same as to say that quantum discord is a resource of non-classicality. What happens is, that if one has two quantum states that have zero discord, there still might be no basis for which both have zero basis dependent discord simultaneously and it is thus not surprising that their mixture might not be discord-free (also see the discussion in~\cite{OZ.01} and Fig.~2 in~\cite{MEKP.16}). This means that in the present framework we reproduce the interpretation of quantum discord as how non-classical one has to assume the state of the wire at least, when seen as a part of the full system $W\otimes Q$ (similarly as was the motivation for its introduction in~\cite{OZ.01}) -- if one considers only one state (even at different times, since, as noted above, one can change the basis in time). But having more than one possible state (for instance if one does not consider the evolution of one given state, but a protocol defined for any input state, such as in quantum cryptography), it is necessary to fix one basis (possibly a different one at different times) for all the states one considers, to have a meaningful quantity. 
It is in this sense that the present approach defines a resource theory of quantum discord, even though discord is not (and should not be!) a monotone and thus a measure of the resource theory, as discord is not a resource, it is an indicator of non-classicality: if a state has non-zero discord, this means that for whatever basis one chooses as incoherent, the state is a resource for non-classicality with the present framework.

\section{Coherence cost of entanglement}
The aim of this section is to better understand multipartite entanglement by looking at the coherence needed to generate it. This is a natural approach as entanglement is always a result of coherent interactions happened in the past. 
In the case of bipartite entanglement the theory in our setting is the one depicted in Fig.~\ref{fig1}, which we denote by $LOP({Q}^1 \CI {W}\IC {Q}^2)$ and consists of two quantum systems ${Q}^1$ and ${Q}^2$ connected by a wire ${W}$. The elemental free operations are the free operations of $LOP({W}\IC {Q}^1)$ together with the ones in $LOP({W}\IC {Q}^2)$, meaning that any operation is given by the composition of these operations, possibly with post-selection. In general, we will call $LOP({Q}^{1} \CI {W}^{1} \IC {Q}^{2} \CI {W}^{2 }\IC \ldots)$ the set of operations consisting of concatenating operations in the corresponding sets  $LOP({W}^{j} \IC {Q}^{j\;(or\;j+1)})$. The notation is explained in Fig.~\ref{fig:notation} The minimal amount of coherence needed to create pure bipartite entanglement directly follows from Lem.~\ref{lem:bij} (also see~\cite{Streltsov2015b}). One can produce a pure state on ${Q}^1\otimes {Q}^2$ which in its Schmidt-decomposition~\cite{NC.00} is given by $\ket{\psi}_{{Q}^1,{Q}^2}=\sum_i c_i \ket{ii}_{{Q}^1,{Q}^2}$ from $\sum_i c_i \ket{i}_{W}$ on ${W}$ by applying Lem.~\ref{lem:bij} to get $\ket{\psi}_{{Q}^1,{W}}$ and then using classical to quantum forwarding on ${W}\otimes {Q}^2$ to get the wanted state. On the other hand one also sees that the production is optimal since one needs to have at least that amount of entanglement on the bipartite cut ${Q}^1 \mid {W}\otimes {Q}^2$ (remember that entanglement and coherence are equivalent resources for $LOP({Q}^{1} \CI {W})$). As shown below, the connection between coherence and entanglement in the bipartite case is even stronger; a maximally correlated state can be used as a resource for a $LOCC$ transformation exactly if the equivalent coherent state can be used as a resource to do the transformation under $LOP$. 
This puts the connection between coherence and entanglement that has recently attracted a lot of attention on the level of resources and operations instead of measures~\cite{Streltsov2015b,Yao2015,Chitambar2016a,Chitambar2016c,Streltsov2016a,Hu2016,Hu2016a,Hu2017,Ma2016,Yadin2016,Zhu2017,Radhakrishnan2016,Streltsov2017,Girolami2017,Bu2017}. It will be useful to introduce the notation ${W}^{j}=W^{j}_1\otimes  W^{j}_2\ldots$ and ${Q}^{j}=Q^{j}_1\otimes Q^{j}_2\ldots$, with the convention that the upper index labels the local systems and the lower their respective subsystems, whose number may vary. The full systems are referred to by ${Q}=\otimes_j {Q}^{j}$, ${W}=\otimes_j {W}^{j}$. 
\begin{figure}[h]
	\centering
	\includegraphics[width=9cm]{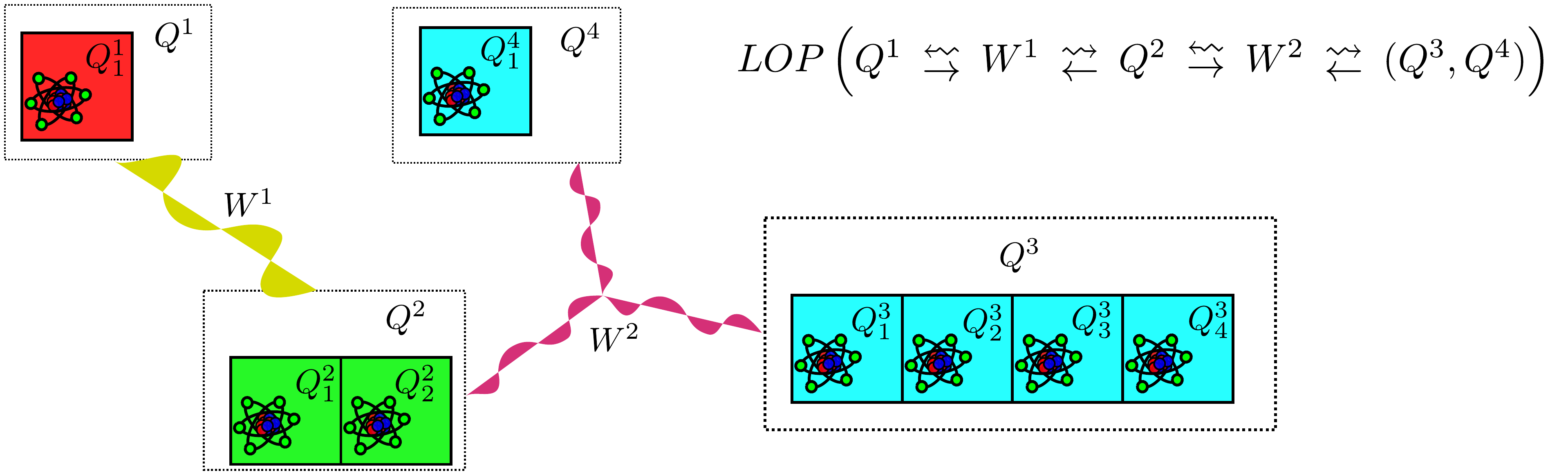}
	\caption{The figure shows a possible wiring for four parties. The first wire connects the systems $Q^1$ and $Q^2$ the second connects $Q^2$, $Q^3$ and $Q^4$, resulting in the theory $LOP(Q^1\CI W^1 \IC Q^2 \CI W^2\IC (Q^3,Q^4))$. Each party (and also the wires) might consist of more than one quantum system and changing the number of subsystems of each party is a free operation in LOP.}
	\label{fig:notation}
\end{figure}

\begin{thm}\label{thm:bipartite}
		Let $\eta_{LOCC}=\sum_{ij} r_{i,j} \ketbra{i}{j}_{Q^1_1}\otimes\ketbra{i}{j}_{Q^2_1}$ be a maximally correlated state (in arbitrary orthonormal local bases of $Q^1_1\otimes Q^2_1$) and $\eta_{LOP}=\sum_{ij} r_{i,j} \ketbra{i}{j}_{{W}}$ be a corresponding state in the incoherent basis $\cal Z$ of ${W}$.
		If $\Lambda$ is  a CPTP map on ${Q}^1\otimes {Q}^2$, then the following statements are equivalent:
			\begin{align*}
			1.\;\;& \exists 	\Lambda_{LOCC}\in LOCC({Q}^1,{Q}^2): \\
			&\Lambda_{LOCC} \left[ \eta_{LOCC} \otimes \rho_{Q^1_2,Q^2_2} \right] 	=\Lambda[\rho_{Q^1_2,Q^2_2}] \; \forall \rho_{Q^1_2,Q^2_2}\\
			2.\;\;& \exists \Lambda_{LOP}\in LOP({Q}^1\CI {W} \IC {Q}^2): 
			\\&  \Lambda_{LOP} \left[ \eta_{LOP} \otimes \rho_{Q^1_2,Q^2_2} \right] 	=\Lambda[\rho_{Q^1_2,Q^2_2}] \; \forall \rho_{Q^1_2,Q^2_2}.
			\end{align*}
	\end{thm}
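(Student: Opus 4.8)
The plan is to prove the equivalence of statements 1 and 2 by establishing both implications using the bijection $\mathcal{B}$ from Lemma~\ref{lem:bij} together with the classical-to-quantum forwarding operation, which are the two tools that tie coherence on the wire to correlations/entanglement across a cut. The central observation is that $\eta_{LOP}$ living on $W$ and $\eta_{LOCC}$ living on $Q^1_1\otimes Q^2_1$ are the ``same'' maximally correlated state transported across the $\mathcal{B}$-map: applying $\mathcal{B}$ to $\eta_{LOP}$ on $W$ produces a maximally correlated state on $W\otimes Q$, and forwarding appropriately converts between the wire picture and the two-system picture. So the proof should reduce both directions to the statement that these resource states are freely interconvertible in the relevant theories, and that the free operations match up under this conversion.

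First I would prove the direction $2\Rightarrow 1$, which I expect to be the more conceptually direct one. Given $\Lambda_{LOP}\in LOP(Q^1\CI W \IC Q^2)$ achieving the transformation from $\eta_{LOP}$, I would simulate it inside $LOCC(Q^1,Q^2)$ acting on $\eta_{LOCC}$. The idea is that the wire $W$ in the $LOP$ protocol can be replaced by the maximally correlated resource $\eta_{LOCC}$ shared between the two parties: using Lemma~\ref{lem:bij} one sees that a state on the wire is equivalent (via $\mathcal{B}$ and $\mathcal{B}^{-1}$, both in $LOP$) to a maximally correlated state shared across a cut, and each of the four elemental $LOP$ operations on a wire connecting two parties has a natural $LOCC$ counterpart once the wire is realized as shared classical correlation plus the supplied entanglement. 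Concretely, permutations and phases on $W$ become local operations combined with classical communication of which basis state is involved; observed quantum operations (item 3) are exactly local measurements with classical broadcasting; and classical-to-quantum forwarding (item 4) is where the shared maximally correlated state $\eta_{LOCC}$ gets consumed to reconstruct the quantum correlation on the $Q$ side. I would verify that translating each elemental operation this way yields an $LOCC$ protocol whose overall action on $\eta_{LOCC}\otimes\rho$ equals the action of $\Lambda_{LOP}$ on $\eta_{LOP}\otimes\rho$, hence equals $\Lambda[\rho]$.

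For the converse $1\Rightarrow 2$, I would take an $LOCC$ protocol consuming $\eta_{LOCC}$ and realize it within $LOP(Q^1\CI W\IC Q^2)$ consuming $\eta_{LOP}$. Here the key step is to use $\mathcal{B}^{-1}\in LOP$ to compress the maximally correlated entanglement that $LOCC$ would use back into coherence on the wire, then redistribute it via forwarding. Since classical communication in $LOCC$ is precisely what item 3 (observed quantum operations encoding outcomes on an incoherent ancilla of $W$) provides, and since any local quantum operation on $Q^1$ or $Q^2$ is already free in the respective $LOP(W\IC Q^i)$, the simulation amounts to interleaving these with the conversion of the wire resource. I would argue that because $\eta_{LOP}$ converts reversibly to $\eta_{LOCC}$ (across the appropriate cut, using Lemma~\ref{lem:bij} and forwarding), any use of $\eta_{LOCC}$ in the $LOCC$ protocol can be matched by a use of $\eta_{LOP}$ in $LOP$.

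The hard part will be handling the asymmetry of the forwarding operation and the fact that $\eta_{LOP}$ is a single state on the shared wire $W$ whereas $\eta_{LOCC}$ is bipartite across $Q^1_1\mid Q^2_1$: I must check that the maximally correlated structure (with the \emph{same} coefficients $r_{i,j}$ in both) is preserved under the $\mathcal{B}$/forwarding dictionary, and in particular that the correct local bases are lined up so that the induced channel on the auxiliary systems $\rho_{Q^1_2,Q^2_2}$ is exactly $\Lambda$ and not some basis-rotated version. I also expect subtlety in ensuring the simulations are exact rather than merely stochastic or high-probability, so I would rely on the fact that the relevant $\mathcal{B}$, $\mathcal{B}^{-1}$, and forwarding maps are deterministic $LOP$ operations (as opposed to the stochastic statements in Prop.~\ref{prop:gennoninj}), and that post-selection bookkeeping across the branching protocol closes up correctly. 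Verifying this coefficient-level and basis-level matching for arbitrary mixed input $\rho_{Q^1_2,Q^2_2}$ is where most of the technical care will be required.
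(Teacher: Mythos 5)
Your proposal follows essentially the same route as the paper: for $1\Rightarrow 2$ you convert $\eta_{LOP}$ into $\eta_{LOCC}$ on $Q^1_1\otimes Q^2_1$ via Lem.~\ref{lem:bij} plus forwarding and then invoke the containment of LOCC in LOP (Rem.~\ref{rem:generalization}), and for $2\Rightarrow 1$ you simulate each elemental LOP operation by an LOCC counterpart while maintaining the invariant that the wire state $\sum_{ij}s_{ij}\ketbra{i}{j}_W\otimes\tau(i,j)$ is represented as the maximally correlated state $\sum_{ij}s_{ij}\ketbra{ii}{jj}_{Q^1_2,Q^2_2}\otimes\tau(i,j)$ --- which is exactly the paper's dictionary, including the subtlety you flag that forwarding must consume (delete) the correlated copy on the far side. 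The only cosmetic slips are that the resource conversion in $1\Rightarrow 2$ uses $\mathcal{B}$ (not $\mathcal{B}^{-1}$) followed by forwarding, and that permutations and phases are simulated by synchronized local operations without communicating ``which basis state is involved'' (which would dephase the shared register).
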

	We now use the common definition that $\rho \stackrel{\cal O}{\rightarrow} \sigma$ means that there is a map $\Lambda \in \cal O$, with $\Lambda[\rho]=\sigma$ (for some space of superoperators $\cal O$). The following corollary then follows directly by taking $\Lambda$ in the theorem, to be the preparation of the state $\sigma_{{Q}^1,{Q}^2}$:
\begin{cor}
	Let $\eta_{LOCC}=\sum_{ij} r_{i,j} \ketbra{i}{j}_{Q^1_1}\otimes\ketbra{i}{j}_{Q^2_1}$ be a maximally correlated state (in arbitrary orthonormal local bases) and $\eta_{LOP}=\sum_{ij} r_{i,j} \ketbra{i}{j}_{{W}}$ be a corresponding state in the incoherent basis $\cal Z$ of ${W}$. Then:
	\begin{align*}
	&		\eta_{LOCC} \otimes \rho_{Q^1_2,Q^2_2} \stackrel{LOCC({Q}^1,{Q}^2)}{\rightarrow} \sigma_{{Q}^1,{Q}^2} \\
	\Leftrightarrow \,& \eta_{LOP} \otimes \rho_{Q^1_2,Q^2_2} \stackrel{LOP({Q}^1\CI {W} \IC {Q}^2)}{\rightarrow} \sigma_{{Q}^1,{Q}^2}.
	\end{align*}
	
\end{cor}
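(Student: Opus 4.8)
The plan is to read the desired biconditional as the single–input specialization of the channel–level equivalence already established in Theorem~\ref{thm:bipartite}. That theorem compares, for a fixed CPTP map $\Lambda$ on $Q^1\otimes Q^2$, whether $\Lambda$ can be realized by $LOCC$ while consuming the maximally correlated resource $\eta_{LOCC}$ (statement~1) against whether it can be realized by $LOP$ while consuming the coherent resource $\eta_{LOP}$ (statement~2). The corollary, by contrast, asks only about one state transformation $\eta\otimes\rho_{Q^1_2,Q^2_2}\to\sigma_{Q^1,Q^2}$. So the whole task reduces to manufacturing, from such a one–shot transformation, a full channel to which the theorem applies, and then reading the theorem's output back at the relevant input.

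Concretely, for the direction ``$\Rightarrow$'' I would start from an $LOCC$ map $\Lambda_{LOCC}$ witnessing $\Lambda_{LOCC}[\eta_{LOCC}\otimes\rho_{Q^1_2,Q^2_2}]=\sigma_{Q^1,Q^2}$ and define the induced channel $\Lambda[\,\cdot\,]:=\Lambda_{LOCC}[\eta_{LOCC}\otimes(\,\cdot\,)]$, which maps states on $Q^1_2\otimes Q^2_2$ to states on $Q^1\otimes Q^2$ and is manifestly CPTP (it is $\Lambda_{LOCC}$ composed with appending the fixed state $\eta_{LOCC}$). By construction this $\Lambda$ satisfies statement~1 of the theorem, with the very witness $\Lambda_{LOCC}$; this realizes the choice ``$\Lambda$ is the preparation of $\sigma$'' in the sense that $\Lambda[\rho_{Q^1_2,Q^2_2}]=\sigma_{Q^1,Q^2}$. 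Theorem~\ref{thm:bipartite} then hands me statement~2, i.e.\ an $LOP$ map $\Lambda_{LOP}$ with $\Lambda_{LOP}[\eta_{LOP}\otimes\tau]=\Lambda[\tau]$ for all $\tau$; evaluating at $\tau=\rho_{Q^1_2,Q^2_2}$ gives $\Lambda_{LOP}[\eta_{LOP}\otimes\rho_{Q^1_2,Q^2_2}]=\sigma_{Q^1,Q^2}$, which is exactly the right–hand side. The direction ``$\Leftarrow$'' is symmetric: build $\Lambda[\,\cdot\,]:=\Lambda_{LOP}[\eta_{LOP}\otimes(\,\cdot\,)]$ from an $LOP$ witness, invoke the theorem to pass to statement~1, and evaluate at $\rho_{Q^1_2,Q^2_2}$.

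The only point I expect to require genuine care is the mismatch of quantifiers: Theorem~\ref{thm:bipartite} is a statement about channels (its two conditions quantify over all inputs $\tau$), whereas the corollary speaks of a single state conversion. Passing through the literal constant ``prepare–$\sigma$'' map naively would force one to re-prepare $\rho_{Q^1_2,Q^2_2}$ on every input, which is illegitimate under $LOCC$ whenever $\rho_{Q^1_2,Q^2_2}$ is entangled across $Q^1_2\mid Q^2_2$. The induced–channel construction above sidesteps this obstacle cleanly, because it never re-prepares $\rho_{Q^1_2,Q^2_2}$: the resource $\eta$ enters only as the consumed input and $\rho_{Q^1_2,Q^2_2}$ only as the one particular argument at which the two realizations are compared. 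Beyond this, the remaining checks are routine — that appending a fixed state and composing with a CPTP map again yields a CPTP map of the stated input/output type, and that the equivalence is applied with the correct pairing $\eta_{LOCC}\leftrightarrow\eta_{LOP}$ fixed by the shared coefficients $r_{i,j}$.
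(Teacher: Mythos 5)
Your proposal is correct and follows the paper's intended route: the corollary is obtained by specializing Theorem~\ref{thm:bipartite} to a suitable channel $\Lambda$ and reading off the two witnesses. The paper's own justification is the single phrase ``taking $\Lambda$ in the theorem to be the preparation of the state $\sigma_{Q^1,Q^2}$,'' and your choice of $\Lambda[\,\cdot\,]:=\Lambda_{LOCC}[\eta_{LOCC}\otimes(\,\cdot\,)]$ (resp.\ the $LOP$ analogue) is the careful way to make that phrase precise: if one instead reads ``preparation of $\sigma$'' literally as the constant channel $\tau\mapsto\sigma\tr[\tau]$, the forward implication would require upgrading a one-shot conversion at the particular input $\rho_{Q^1_2,Q^2_2}$ to a channel identity valid for all inputs, which fails when $\rho_{Q^1_2,Q^2_2}$ carries entanglement across $Q^1_2\mid Q^2_2$ that the witnessing map exploits. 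Your induced-channel construction sidesteps this quantifier mismatch while changing nothing of substance, so I would classify it as the same proof with a small but genuine tightening of the paper's one-line argument.
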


Having shown this very strong connection between coherence and bipartite entanglement, we move to the multipartite $LOCC$ case. While in the bipartite case it is quite clear how one needs to explicitly implement the wires (as there is one clearly simplest way to connect two parties), in the multipartite case the situation is more complex. To be able to do any operation by using enough coherence, each party needs to be connected to all others (possibly indirectly over third parties). On the other hand one also does not need to have two parties connected in two different ways and avoiding to have double connections simplifies the theories.
We use the short-hand notation $LOP( {W}^{1} \IC ({Q}^{1},{Q}^2,\ldots))$ for the case of one wire connecting different quantum systems (i.e. concatenating operations in the sets  $LOP({W}^{1} \IC {Q}^{i})$). For the example of three parties ${Q}=A\otimes B\otimes C$, we are left with the 3+1 theories of the two types depicted in figure~\ref{fig:fig3}; namely the three ways ($A$, $B$ or $C$) of choosing ${Q}^2$ in $LOP({Q}^1\CI {W}^1 \IC {Q}^2\CI {W}^2 \IC {Q}^3)$ and $LOP({W} \IC (A,B,C))$.
\begin{figure}
	\centering
	\label{fig:fig3}
	\includegraphics[width=9cm]{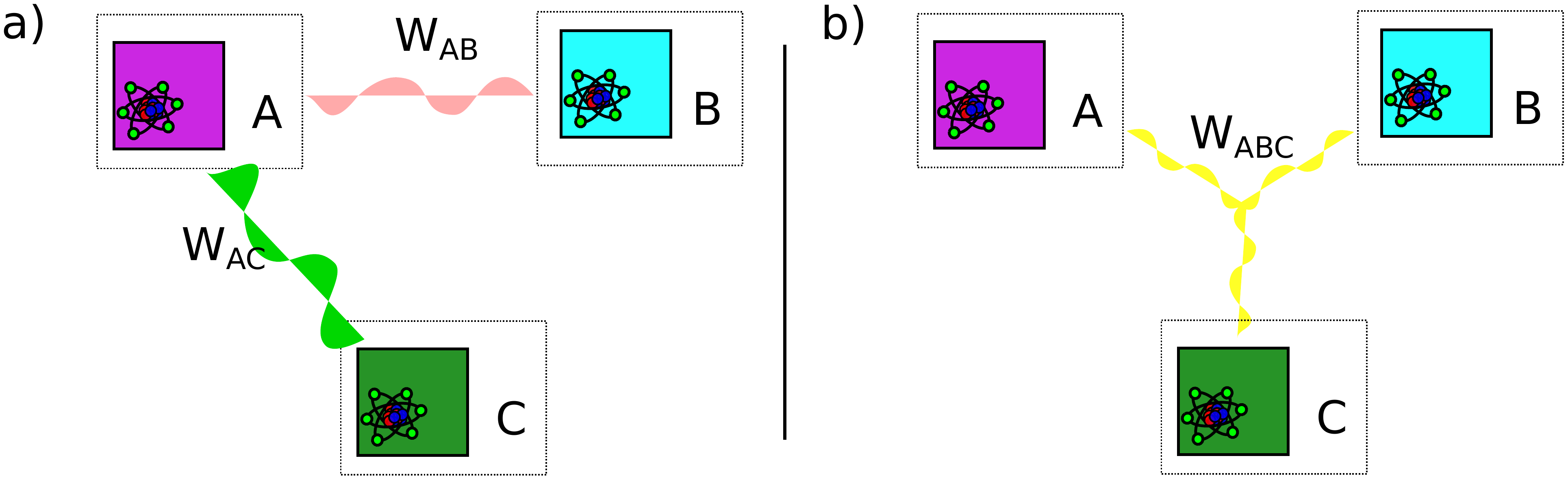}
	\caption{Two wiring schemes that are equivalent in the LOCC paradigm, but inequivalent according to LOP. For LOCC, i.e. for incoherent states of the wiring, every operation can be performed by acting over the extremes of each wire, without a direct interaction among wires. On the other hand, if wires are quantum, direct interaction among wires enlarge the set of possible inequivalent operations. As a result, the initial coherence necessary to prepare a multipartite entangled state depends on the wiring topology.}
\end{figure}
In general we get that following the rules explained above, the possible generalization of $N$-partite LOCC are exactly given by the theories 
\begin{align*}
	LOP(({Q}^{\sigma(1)}\ldots {Q}^{\sigma(f_1)}) \CI {W}^{1} \IC ({Q}^{\sigma(f_1 +1)}\ldots {Q}^{\sigma(f_2)}) \\
	 \ldots \CI {W}^{L}\IC({Q}^{\sigma(f_L +1)}\ldots {Q}^{\sigma(N)})),
\end{align*}
where $\sigma$ is a permutation of the indexes denoting the local quantum systems. Of course one could also look at the union of all these theories, which has the advantage of providing a completely unified view, while having the disadvantage of being excessively complicated. 
What all these theories have in common is that they are generalizations of multipartite LOCC on the quantum side, that reduce to it if one starts with an incoherent state on ${W}$, which is made precise in the following remark.
\begin{rem} \label{rem:generalization}
	$\forall \rho_{W}=\sum_i p_i \ketbra{i}{i}_{W}$,
	\begin{align*}
	1. \;&\tr_{W} (\Lambda \circ (\rho_{W}\otimes\id_Q))\in LOCC({Q}^1,\ldots,{Q}^n) \\
	&\forall \Lambda \in LOP({Q}^{i(1)} \CI {W}^{j(1)} \IC {Q}^{i(2)} \CI {W}^{j(2) }\IC \ldots),
	\\
	%also valid for GOIA
	2. \;&\Lambda\otimes\id_{W} \in LOP({Q}^{i(1)} \CI {W}^{j(1)} \IC {Q}^{i(2)} \CI {W}^{j(2) }\IC \ldots)\\
	&\forall \Lambda \in LOCC({Q}^1,\ldots,{Q}^n),
	\end{align*}
	with $i$ and $j$ denoting permutations on the index sets.
\end{rem}
In the bipartite case, this means that if the initial state of the wire is incoherent and the wire is in a product states with both parties, tracing out the wire at the end reduces the theory to LOCC. The reason is that the state of the wire can be copied and stored in a local register on both sides. Every operation above can then just be obtained by local operations and broadcasting of classical information, by updating the two local registers (just do the same permutations and phases on both registers, copy the information of measurements to both registers and trace out the corresponding part of a register on Bob's side if Alice's has been used in a quantum operation). We make this seemingly obvious statement precise, because intuition sometimes might be misleading; for instance it is possible to produce entanglement by sending a quantum particle back and forth that is never entangled with either of the two parties~\cite{Cubitt2003}. While in entanglement theory this is surprising, since one can entangle two parties without using entanglement to do so, with the present approach the same statement is very intuitive, as entanglement is simply not the only resource present, coherence and basis dependent discord are resources as well. In the present approach without having resources one cannot entangle two parties, while any amount of basis-dependent discord allows to do that (that is because, as noted above, one can distillate coherence from any resource and the statement then just follows from Thm.~\ref{thm:bipartite}). 
The above remark makes it easy to prove that the coherence needed to create entanglement gives a bound on the possible entanglement conversions, namely:
\begin{thm}\label{thm:conv}
	Let $\rho$, $\sigma$ be states on ${W}$. If $\exists \tau_{W}$, a state on ${W}$, s.t. $\tau_{W} \stackrel{LOP({Q}^{i(1)} \CI {W}^{j(1)} \ldots)}{\rightarrow} \rho$, but $\tau_{W} \stackrel{LOP({Q}^{i(1)} \CI {W}^{j(1)} \ldots)}{\nrightarrow} \sigma$, then it follows that
$\rho \stackrel{LOCC({Q}^1,\ldots,{Q}^n)}{\nrightarrow} \sigma$
\end{thm}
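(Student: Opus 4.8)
The plan is to prove the statement by contraposition, reducing it to two facts that are already available: that $LOCC$ embeds into $LOP$, and that reachability under $LOP$ is transitive. Concretely, I would fix the $\tau_W$ furnished by the hypothesis, with $\tau_W \to \rho$ but $\tau_W \nrightarrow \sigma$ under $LOP$, and assume for contradiction that the conclusion fails, i.e.\ that $\rho \to \sigma$ under $LOCC(Q^1,\ldots,Q^n)$.

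The first step is to upgrade the assumed $LOCC$ conversion to an $LOP$ conversion. For this I would invoke Rem.~\ref{rem:generalization}(2): every $\Lambda_{LOCC}\in LOCC(Q^1,\ldots,Q^n)$ satisfies $\Lambda_{LOCC}\otimes\id_{W}\in LOP(Q^{i(1)}\CI W^{j(1)}\IC\ldots)$, so any conversion achievable with classical communication on the quantum systems is also a free $LOP$ operation. Combined with the wire/quantum dictionary of Lem.~\ref{lem:bij} -- which lets one move a wire state onto $Q$ via classical-to-quantum forwarding and return it via ${\cal B}^{-1}$ at no cost inside $LOP$ -- the assumed $LOCC$ conversion $\rho\to\sigma$ becomes an $LOP$ conversion of the corresponding wire states: forward $\rho$ to $Q$, apply $\Lambda_{LOCC}\otimes\id_{W}$, and invert to return the result to $W$ as $\sigma$, each step lying in $LOP$. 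The second step is purely formal: since $LOP$ is by construction the closure under concatenation of its elemental operations, reachability under $LOP$ is transitive, so chaining $\tau_W\to\rho$ with $\rho\to\sigma$ yields $\tau_W\to\sigma$ under $LOP$. This contradicts $\tau_W\nrightarrow\sigma$, and hence $\rho\nrightarrow\sigma$ under $LOCC$, as claimed.

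The contraposition and the transitivity are routine. The step that needs genuine care -- and the main obstacle -- is making the identification between a state $\rho$ on $W$ and its distributed counterpart on $Q$ precise enough that ``$\rho\to\sigma$ under $LOCC(Q)$'' and ``$\rho\to\sigma$ under $LOP$'' really refer to the same conversion. In the bipartite case this is exactly the content of Lem.~\ref{lem:bij} and Thm.~\ref{thm:bipartite}; for the general multipartite wirings $LOP(Q^{i(1)}\CI W^{j(1)}\IC\ldots)$ one must verify that forwarding each segment $W^{j}$ to the quantum systems it connects, and inverting via ${\cal B}^{-1}$, is a free operation for the given topology, so that the embedding of Rem.~\ref{rem:generalization}(2) applies to the wire states $\rho,\sigma$ and not merely to abstract operations on $Q$. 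Once this dictionary is in place the argument closes with no further estimates, which is why the remark makes the theorem immediate.
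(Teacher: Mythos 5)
Your argument is correct and is essentially identical to the paper's own proof: assume for contradiction that $\rho\stackrel{LOCC}{\rightarrow}\sigma$, promote this to an $LOP$ conversion via Rem.~\ref{rem:generalization} (point 2), and chain with $\tau_W\stackrel{LOP}{\rightarrow}\rho$ to contradict $\tau_W\stackrel{LOP}{\nrightarrow}\sigma$. The only difference is that you explicitly flag and handle the wire/quantum identification via Lem.~\ref{lem:bij}, a point the paper's terse proof leaves implicit; this is a sensible clarification but not a different route.
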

Note that for any of these theories (independently of the wiring) the free states are given by:
\begin{align*}
CQ^{(n)}_{{\cal Z}^{\otimes}}=\{\rho\mid\rho=\sum_{m}p_m \ketbra{m}{m}\otimes \rho_m, |m\rangle \in {\cal Z}^{\otimes}, \rho_m\in SEP(n)  \},
\end{align*}
with ${\cal Z}^{\otimes}$ the product basis of the incoherent basis of each wire, and $SEP(n)$ the set of $n-$partite separable states
$$
SEP(n)=\{\rho \mid \rho=\sum_{k} p_k \rho^{(k)}_1 \otimes\ldots\otimes \rho^{(k)}_n\}
$$
This implies that for any of these theories, 
\begin{equation}
\label{eq:ndelta}
R_{{\cal Z}^{\otimes},n}(\rho)=\min_{\sigma \in CQ_{{\cal Z}^{\otimes}}^{(n)}}S(\rho \|\sigma),
\end{equation}
with $S(\rho \|\sigma):=\tr(\rho (\log(\rho) -\log(\sigma)) )$ the relative entropy, 
defines an additive geometric monotone (see e.g.~\cite{MEKP.16}). In general, the evaluation of this quantity is an NP-Complete problem \cite{Huang.14}. However, due to the monotonicity  of the relative entropy, we notice that
\begin{equation}
\label{eq:ndelta2}
R_{{\cal Z}^{\otimes},n}(\rho)\geq \max(R_{{\cal Z}^{\otimes}}({\rm Tr}_{{Q}}\rho) , R_{n}^{\cal E}({\rm Tr}_{{W}}\rho))
\end{equation}
being $R_{{\cal Z}^{\otimes}}({\rm Tr}_{{Q}}\rho)$ the relative entropy of coherence on the wires, and $R_{n}^{\cal E}({\rm Tr}_{{W}}\rho)$
the relative entropy of entanglement \cite{VPRK.97,VP.01,Wei.08}. Due to the additivity of the relative entropy, if either $R_{{\cal Z}^{\otimes}}({\rm Tr}_{{Q}}\rho)=0$ or $R_{n}^{\cal E}({\rm Tr}_{{W}}\rho)=0$, Eq. (\ref{eq:ndelta}) turns into an equality, providing  a lower bound to the amount of initial coherence required to prepare an entangled state among the parties of ${Q}$~\cite{Streltsov2015b,Ma2016}.

As an example of how this perspective can be applied in the multipartite case let's revisit the case of the $\ket{\cal W}$ and of the $\ket{GHZ}$ state~\cite{DVC.00} (here only the results are discussed, the protocols for the conversions can be found in the Appendix). 

The relative entropy of entanglement of $\ket{GHZ_n}=\frac{|0\rangle^{\otimes n}  - |1\rangle^{\otimes n}}{\sqrt{2}}$ is $1$~\cite{VP.01} and indeed one can prepare the $\ket{GHZ_n}$ state by $LOP(W \IC ({Q}^{1},{Q}^2,\ldots,{Q}^n) )$ from $\frac{\ket{0}  +\ket{1}}{\sqrt{2}}$. For $|{\cal W}_n\rangle=\frac{\sum_{k=0}^{n-1}|0\rangle^{\otimes k}|1\rangle |0\rangle^{\otimes n-k-1}}{\sqrt{n}} $ 
the relative entropy of entanglement is $(n-1)\log_2 (n/(n-1))>1 \forall n>2$~\cite{Wei.08}. It is then a simple corollary of Thm.~\ref{thm:conv} that $\ket{GHZ_n}\stackrel{LOCC({Q}^2,\ldots,{Q}^n)}{\nrightarrow} \ket{{\cal W}_n}$. 

The second thing to note is that on any bipartition the $\ket{GHZ_n}$ state is LOCC equivalent to $\ket{GHZ_2}$, while the $\ket{ {\cal W}_3}$ state on any bipartition is LOCC equivalent to $1/\sqrt{3} (\ket{00}+\sqrt{2}\ket{11})$. Indeed one can in all three possible two-wire settings $LOP({Q}^1\CI {W}^1 \IC {Q}^2\CI {W}^2 \IC {Q}^3)$ prepare $\ket{ {\cal W}_3}$ from $1/\sqrt{6} (\ket{0}+\sqrt{2}\ket{1})_{W^1}\otimes(\ket{0}+\ket{1})_{W^2}$, while the bipartite entanglement one can produce on the bipartition $Q^1,(Q^2\otimes Q^3)$ is not enough to prepare $\ket{GHZ_3}$: as on any bipartition one needs to prepare a fully entangled qubit and this is equivalent to a maximally coherent qubit, the state with minimal coherence to prepare $\ket{GHZ_3}$ is given by $1/2 (\ket{0}+\ket{1})_{W^1}\otimes(\ket{0}+\ket{1})_{W^2}$, which is strictly more coherent on $W_1$ than $1/\sqrt{6} (\ket{0}+\sqrt{2}\ket{1})_{W^1}\otimes(\ket{0}+\ket{1})_{W^2}$. Again as a corollary of Thm.~\ref{thm:conv}, we have that $\ket{{\cal W}_3}\stackrel{LOCC(Q^1,Q^2,Q^3)}{\nrightarrow} \ket{GHZ_3}$. A strong indication that the resources in the different types of wirings correspond to different types of entanglement.
\section{Conclusion}
Recently there has been considerable interest in the connection between coherence, discord and entanglement~\cite{Streltsov2015b,Yao2015,Chitambar2016a,Chitambar2016c,Streltsov2016a,Hu2016,Hu2016a,Hu2017,Ma2016,Yadin2016,Zhu2017,Radhakrishnan2016,Streltsov2017,Girolami2017,Bu2017}. But the connection was made on the level of quantifiers and measures. The current paper shows that by generalizing the fundamental theory of entanglement -- LOCC, one obtains a connection between coherence, discord and entanglement that is even deeper, namely on the level of the operations themselves. In this sense the current approach defines a theory that lies at the root of entanglement and it seems natural to assume that it will be useful to assess the interplay between different resources of quantumness in complex settings, as is needed for instance if one wants to quantify the resources needed for quantum  algorithms (see e.g.~\cite{Shahandeh2017}). Moreover, while we exemplified here that the present approach can give a clear explanation of the difference between some very basic forms of multipartite entanglement by exemplifying that different forms are optimal in different settings, it remains an interesting open question whether it yields such an explicative power also in more general cases and how it connects to known structures in entanglement theory, such as explained in e.g.~\cite{Walter2016,Lamata2007,Backens2017,Vicente2012,Regula2014}. 

Furthermore, there is an ongoing discussion over what is the ``right'' theory of (speakable) coherence~\cite{Chitambar2016b,Yadin2016,Marvian2016a,Streltsov2017,Vicente2017,Biswas2017}. While we do not claim to close this discussion (nor actually that it can be conclusively closed, as it always will depend on the setting one is interested in), we note that the effective resource theory of coherence emerging from our approach is to our knowledge the first which is built on operationally meaningful elemental operations while still having coherence as a resource in the sense of the name, meaning that enough ancillary coherence can completely lift the restrictions imposed by the resource theory. 

{\bf \em Acknowledgments ---}
The work of D.E., T.T. and M.B.P. is supported by the ERC Synergy Grant BioQ, the EU project QUCHIP and the US Army Research Office.  J. M. Matera is supported by CONICET.

\clearpage
\appendix
\section{Proofs on the structure of LOP}
We start with the proof of Lem~\ref{lem:bij}, introducing a bijection between the wire and maximally correlated states on the quantum system with the wire within $LOP$.
\begin{lem*}[\ref{lem:bij}]
		
	The operator $B:W\rightarrow W\otimes Q$, $B=\sum_i\ketbra{i}{i}_W\otimes \ket{i}_{Q}$ defines an injection and defining ${\cal B}[\rho]=(B\rho B\mdag)$,  ${\cal B}\in LOP$. Conversely there exists a map ${\cal B}^{-1}:W\otimes Q \rightarrow W$, with  ${\cal B}^{-1} \circ {\cal B} = \id_W$ and ${\cal B}^{-1} \in LOP$.
\end{lem*}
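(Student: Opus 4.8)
The plan is to check injectivity by a one-line isometry computation and then to exhibit explicit LOP protocols for $\mathcal B$ and $\mathcal B^{-1}$ assembled from the four elemental operations. Injectivity is immediate: since $B\mdag=\sum_i\ketbra{i}{i}_W\otimes\bra{i}_Q$, the $Q$-inner products collapse the double sum and $B\mdag B=\sum_{i}\ketbra{i}{i}_W=\id_W$, so $B$ is an isometry and hence injective (this also makes $\mathcal B$ trace preserving). The content of the lemma is therefore that both $\mathcal B$ and its inverse lie in LOP, and I would treat these two directions separately.

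To place $\mathcal B$ in LOP I would realise the coherent copy-into-$Q$ in two free steps. First append a fresh wire ancilla $W_2$ prepared in the incoherent state $\ket{0}$ (a free incoherent preparation) and apply the permutation $(i,k)\mapsto(i,(i+k)\bmod d)$ of the product incoherent basis of $W_1\otimes W_2$, which is elemental operation~1; on $\ket{0}_{W_2}$ this acts as $\ket{i}_{W_1}\ket{0}_{W_2}\mapsto\ket{i}_{W_1}\ket{i}_{W_2}$ and turns $\rho$ into the maximally correlated state $\sum_{ij}\rho_{ij}\ketbra{i}{j}_{W_1}\otimes\ketbra{i}{j}_{W_2}$. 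Then forward $W_2$ to $Q$ using elemental operation~4, $\sum_j\bra{j}_{W_2}\otimes\ket{j}_{Q}$, which relabels the $W_2$ registers as $Q$ registers and gives exactly $B\rho B\mdag$. Hence $\mathcal B\in LOP$.

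The genuinely delicate direction is $\mathcal B^{-1}$, and this is where I expect the main obstacle: the theory is deliberately asymmetric, allowing coherent forwarding $W\to Q$ but not $Q\to W$, so one cannot simply run the previous protocol backwards. The key observation is that on the image of $\mathcal B$ the coherence is locked in the $W$--$Q$ correlation and can be returned to the wire by measuring $Q$ in a basis complementary to $\mathcal Z$ and then undoing the induced phase on the wire. Concretely, with $\omega=e^{2\pi\im/d}$ and $\ket{\tilde k}=\tfrac1{\sqrt d}\sum_m\omega^{mk}\ket{m}$, I would perform the destructive Fourier-basis measurement of $Q$ with Kraus operators $\ket{k}_{W_a}\otimes\bra{\tilde k}_{Q}$ (elemental operation~3), and then, controlled by the recorded classical outcome $k$ stored in $W_a$, apply the diagonal phase $\sum_m\omega^{mk}\ketbra{m}{m}_W$ on the wire (elemental operation~2), finally discarding the incoherent register $W_a$. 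Acting on $\sigma=\sum_{ij}\rho_{ij}\ketbra{i}{j}_W\otimes\ketbra{i}{j}_Q$, branch $k$ of the measurement produces $\tfrac1d\sum_{ij}\rho_{ij}\,\omega^{(j-i)k}\ketbra{i}{j}_W$, and the phase correction multiplies the $(i,j)$ term by $\omega^{(i-j)k}$, so each of the $d$ equiprobable branches collapses to $\tfrac1d\rho_W$; summing over $k$ yields $\rho_W$ deterministically, giving $\mathcal B^{-1}\in LOP$ and $\mathcal B^{-1}\circ\mathcal B=\id_W$. The crux of the whole argument is thus this complementary-basis-measurement-plus-wire-phase-correction step (a phase-kickback/one-bit-teleportation trick), which is precisely what makes $\mathcal B$ reversible inside a theory that forbids coherent $Q\to W$ transfer.
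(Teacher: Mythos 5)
Your proposal is correct and follows essentially the same route as the paper: $\mathcal B$ is realised by appending an incoherent wire ancilla, applying the CNOT-like permutation $\ket{i}\ket{j}\mapsto\ket{i}\ket{i+j}$, and forwarding the ancilla to $Q$; and $\mathcal B^{-1}$ is realised by a Fourier-basis measurement of $Q$ followed by the outcome-dependent diagonal phase correction on $W$, exactly as in the appendix. The only cosmetic differences are that you add the explicit isometry check $B\mdag B=\id_W$ and record the Fourier outcome in a wire register before the controlled phase, whereas the paper post-selects on the outcome directly; both are covered by the same elemental operations.
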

\begin{proof}
	The operator $B=\sum_i \ketbra{i}{i}_W\otimes\ket{i}_{Q}$ can be implemented by a sequence of maps that will be described in the following in terms of their Kraus operators. To this end we start with $W=W_1$ and $Q=1$ and apply 
	\begin{enumerate}
		\item $\ket{0}_{W_2} $
		\item $\sum_{i,j}\ketbra{i}{i}_{W_1}\otimes\ketbra{i+j}{j}_{W_2} $
		\item $\sum_{i,j}\bra{j}_{W_2}\otimes \ket{j}_{Q} $,
	\end{enumerate}
	where the first map is a single outcome measurement of $Q$ storing the outcome in an ancilla $W_2$ (observed quantum operation), the second is a permutation on $W$, and the third is the forwarding of the system $W_2$ to $Q$. Identifying $W=W_1$, we get the desired operation.
	
	For the converse, we first apply a measurement in the Fourier basis, followed by a correction of the phase on $W$.
	\begin{enumerate}
		\item	$FT^k=\bra{\hat{k}}_{Q}=\sum_{j=1}^{d}  \frac{e^{2 \pi \im k j /d} }{\sqrt{d}} \bra{j}_Q$
		\item  $D(k)= \sum_{i=1}^{d} \ketbra{i}{i}_W   e^{-2 \pi \im k i / d} $,
	\end{enumerate}
	resulting in the action ${B^{-1}}^k=\sum_{i=1}^{d} \sum_{j=1}^{d} \frac{e^{2 \pi \im k (j-i) / d} }{\sqrt{d}} \ketbra{i}{i}_W \otimes \bra{j}_Q$.
	Obviously the map defined by these Kraus operators is an element of $LOP$ and a left-inverse of ${\cal B}$, as required.
\end{proof}
Note that the basis on the quantum side can be chosen arbitrary in the above lemma. We continue with the proof of Prop~\ref{prop:phases}.

\begin{prop*}[\ref{prop:phases}]
	Any operation in LOP can be done with arbitrarily high probability of success by a combination of permutations, observed quantum operations and classical to quantum forwarding.
\end{prop*}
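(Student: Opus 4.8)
The plan is to notice that, among the four elemental operations, only the phases (item 2) lie outside the set $\{$permutations, observed quantum operations, classical-to-quantum forwarding$\}$, so it suffices to show that a single diagonal phase unitary $U_\phi=\sum_j e^{\im\phi(j)}\ketbra{j}{j}_W$ on the wire can be realised, with success probability arbitrarily close to $1$, using only those three operations. Since any LOP protocol is a finite concatenation, it contains only finitely many phase steps, and the general claim then follows by composing finitely many such approximations and multiplying their (individually near-unit) success probabilities.

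For a single attempt I would first apply the map $\mathcal{B}$ of Lem.~\ref{lem:bij} to correlate the wire with a fresh quantum ancilla, sending $\sum_j c_j\ket{j}_W$ to $\sum_j c_j\ket{j}_W\otimes\ket{j}_{Q}$; crucially, the construction of $\mathcal{B}$ given in the proof above uses only a single-outcome observed operation, a permutation and a forwarding, and in particular \emph{no} phase. I would then apply the desired phase on the quantum side as the observed quantum operation with single Kraus operator $F=\diag(e^{\im\phi(j)})$ on $Q$ (a unitary, hence a legitimate generalized measurement), producing $\sum_j c_j e^{\im\phi(j)}\ket{j}_W\otimes\ket{j}_{Q}$. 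Finally I would measure $Q$ in the Fourier basis --- again an observed quantum operation, with Kraus operators $\bra{\hat{k}}_Q=\sum_m\frac{e^{2\pi\im km/d}}{\sqrt d}\bra{m}_Q$ --- which on outcome $k$ leaves the wire in $\frac{1}{\sqrt d}\sum_j c_j e^{\im\phi(j)}e^{2\pi\im kj/d}\ket{j}_W$. The forbidden phases have thereby been pushed onto the quantum side, where they are free, and no wire phase was used.

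The decisive point, and the place where the obstacle lies, is the decoupling of the ancilla: outcome $k=0$ yields exactly $U_\phi$, but a nonzero outcome leaves the residual wire phase $\diag(e^{2\pi\im kj/d})$, and correcting it directly would again call for a phase gate. The resolution is repeat-until-success. Since each Fourier outcome $k\in\{0,\dots,d-1\}$ occurs with equal probability $1/d$ independently of the coefficients $c_j$, on a nonzero outcome $k_t$ the residual phase is completely known, and I would cancel it by taking the target phase of the next round to be $\diag(e^{-2\pi\im k_t j/d})$. A direct computation shows that after round $t$ the net phase applied to the wire is $U_\phi$ times the single known residual $\diag(e^{2\pi\im k_t j/d})$, so every round either completes the target (outcome $0$) or reduces to correcting one known residual phase, independently of the history; the target of each round is chosen from the previous outcome, so this remains a legitimate outcome-dependent LOP protocol.

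Iterating, the probability of not having succeeded after $n$ rounds is $(1-1/d)^n$, which tends to $0$; truncating after $n$ rounds and post-selecting on success therefore implements $U_\phi$ with probability $1-(1-1/d)^n$, arbitrarily close to $1$. The only item requiring careful verification is precisely this bookkeeping --- that the residual phases compose as claimed across rounds and that the per-round success probability $1/d$ is independent of the input state --- which is what makes the geometric convergence, and hence the ``arbitrarily high probability of success'', rigorous.
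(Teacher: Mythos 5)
Your proof is correct and takes essentially the same route as the paper's: use $\mathcal{B}$ from Lem.~\ref{lem:bij} to correlate the wire with an ancilla, apply the phase on the quantum side, Fourier-measure to decouple, and repeat-until-success with an outcome-dependent correction, giving success probability $1-(1-1/d)^n$ independently of the input state. Your bookkeeping for the next-round target (the bare inverse residual $\diag(e^{-2\pi\im k_t j/d})$, not composed with $U_\phi$) is the correct recursion; the paper's step 4 literally writes the redefinition as $U\circ\sum_j e^{-2\pi\im kj/d}\ketbra{j}{j}$, which taken at face value would apply $U_\phi$ twice, so your version actually tightens that step.
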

\begin{proof}
Let $W$ have dimension $d$. The only thing to show is that indeed one can change the phases on $W$ in the above framework. 
	 Let $U=\sum_j e^{\im \phi(j) } \ketbra{j}{j}_W$ be the wanted phase shift. We start by identifying $Q=Q_1$.
	 The protocol is:
	\begin{enumerate}
		\item $B=\sum_i \ketbra{i}{i}_W\otimes\ket{i}_{Q_2}$ as in Lem~\ref{lem:bij}
		\item $U_{Q_2}=\sum_j e^{\im \phi(j) } \ketbra{j}{j}_{Q_2}$
		\item $FT^k=\bra{\hat{k}}_{Q_2}=\sum_{j=1}^{d}  \frac{e^{2 \pi \im k j /d} }{\sqrt{d}} \bra{j}_{Q_2}$, ($1\leq k\leq d$)
		\item If $k=d$: stop. 
		
			Else: redefine $U=U\circ \sum_j e^{-2 \pi \im k j / d} \ketbra{j}{j}$ and restart with item 1.
	\end{enumerate}
 The probability of success in each round is given by $1/d$, independently of the initial state. After $M$ iterations we therefore have a probability of success given by $\sum_{i=1}^M 1/d (1-1/d)^{(i-1)}=1-\left(1-\frac{1}{d}\right)^M\rightarrow 1$ for $M\rightarrow\infty$.
\end{proof}
The next two pages are devoted to the proof of the closed form of $LOP$ operations, which simplifies its use for both theoretical as well as practical purposes, for instance for the comparison of $LOP$ to other resource theories.
\begin{prop*}[\ref{prop:form}]
	Let $\Lambda$ be a CPTP map acting on $W\otimes Q$. $\Lambda$ is an LOP operation exactly if it can be written as a finite sequence of maps $\Lambda=\sum_{\vec{\alpha}_{N}}\Lambda_{N}^{\alpha_N}(\vec{\alpha}_{N-1}) \circ\ldots \circ\Lambda_1^{\alpha_1}$, (where $N \leq \dim(W)$ is the length of the protocol, possibly depending on the previous outcomes $\vec{\alpha}_{t}=(\alpha_1,\ldots,\alpha_{t})$, 	$\vec{\alpha}_{0}=\vec{\alpha}_{-1}=0$) with CPTP maps $\Lambda_t(\vec{\alpha}_{t-1})=\sum_{\alpha_{t}}\Lambda_t^{\alpha_t}(\vec{\alpha}_{t-1})$ 
having a Kraus decomposition $ \Lambda_t(\vec{\alpha}_{t-1})[\rho]=\sum_{\alpha_{t}} K_t^{\alpha_{t}}(\vec{\alpha}_{t-1})\mdag \rho K_t^{\alpha_{t}}(\vec{\alpha}_{t-1})$ of the form
\begin{align}\label{eq:forma}
K_t^{\alpha_{t}}(\vec{\alpha}_{t-1})
=
\sum_{i=1}^{r(\vec{\alpha}_{t-2})}  \ketbra{\sigma_{\vec{\alpha}_{t}} \circ \min[r(\vec{\alpha}_{t-1}),i]}{i}_W
\otimes E_t^{ \alpha_{t}}\left(\vec{\alpha}_{t-1},i\right),
\end{align}
with $\sigma_{\vec{\alpha}_{t}}$ being an injective map to the positive labels of a new incoherent basis (see footnote~\cite{Note1}) and $E_t^{ \alpha_{t}}\left(\vec{\alpha}_{t-1},i\right)$ being an arbitrary operator acting on $Q$, potentially depending on previous outcomes and controlled by the populations of the wire and $1\leq r(\vec{\alpha}_{t})<r(\vec{\alpha}_{t-1})$ for any $t\in \{1,\ldots, N-1\}$ and $r(0)=\dim(W)$.
\end{prop*}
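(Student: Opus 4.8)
The plan is to prove both implications, treating realizability (the ``if'' direction) as a constructive warm-up and devoting the bulk of the effort to the normal-form (``only if'') direction.

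For the ``if'' direction I would show that a single map $\Lambda_t$ whose Kraus operators have the form \eqref{eq:forma} lies in $LOP$, after which the whole concatenation is in $LOP$ by definition. Each such Kraus operator factorises conceptually into three free pieces: a population-controlled operation $E_t^{\alpha_t}(\vec{\alpha}_{t-1},i)$ on $Q$, a collapse of the terminal segment of labels $\{i\ge r(\vec{\alpha}_{t-1})\}$ onto a single label, and an injective relabelling $\sigma_{\vec{\alpha}_t}$ into a fresh incoherent basis. The controlled operation is implemented by copying the wire population onto $Q$ with ${\cal B}$ of Lem.~\ref{lem:bij}, applying the (free) controlled operation on the $Q$ side, and undoing the copy with the left inverse ${\cal B}^{-1}$; the terminal collapse is realised by classical-to-quantum forwarding of the appropriate wire subsystem together with post-selection, so that the merged index is carried over to $Q$ and absorbed into $E_t^{\alpha_t}$; and the relabelling is a permutation, preceded if necessary by an observed quantum operation that appends an ancilla to enlarge $W$. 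Assembling these pieces and checking that the composite reproduces \eqref{eq:forma} is routine.

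For the ``only if'' direction I would first invoke Prop.~\ref{prop:phases} to discard the phase gates, so that every protocol is built from permutations, observed quantum operations and classical-to-quantum forwarding. The first observation is that each of these elemental Kraus operators sends an incoherent wire basis state to a single incoherent wire basis state tensored with an operator on $Q$; hence, for any fixed outcome string, the composite Kraus operator of an $LOP$ protocol already has the incoherent-quantum form $K=\sum_i\ketbra{f(i)}{i}_W\otimes E(i)_Q$ for some label function $f$. The second ingredient is to track the dimension $r$ of the maximal wire support along each branch: it is invariant under permutations, can only grow through the ancilla appended by an observed quantum operation, and strictly drops only when forwarding collapses a tensor factor of the wire. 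I would then regroup consecutive elemental operations into rounds $\Lambda_t$ so that each round ends exactly at a forwarding-induced drop, and argue that, after a free relabelling of the incoherent basis, the collapse within a round always merges a \emph{terminal} segment of labels, i.e. $f$ takes the form $\sigma_{\vec{\alpha}_t}\circ\min[r(\vec{\alpha}_{t-1}),\cdot]$ of \eqref{eq:forma}, with all accumulated $Q$-side operations folded into $E_t^{\alpha_t}$. Since each round then satisfies $r(\vec{\alpha}_t)<r(\vec{\alpha}_{t-1})$ while $r$ starts at $\dim(W)$ and never falls below $1$, the number of rounds on any branch is bounded by $\dim(W)$, giving $N\le\dim(W)$ and the tree structure of Fig.~\ref{fig:tree3}.

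The hard part will be the regrouping step: showing that the many-to-one collapse produced by an arbitrary interleaving of ancilla-adding measurements and forwardings can be arranged \emph{simultaneously}, across all outcome branches $\alpha_t$ of one round, as a single terminal-segment merge, consistent with the global constraint that $\sum_{\alpha_t}\Lambda_t^{\alpha_t}(\vec{\alpha}_{t-1})$ be CPTP. I expect the cleanest route is an induction on $r$: peel off one forwarding-induced dimension drop, use a permutation to send the collapsed labels to the top of the basis so that the merge becomes a $\min$, and apply the inductive hypothesis to the residual wire of strictly smaller support, absorbing any intervening relabellings and $Q$-operations into $\sigma_{\vec{\alpha}_t}$ and $E_t^{\alpha_t}$ respectively.
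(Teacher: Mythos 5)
Your overall architecture mirrors the paper's: the ``if'' direction is the same constructive argument (copy the wire index to $Q$ via Lem.~\ref{lem:bij}, apply the controlled operation there, undo the copy with ${\cal B}^{-1}$, and realize the merge by a permutation onto an enlarged wire whose ancillary factor is then forwarded), and the ``only if'' direction likewise proceeds by noting that every branch of an elemental protocol has the incoherent-quantum form $\sum_i\ketbra{f(i)}{i}_W\otimes E(i)_Q$ and then reorganizing the label collapses into terminal-segment merges with strictly decreasing support.

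The gap is in the reorganization. Your round boundaries --- ``each round ends exactly at a forwarding-induced drop'' --- cannot yield Kraus operators of the form \eqref{eq:forma}: forwarding a $k$-dimensional subsystem $W_s$ of a $km$-dimensional wire implements a label function whose $m$ fibers each contain $k$ elements, whereas $\sigma_{\vec\alpha_t}\circ\min[r(\vec\alpha_{t-1}),\cdot]$ has exactly one non-singleton fiber (the terminal segment). A single forwarding must therefore be serialized into $m$ consecutive rounds, each collapsing one fiber after a permutation brings it to the top --- this is precisely the paper's Lem.~\ref{lem:inj_ops} --- and the nontrivial content is that each intermediate round is separately trace preserving. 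That check uses the CPTP condition of the original map restricted to each fiber, $\sum_\alpha E^\alpha(i)\mdag E^\alpha(j)=\delta_{ij}\,\id$ for $i,j$ in the same fiber, and your closing induction sketch does not supply it; ``consistent with the global constraint that $\sum_{\alpha_t}\Lambda_t^{\alpha_t}$ be CPTP'' is exactly the statement that needs proof. Note that since $IQO\neq LOP$ for $\dim(W)\ge 3$ (Prop.~\ref{prop:gennoninj}), the branch-wise incoherent-quantum form of the composite Kraus operators alone cannot imply the normal form, so this round-by-round CPTP verification is where the real work sits. Two smaller points: discarding phases via Prop.~\ref{prop:phases} only gives an implementation with arbitrarily high success probability, not an exact one, so it cannot be used in an exact normal-form proof (it is also unnecessary, as phases are already of the form \eqref{eq:forma} with $E=e^{\im\phi(i)}\id_Q$); and the maximal wire support does not grow under observed quantum operations, because the appended ancilla is in a fixed incoherent state on each branch --- this, rather than a claim that $r$ ``can only grow through the ancilla,'' is what makes $r$ monotone and gives $N\le\dim(W)$.
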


One of the non-trivial results that we need to show in the proof of prop~\ref{prop:form} is that classical to quantum forward can indeed be decomposed as described in the proposition. In the following lemma we slightly generalize this statement, as it does not significantly complicate the proof and the statement might be of independent interest.
\begin{lem}\label{lem:inj_ops}
	Let $\Lambda=\sum_{\alpha} F^{\alpha}\;\cdot \;{F^{\alpha}}\mdag$ be a CPTP map acting on both $W$ and $Q$ with 
	\begin{equation}\label{eq:simpleform}
	F^{\alpha}=\sum_{i=1}^d \ketbra{f(i)}{i}_W\otimes E^{\alpha}_Q(i),
	\end{equation}
	where $f%:\{1,...,d\}\rightarrow\{1,...,d\}
	$ maps indices to indices.
	Then $\Lambda$ admits a decomposition as in Prop.~\ref{prop:form}.
\end{lem}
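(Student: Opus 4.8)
The plan is to exploit the constraints that CPTP-ness of $\Lambda$ places on the $Q$-side operators and then to strip off one identification made by $f$ at a time. First I would record the consequence of $\sum_\alpha F^{\alpha\mdag}F^\alpha=\id$: multiplying out the Kraus operators, $F^{\alpha\mdag}F^\alpha=\sum_{i,j:\,f(i)=f(j)}\ketbra{i}{j}_W\otimes E^\alpha(i)\mdag E^\alpha(j)$, and summing over $\alpha$ forces $\sum_\alpha E^\alpha(i)\mdag E^\alpha(j)=\delta_{ij}\,\id_Q$ whenever $f(i)=f(j)$, with \emph{no} relation across distinct fibres $f^{-1}(y)$. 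These within-fibre orthogonality relations are precisely the completeness relations that make an elemental collapse map of the form \eqref{eq:forma} trace preserving, so they are the engine of the whole argument.

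The induction would be on the deficiency $\delta(f):=d-|\mathrm{im}(f)|$, the number of identifications $f$ makes. If $\delta(f)=0$ then $f$ is injective, each $F^\alpha$ already has the form \eqref{eq:forma} with $\sigma=f$ and no collapse ($r=d$), and $\Lambda$ is a single elemental map. If $\delta(f)>0$, I would pick a value $y$ with $|f^{-1}(y)|\ge 2$ and use a free permutation of $\cal Z$ to place $f^{-1}(y)$ as a contiguous tail $\{r,\dots,d\}$; then I peel off one elemental step $\Lambda_1$ whose Kraus operators collapse this tail onto a single basis vector (injective on the head, $\min[r,i]$ on the tail) while carrying the appropriate $E^\alpha(i)$, its trace-preservation being exactly the within-fibre relation for that fibre. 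The residual map $\Lambda'$ lives on the strictly smaller wire (support $r<d$), is again of the simple form \eqref{eq:simpleform} with $\delta$ reduced by one, and the inductive hypothesis applies. Since the support drops at every step, the total length is bounded by $\dim(W)$, matching Prop.~\ref{prop:form}.

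The hard part is to arrange the peeling so that the concatenation reproduces $\Lambda$ \emph{exactly}, including its off-diagonal, cross-fibre action. The naive recipe of collapsing one fibre per step, attaching that fibre's operators to its own step, fails: independent outcomes in different steps decorrelate and replace the correct cross term $\sum_\alpha E^\alpha(i)\,\rho\,E^\alpha(i')\mdag$ (for $f(i)\ne f(i')$) by the factorised $\bigl(\sum_\alpha E^\alpha(i)\bigr)\rho\bigl(\sum_\alpha E^\alpha(i')\bigr)\mdag$. The remedy I would adopt is to generate the single outcome $\alpha$ only once: a first elemental step applies $E^\alpha(i)$ controlled by the wire population and records $\alpha$ but performs no collapse, so that $F^\alpha=P_f\,M^\alpha$ with $M^\alpha=\sum_i\ketbra{i}{i}_W\otimes E^\alpha(i)$ and $P_f=\sum_i\ketbra{f(i)}{i}_W\otimes\id_Q$ a purely classical relabelling of the wire. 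What then remains is to realise $P_f$ by the tail-collapse steps above; although $P_f$ is not trace preserving in isolation, the within-fibre relations guarantee that $M^\alpha$ has already suppressed every within-fibre wire coherence, which is what makes the relabelling admissible on the relevant states. Verifying that the fused measurement-and-collapse steps are individually CPTP of the form \eqref{eq:forma} and genuinely compose back to $\Lambda$, together with the book-keeping of the injective relabellings $\sigma_{\vec\alpha_t}$ under the change of basis of footnote~\cite{Note1}, is where essentially all the work lies and is the step I expect to be the main obstacle.
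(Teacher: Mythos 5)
Your reduction of the cross-fibre decorrelation problem to the factorisation $F^{\alpha}=P_f M^{\alpha}$ introduces a within-fibre problem that the argument does not survive. The claim that ``the within-fibre relations guarantee that $M^{\alpha}$ has already suppressed every within-fibre wire coherence'' is false: $\sum_{\alpha}E^{\alpha}(i)\mdag E^{\alpha}(j)=0$ for $i\neq j$ in a fibre does not imply $E^{\alpha}(i)\rho E^{\alpha}(j)\mdag=0$ branchwise, nor even $\sum_{\alpha}E^{\alpha}(i)\rho E^{\alpha}(j)\mdag=0$. Take $E^{1}(1)=\ketbra{0}{0}$, $E^{2}(1)=\ketbra{1}{1}$, $E^{1}(2)=\ketbra{1}{0}$, $E^{2}(2)=\ketbra{0}{1}$ on a qubit $Q$: all the CPTP relations hold, yet $\sum_{\alpha}E^{\alpha}(1)\rho E^{\alpha}(2)\mdag=\rho_{00}\ketbra{0}{1}+\rho_{11}\ketbra{1}{0}\neq 0$. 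So after $M^{\alpha}$ the state still carries wire coherence inside each fibre, and the relabelling $P_f$ has to act on it; but $P_f$ is a single non-isometric Kraus operator, and any CPTP completion of the fibre collapse whose Kraus operators carry only $\id_Q$ necessarily dephases the fibre, $\ketbra{i}{j}_W\mapsto\delta_{ij}\ketbra{y}{y}_W$. Your protocol therefore outputs $\sum_{i\in W_k}E^{\alpha}(i)\rho_{ii}E^{\alpha}(i)\mdag$ where $\Lambda$ requires the coherent sum $\sum_{i,j\in W_k}E^{\alpha}(i)\rho_{ij}E^{\alpha}(j)\mdag$ (here $\rho_{ij}=\bra{i}_W\rho\ket{j}_W$). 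This is not a corner case: classical-to-quantum forwarding is itself of the form \eqref{eq:simpleform} with $f$ constant, and its entire purpose is to transfer within-fibre wire coherence coherently to $Q$; your factorised protocol would destroy exactly that, and this forwarding operation is the main instance the lemma is invoked for.

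The repair -- and what the paper's proof actually does -- is to keep the collapse and the conditional action on $Q$ inside the \emph{same} Kraus operator: each peeling step has Kraus operators of the form $\sum_i\ketbra{\sigma_{\oplus}^{r}\circ\min[r,i]}{i}_W\otimes E^{\alpha_{t}}(i)\otimes\ket{\alpha_{t}}_{Q_2}$ on the collapsed indices (and $\id_Q$ on the rest), whose trace-preservation condition is precisely your within-fibre relation and which merges the fibre coherently. The cross-fibre decorrelation you correctly identified is then cured not by generating $\alpha$ once, but by writing every step's outcome into one common register $Q_2$ and tracing it out at the very end: $\tr_{Q_2}\ketbra{\alpha_{t}}{\alpha_{t'}}=\delta_{\alpha_{t}\alpha_{t'}}$ re-correlates the independent step outcomes back into the single shared index $\alpha$ of the target Kraus operators. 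Your outer scaffolding (group the fibres contiguously by a free permutation, peel one fibre per step, bound the protocol length by the strict drop of the wire support) matches the paper's; it is the factorisation $F^{\alpha}=P_f M^{\alpha}$ and its justification that must be abandoned.
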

\begin{proof}
	 For simplicity, we will only prove that there is a finite protocol of the given form. That the length of the protocol is bounded by the dimension of $W$ will be proven independently later in the proof of Prop~\ref{prop:form}.
	The function $f$ in Eq.~\ref{eq:simpleform} can map different members of the incoherent basis to the same output state. The main idea of the proof is to first reorder the incoherent basis of the wire (using a bijection) in such a way, that we can then use a sequence of maps with Kraus operators of the form Eq.~\ref{eq:form} to implement the same map and the same subselection possibilities as with the given operators~\ref{eq:simpleform}. The main trick is to iteratively collapse the subspaces belonging to the pre-image of $\ket{f(i)}$. 
	
	Let us begin with the case that the image of $f(i)$ is $\{1,\ldots s\}$ for a $s\in \mathbb{N} \leq \dim(W)$. Define $W_k=\{i\mid f(i)=k\}$ and a permutation $\sigma_1$ that maps the elements of $W_k$ to $\{1+\sum_{j=1}^{k-1}|W_j|,\ldots,\sum_{j=1}^{k}|W_j|\}$. This implements the announced reordering of the incoherent basis of the wire and corresponds to a unitary $\Lambda_1$ given by
 $K_1=\ketbra{\sigma_1(i)}{i}_W$. Next we define $r_t=t+\sum_{j=1}^{s-t} |W_j|$, e.g. $r_0=\dim(W)$, $r_1=1+\sum_{j=1}^{s-1}|W_j|$, $r_{s-1}=s-1+|W_1|$ and $r_{s}=s$. With this, we then define (for $t\in \{2,\ldots s+1\}$)

	\begin{equation}
		K_t^{\alpha_{t}}=\sum_{i=1}^{r_{t-2}} \ketbra{\sigma_{\oplus}^{r_{t-1}}\circ \min[{r_{t-1}},i]}{i}_W
		\otimes
		\left\{
		\begin{array}{ll}
		E^{\alpha_{t}}_{Q_1}(i)\otimes\ket{\alpha_{t}}_{Q_2}, &i\geq r_{t-1}\\
		\id_{Q_1}, & i<r_{t-1}
		\end{array}
		\right.
	\end{equation}
	where the permutation $\sigma_{\oplus}^{l}$ is defined by the mapping $i\mapsto i+1$, for $i< l$ and $l\mapsto 1$. These Kraus operators are of the form given in Eq.~\ref{eq:form}.

	From the CPTP condition for the Kraus operators defined in Eq.~\ref{eq:simpleform}, we get that $\sum_{i,j \in W_k} \ketbra{i}{j}_W\otimes \sum_{\alpha} {E^{\alpha}_Q(i)}\mdag E^{\alpha}_Q(j)=\sum_{i\in W_k} \ketbra{i}{i}_W\otimes \sum_{\alpha} {E^{\alpha}_Q(i)}\mdag E^{\alpha}_Q(i)=\id$. This implies that for each $t\in \{2,\ldots s+1\}$, $K_t^{\alpha_{t}}$ are Kraus operator of a CPTP map, that is $\sum_{\alpha_{t}} {K_t^{\alpha_{t}}}\mdag K_t^{\alpha_{t}}=\id$.
	
	It is straightforward to see by induction that 
	\begin{align*}
	K_t^{\alpha_{t}}\circ \ldots \circ K_1=&\sum_{i\in W_{s-t+2}} \ketbra{1}{i}_W\otimes E^{\alpha_{t}}_{Q_1}(i)\otimes\ket{\alpha_{t}}_{Q_2} \\
	&+\sum_{i\in W_{s-t+3}} \ketbra{2}{i}_W\otimes E^{\alpha_{t-1}}_{Q_1}(i)\otimes\ket{\alpha_{t-1}}_{Q_2} \\
	&+\ldots\\
	&+\sum_{i\in W_s} \ketbra{t-1}{i}_W\otimes E^{\alpha_{2}}_{Q_1}(i)\otimes\ket{\alpha_{2}}_{Q_2} \\
	& +\sum_{i\in \bigcup_{j=1}^{s-t+1} W_s} \ketbra{\sigma_1 (i)}{i}_W\otimes\id_Q \\
	=&\sum_{u=2}^t \sum_{i\in W_{s-t+u}} \ketbra{u-1}{i}_W \otimes E_{Q_1}^{\alpha_{t+2-u}} (i) \otimes \ket{\alpha_{t+2-u}} \\
	& +\sum_{i\in \bigcup_{j=1}^{s-t+1} W_s} \ketbra{\sigma_1 (i)}{i}_W\otimes\id_Q 
	\end{align*}
	Now we redefine the map $\Lambda_{s+1}\rightarrow \tr_{Q_2}\Lambda_{s+1}$, from which follows the statement in the case that the image of $f(i)$ is $\{1,\ldots s\}$ for a $s\in \mathbb{N}\le \dim(W)$. Now assume that the image of $f$ is not $\{1,\ldots s\}$. In this case, we can proceed in the same way and change the permutation of $\Lambda_{s+1}$ (before we trace out $Q_2$) such that we implement the correct $f$. 
\end{proof}
We are now ready to give the proof of Prop.~\ref{prop:form}.
\begin{proof}[Proof of Prop.~\ref{prop:form}]
	We need to prove four statements:
	\begin{enumerate}
		\item Any elemental LOP map can be decomposed into an arbitrarily long sequence of CPTP maps represented by Kraus operators of the form given in Eq.~\ref{eq:form}.
		\item Any CPTP map given by Kraus operators as in Eq.~\ref{eq:form} can be decomposed as an LOP map.
		\item Induction +1: The composition of two CPTP maps that can be decomposed into an arbitrary long sequence of CPTP maps  represented by Kraus operators of the form given in Eq.~\ref{eq:form} can again be decomposed into this form. This statement is trivial.
		\item Any CPTP map that can be decomposed into an arbitrary long sequence of CPTP maps represented by Kraus operators of the form given in Eq.~\ref{eq:form} can also be decomposed into such a sequence with $r(\vec{\alpha}_{t+1})< r(\vec{\alpha}_{t})$. From this follows that the choice $N\le \dim(W)$ is always possible, since $r(0)$ is w.l.o.g  equal to $\dim W$.
	\end{enumerate}
	The first statement is easy to see; permutations of the basis ${\cal Z}$ of $W$, diagonal unitaries on $W$, and observed quantum operations on $Q$ all have the form of $K_1^{\alpha_1}$. That classical to quantum forwarding has the form~\ref{eq:form} is a direct corollary of Lem.~\ref{lem:inj_ops}.
	
	To implement a map given by Kraus operators of the form~\ref{eq:form} ($K_t^{\alpha_{t}}(\vec{\alpha}_{t-1})
	=
	\sum_{i=1}^{r(\vec{\alpha}_{t-2})}  \ketbra{\sigma_{\vec{\alpha}_{t}} \circ \min[r(\vec{\alpha}_{t-1}),i]}{i}_W
	\otimes E_t^{ \alpha_{t}}\left(\vec{\alpha}_{t-1},i\right)$) by elemental operations, the first step is to implement the trivial observed quantum operation 
	\begin{align*}
	\ket{0}_{W_2},
	\end{align*} 
	followed by a permutation on $W$ given by 
	\begin{align*}
		\sum_{i=1}^{r(\vec{\alpha}_{t-2})}  \ketbra{ \min[r(\vec{\alpha}_{t-1}),i]}{i}_{W_1}  \otimes\ketbra{i}{0}_{W_2}.
	\end{align*} 
	Then we do a classical to quantum forwarding of system $W_2$ to an ancillary system $Q_2$. Up to here we can summarize the concatenation of these operations by 
	\begin{align*}
		\sum_{i=1}^{r(\vec{\alpha}_{t-2})}  \ketbra{\min[r(\vec{\alpha}_{t-1}),i]}{i}_{W_1}  \otimes \id_{Q_1} \otimes\ket{i}_{Q_2}.
	\end{align*}
	 The next step is a quantum operation on $Q$ defined by the Kraus operators
	\begin{align*}
	\sum_{i=1}^{r(\vec{\alpha}_{t-2})} E_t^{ \alpha_{t}}\left(\vec{\alpha}_{t-1},i\right)_{Q_1} 
	\otimes   \ketbra{\sigma_{\vec{\alpha}_{t}} \circ \min[r(\vec{\alpha}_{t-1}),i]}{i}_{Q_2},
	\end{align*}
	which is a quantum operation exactly if the Kraus operators $K_t^{\alpha_{t}}(\vec{\alpha}_{t-1})
	$ form one. 
	In total, we then have
	\begin{align*}
	\sum_{i=1}^{r(\vec{\alpha}_{t-2})}  \ketbra{\min[r(\vec{\alpha}_{t-1}),i]}{i}_{W_1} 
	 \otimes  E_t^{ \alpha_{t}}\left(\vec{\alpha}_{t-1},i\right)_{Q_1}  
	\otimes   \ket{\sigma_{\vec{\alpha}_{t}} \circ \min[r(\vec{\alpha}_{t-1}),i]}_{Q_2}.
	\end{align*}
	After a permutation on $W_1$ that implements $\sigma_{\vec{\alpha}_{t}}$, we obtain in total
	\begin{align*}
	\sum_{i=1}^{r(\vec{\alpha}_{t-2})}  \ketbra{\sigma_{\vec{\alpha}_{t}} \circ \min[r(\vec{\alpha}_{t-1}),i]}{i}_{W_1} 
	\otimes  E_t^{ \alpha_{t}}\left(\vec{\alpha}_{t-1},i\right)_{Q_1}  \\
	\otimes   \ket{\sigma_{\vec{\alpha}_{t}} \circ \min[r(\vec{\alpha}_{t-1}),i]}_{Q_2}.
	\end{align*}
	The last step is to use the operation ${\cal B}_{W_1, Q_2}^{-1}$ from Lem.~\ref{lem:bij} to get rid of $Q_2$ and we end up with the wanted operation.

	As already mentioned, the third statement is trivial.
	The hard part is the fourth statement. Assume we have two arbitrary sets of Kraus operators $K_{t}^{\alpha_{t}}$, $K_{t-1}^{\alpha_{t-1}}$ of the given form corresponding to $\Lambda_t$ and $\Lambda_{t-1}$. Then we distinguish two cases. In the first case, we show that we can find two new sets of Kraus operators $L_{t}^{\alpha_{t}}$, $L_{t-1}^{\alpha_{t-1}}$ of the required form such that $K_{t}^{\alpha_{t}} \circ K_{t-1}^{\alpha_{t-1}}= L_{t}^{\alpha_{t}} \circ L_{t-1}^{\alpha_{t-1}}$, $r(\vec{\alpha}_{t-2})$ remains the same and $r(\vec{\alpha}_{t-1})< r(\vec{\alpha}_{t-2})$ for the two new sets (the place where one cuts the Hilbert space dimension of the wire in the step $t$ ($r(\vec{\alpha}_{t-1})$) depends on the previous outcomes, but not on the current one. That is why its index is $t-1$ and not $t$). In the second case, one can replace the two CPTP maps by one CPTP map of the required form such that $r(\alpha_{t-2})$ remains unchanged.

	Assume $r(\vec{\alpha}_{t-1})\geq r(\vec{\alpha}_{t-2})$ (otherwise there is nothing to show). First we split up the injection $\sigma_{\vec{\alpha}_{t-1}}(i)$ into a permutation and an order-preserving injection. Formally, we define the permutation $\eta_{\vec{\alpha}_{t-1}}$ on $\{1,\ldots r(\vec{\alpha}_{t-2})\}$ and the injection $a:\{1,\ldots r(\vec{\alpha}_{t-2})\}\rightarrow\sigma_{\vec{\alpha}_{t-1}}(\{1,\ldots r(\vec{\alpha}_{t-2})\})\subset \mathbb{N}^{>0}$  such that 
	\[
	\sigma_{\vec{\alpha}_{t-1}}(i)=a(\eta_{\vec{\alpha}_{t-1}}(i)),\] and $a(i)<a(j)$ for $i<j$. Then there is some $l\leq r(\vec{\alpha}_{t-2})$ such that 
	\[\min[r(\vec{\alpha}_{t-1}),\cdot]\circ \sigma_{\vec{\alpha}_{t-1}}(\{1,\ldots r(\vec{\alpha}_{t-2})\})=\{a(1),\ldots a(l)\}\] 
	and 
	\[\min[r(\vec{\alpha}_{t-1}),\cdot] \circ\sigma_{\vec{\alpha}_{t-1}}
	=\min[r(\vec{\alpha}_{t-1}),\cdot] \circ a \circ\eta_{\vec{\alpha}_{t-1}}=a\circ \min[l,\cdot] \circ\eta_{\vec{\alpha}_{t-1}},\]
	 with $f(x,\cdot)$ denoting the function $f(x,y)$ for fixed $x$, as a function of $y$ and we use $f(A)$ for a function $f$ and a set $A$ to denote the image of the set $A$ under $f$.
	
	We first consider the case $l<r(\vec{\alpha}_{t-2})$. We first define \[F_{t}^{\alpha_{t}}(\vec{\alpha}_{t-1}, i)=E_{t}^{\alpha_{t}}(\vec{\alpha}_{t-1}, \sigma_{\vec{\alpha}_{t-1}}\circ \eta_{\vec{\alpha}_{t-1}}^{-1}(i))
	=E_t^{\alpha_{t+1}}(\vec{\alpha}_t, a(i)),\]
	with $i\in \{1,\ldots r(\vec{\alpha}_{t-2})\}$. By further defining the injection $\eta_{\vec{\alpha}_{t}}(k)=\sigma_{\vec{\alpha}_{t}} (a(k))$, we can finally define:
	\[
	L_t^{\alpha_{t}}(\vec{\alpha}_{t-1})
	=\sum_{i=1}^{r(\vec{\alpha}_{t-2})}
	\ketbra{\eta_{\vec{\alpha}_{t}}\circ \min[l,i]}{i}_W\otimes F_{t}^{\alpha_{t}}(\vec{\alpha}_{t-1}, i)
	\]
	and
	\[
	L_{t-1}^{\alpha_{t-1}}(\vec{\alpha}_{t-2})
	=\sum_{i=1}^{r(\vec{\alpha}_{t-3})}
	\ketbra{\eta_{\vec{\alpha}_{t-1}}\circ \min[r(\vec{\alpha}_{t-2}),i]}{i}_W\otimes E_{t-1}^{\alpha_{t-1}}(\vec{\alpha}_{t-2}, i).
	\]
	First we note that the map defined by $L_{t-1}^{\alpha_{t-1}}(\vec{\alpha}_{t-2})$ is CPTP exactly if the map defined by $K_{t-1}^{\alpha_{t-1}}(\vec{\alpha}_{t-2})$ is (as they only differ by a permutation at the end).
	
	Now we show that $L_t^{\alpha_{t}}(\vec{\alpha}_{t-1})$ forms a CPTP map as well. Remember that $\min[r(\vec{\alpha}_{t-1}),a(\cdot)] =a( \min[l,\cdot])$ and since $a$ is a bijection on its image,  $\braket{\min[r(\vec{\alpha}_{t}),a(i)]}{\min[r(\vec{\alpha}_{t}),a(j)]}=\braket{\min[l,i]}{\min[l,j]}$. %Using this fact and, we show that indeed $L_t^{\alpha_{t+1}}(\vec{\alpha}_{t})$ forms a CPTP map provided $K_t^{\alpha_{t+1}}(\vec{\alpha}_{t})$ forms one.
	Then
		\begin{align*}
			\id&=
			\sum_{\alpha_{t}} K_t^{\alpha_{t}}(\vec{\alpha}_{t-1})	\mdag 
			  K_t^{\alpha_{t}}(\vec{\alpha}_{t-1})\\
			\Leftrightarrow  &
			\forall i,j: 
			\\
			\id\delta_{i,j}&=(\bra{i}_W\otimes\id_Q)  
			\left(	\sum_{\alpha_{t}} K_t^{\alpha_{t}}(\vec{\alpha}_{t-1})	\mdag 
			K_t^{\alpha_{t}}(\vec{\alpha}_{t-1})\right)   (\ket{j}_W\otimes\id_Q)\\
			&=\sum_{\alpha_{t}} \braket{\min[r(\vec{\alpha}_{t-1}),i]}{\min[r(\vec{\alpha}_{t-1}),j]} 
			E_t^{\alpha_{t}}(\vec{\alpha}_{t-1}, i)\mdag E_t^{\alpha_{t}}(\vec{\alpha}_{t-1}, j)\\
			\Leftrightarrow &\forall i,j: 
			\\
			\;\id\delta_{i,j}&=\sum_{\alpha_{t}} \braket{\min[r(\vec{\alpha}_{t}),a(i)]}{\min[r(\vec{\alpha}_{t}),a(j)]} \\
			& \;\;\;\;\;\;\;\;\;\; E_t^{\alpha_{t}}(\vec{\alpha}_{t-1}, a(i))\mdag E_t^{\alpha_{t}}(\vec{\alpha}_{t-1}, a(j))\\
			&=\sum_{\alpha_{t}} \braket{\min[l,i]}{\min[l,j]} 
			F_{t}^{\alpha_{t}}(\vec{\alpha}_{t-1}, i)\mdag F_{t}^{\alpha_{t}}(\vec{\alpha}_{t-1}, j)\\	
			\Leftrightarrow  &\id=
			\sum_{\alpha_{t}} L_t^{\alpha_{t}}(\vec{\alpha}_{t-1})	\mdag 
			  L_t^{\alpha_{t}}(\vec{\alpha}_{t-1}).
		\end{align*}
		where we used in the last line that $\eta_{\vec{\alpha}_{t}}$ is a bijection as well.

	Finally we need to check that we get indeed the right mapping ($K_t^{\alpha_{t}}(\vec{\alpha}_{t-1})\circ 	K_{t-1}^{\alpha_{t-1}}(\vec{\alpha}_{t-2})=L_t^{\alpha_{t}}(\vec{\alpha}_{t-1})\circ 	L_{t-1}^{\alpha_{t-1}}(\vec{\alpha}_{t-2})$). This follows from the equalities %(for $j\in \{1,\ldots , r(\vec{\alpha}_{t-2})\}$)
	\begin{align*}
	&\sum_{i=1}^{ r(\vec{\alpha}_{t-2})} \ketbra{\sigma_{\vec{\alpha}_{t}}\circ \min[r(\vec{\alpha}_{t-1}),i]}{i}\circ \ket{\sigma_{\vec{\alpha}_{t-1}}(j)}
	\otimes E_{t}^{\alpha_{t}}(\vec{\alpha}_{t-1}, i)\circ E_{t-1}^{\alpha_{t-1}}(\vec{\alpha}_{t-2}, j)
	\\
	&= \ket{\sigma_{\vec{\alpha}_{t}}\circ \min[r(\vec{\alpha}_{t-1}),\sigma_{\vec{\alpha}_{t-1}}(j)]}
	\otimes E_{t}^{\alpha_{t}}(\vec{\alpha}_{t-1},\sigma_{\vec{\alpha}_{t-1}}(j))\circ E_{t-1}^{\alpha_{t-1}}(\vec{\alpha}_{t-2}, j)
	\\
	&= \ket{\sigma_{\vec{\alpha}_{t}}\circ a\circ \min[l,  \eta_{\vec{\alpha}_{t}}(j)]}
	\otimes E_{t}^{\alpha_{t}}(\vec{\alpha}_{t-1}, a(\eta_{\vec{\alpha}_{t-1}}(j)))\circ E_{t-1}^{\alpha_{t-1}}(\vec{\alpha}_{t-2}, j)\\
	&= \ket{\eta_{\vec{\alpha}_{t}}\circ \min[l,\eta_{\vec{\alpha}_{t-1}}(j)]}
	\otimes F_{t}^{\alpha_{t}}(\vec{\alpha}_{t-1},\eta_{\vec{\alpha}_{t-1}}(j)))\circ E_{t-1}^{\alpha_{t-1}}(\vec{\alpha}_{t-2}, j)\\
	&=\sum_{i=1}^{ r(\vec{\alpha}_{t-2})} \ketbra{\eta_{\vec{\alpha}_{t}}\circ \min[l,i]}{i}\circ \ket{\eta_{\vec{\alpha}_{t-1}}(j)}
	\otimes F_{t}^{\alpha_{t}}(\vec{\alpha}_{t-1},i))\circ E_{t-1}^{\alpha_{t-1}}(\vec{\alpha}_{t-2}, j).
	\end{align*}
	The case $l=r(\vec{\alpha}_{t-1})$ can be handled similarly, just by noting that the action of $\cut_l$ gets trivial and one can therefore express the concatenation of the two maps as a single map of the same form.
	
	This proves that one can chose $r$ such, that: $\dim {W} \geq r(0) > r(\vec{\alpha}_{1}) > \ldots > r(\vec{\alpha}_{N-1})\geq 1$. It follows that it is always possible to have $N\leq \dim{W}$.
\end{proof}
We can now use the above proposition to prove one direction of the connection between $LOP$ and $SIO$ operations given in proposition~\ref{prop:PIO}.
\begin{prop*}[\ref{prop:PIO}]
	Let $\Lambda$ be a CPTP map acting on $W\otimes Q$. $\Lambda$ is an LOP operation exactly if it can be written as a sequence of maps $\Lambda=\Lambda_M(\vec{\alpha}_{M-1}) \circ\ldots \circ\Lambda_1$, for some finite $M$,
	where each $\Lambda_i$ is
	\begin{itemize}
		\item[(a)]  a physical incoherent operation on $W$
		\item[or (b)] a destructive measurement in one fully coherent basis of a subsystem of $W$
		\item[or (c)]	a controlled unitary (${\bf U}_{\rm control}=\sum_m |m\rangle\langle m|_{\rm control} \otimes {\bf U}_{{\rm target}}(m)$) with control $W$ and target $Q$ 
		\item[or (d)] a generalized measurement of $Q$, encoding the result on $W$ ($\rho \mapsto \sum \limits_{\alpha} \ketbra{\alpha}{\alpha}_W\otimes K^{\alpha}\rho {K^{\alpha}}\mdag$).
	\end{itemize}
	One can equivalently replace item (a) by ``a special incoherent operation on $I$''.
\end{prop*}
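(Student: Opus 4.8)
The plan is to prove both inclusions by combining the closed form of Prop.~\ref{prop:form} with the generating set of $LOP$. The easy direction is to check that each of the blocks (a)--(d) is itself an $LOP$ operation, so that any sequence of them lies in $LOP$. Block (d) is literally the elemental ``observed quantum operation'' in the definition of $LOP$. Block (c), a controlled unitary $\sum_m \ketbra{m}{m}_W \otimes \mathbf{U}(m)$ with control $W$, is a single Kraus operator of the form of Eq.~\ref{eq:form} with trivial permutation, no cut ($r=\dim W$) and $E(m)=\mathbf{U}(m)$, hence in $LOP$ by Prop.~\ref{prop:form}. Block (b), a destructive measurement of a subsystem $W_s$ in any basis, is free because one may forward $W_s$ to $Q$ and measure it there, as already noted in the main text. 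Block (a), a physical incoherent operation on $W$, is the special case of Eq.~\ref{eq:form} in which the permutation is inherited from the fixed $\sigma$ and $E^{\alpha}(i)=c_{\alpha}(i)\,\id_Q$; the same remark, now with an outcome-dependent permutation, shows a general special incoherent operation is of the form in Eq.~\ref{eq:form} as well, which takes care of the closing clause ``replace (a) by a special incoherent operation'' (and since $PIO\subseteq SIO$, enlarging the allowed set from $PIO$ to $SIO$ does not affect soundness).

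For the converse I would reduce to the four elemental generators of $LOP$ and decompose each into a bounded sequence of (a)--(d), after which finiteness of $M$ is automatic since a finite protocol expands into a finite concatenation (no $\dim W$ bound is needed here). Permutations and phases on $W$ are, by definition, physical incoherent operations, i.e.\ type (a); an observed quantum operation is exactly type (d). This leaves a single nontrivial generator, classical-to-quantum forwarding $\sum_j \ket{j}_{Q_t}\bra{j}_{W_s}$, whose decomposition is the heart of the argument.

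To forward $W_s$ onto a fresh target $Q_t$ I would proceed in three steps. First, initialise $Q_t$ in $\ket{0}$ and apply the controlled ``copy'' unitary $\sum_j \ketbra{j}{j}_{W_s}\otimes X^j_{Q_t}$ with $X^j\ket{0}=\ket{j}$, a type-(c) operation, producing the maximally correlated state $\sum_{j,j'}\rho_{jj'}\ketbra{j}{j'}_{W_s}\otimes\ketbra{j}{j'}_{Q_t}$. Second, destructively measure $W_s$ in the fully coherent Fourier basis, a type-(b) operation, obtaining a classical outcome $k$. Third, apply the outcome-dependent phase correction $\sum_j e^{-2\pi \im k j/d}\ketbra{j}{j}_{Q_t}$ on the target, a trivially controlled type-(c) unitary whose $k$-dependence is absorbed into the sequential, outcome-conditioned structure of the protocol. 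This is precisely the construction of $\mathcal{B}^{-1}$ in Lem.~\ref{lem:bij} with the roles of $W$ and $Q$ interchanged, so the same short computation shows that the coherence carried by $W_s$ is transferred intact to $Q_t$ while $W_s$ is discarded. Since each generator expands into at most three maps of types (a)--(d), every finite $LOP$ protocol expands into a finite such sequence.

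The main obstacle is exactly this last decomposition: forwarding must preserve coherence, so $W_s$ cannot be read out in the incoherent basis $\cal Z$ (which would dephase the transferred state) but must be measured in a fully coherent basis and the resulting random phase undone by a correction on $Q$. Verifying that the phase correction returns exactly the forwarded state for every outcome $k$ is the only place where a genuine calculation is required, and it is supplied by Lem.~\ref{lem:bij}; everything else is bookkeeping over the generators together with direct appeals to Prop.~\ref{prop:form}.
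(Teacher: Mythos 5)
Your proposal is correct and follows essentially the same route as the paper: the forward inclusion by noting that (b) reduces to forwarding plus a measurement on $Q$ and that (a), (c) are special cases of the Kraus form in Prop.~\ref{prop:form} with $PIO\subseteq SIO$, and the converse by reducing everything to classical-to-quantum forwarding, implemented via a controlled copy onto an ancilla, a destructive Fourier-basis measurement of $W_s$, and an outcome-conditioned phase correction on the target, exactly mirroring the construction of ${\cal B}^{-1}$ in Lem.~\ref{lem:bij}.
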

\begin{proof}
	We start the proof by noting that any destructive measurements on a subsystem of $W$ can be done by LOP operations by classical to quantum forwarding of the subsystem in question to an ancilla in $Q$, $Q_2$ and then doing the measurement there. Special incoherent operations are those with a Kraus decomposition with Kraus operators of the form $K^{\alpha_1} =
	\sum_{i=1}^{d}  c(\alpha)\ketbra{\sigma_{\alpha }( i)}{i}$~\cite{Winter2016}, which obviously is a special case of the form $K_1$ in Prop.~\ref{prop:form} (i.e. Eq.~\ref{eq:form}), the same is true for control unitaries of the form (c). Next we note that PIO is a subset of SIO~\cite{Chitambar2016b}, so that any operation having a decomposition as in the proposition is an element of LOP.
	
	For the converse we only need to show that we can do classical to quantum forwarding using only the operations (a)-(d). This goes by virtually the same protocol we already applied for the inverse part of Lem.~\ref{lem:bij}.
	We first do a controlled unitary from the system $W_s$ in question to an ancillary system $Q_s$ prepared in the state $\ket{0}$, to which we want to teleport, apply a measurement in the Fourier basis of $W_s$ (note that measurements in different fully coherent bases only differ by a diagonal unitary, which is an element of PIO, so we can assume w.l.o.g that the basis we can measure in is given by the Fourier basis), followed by a correction of the phase on $Q_s$:
	\begin{enumerate}
		\item $\ket{0}_{Q_s}$
		\item $\sum_{i}\ketbra{i}{i}_{W_s} \otimes  \sum_{j}\ketbra{i\oplus j}{j}_{W_s}$
		\item	$FT^k_{W_s}=\bra{\hat{k}}_{W_s}=\sum_{j=1}^{\dim(W_s)}  \frac{e^{2 \pi \im k j /\dim(W_s)} }{\sqrt{\dim(W_s)}}\bra{j}_{W_s}$
		\item  $D(k)= \sum_{l=1}^{\dim(W_s)} \ketbra{l}{l}_{Q_s}   e^{-2 \pi \im k l / \dim(W_s)} $,
	\end{enumerate}
	resulting in the action $\sum_i \bra{i}_{W_s}\otimes  \ket{i}_{Q_s}$.
\end{proof}
The most general set of meaningful operations, if one just has the restriction that one wants to keep the wire incoherent and classically correlated with the quantum system, are the operations that do not allow to generate states which are not incoherent-quantum from those that are. If one allows post-selection this means that the same should be true for each Kraus operator defining the operation. This set, introduced in~\cite{Ma2016}, is called $IQO$, for incoherent-quantum operations. Our approach by contrast, is operational, in the sense that we only allow some specific elemental operations which are meaningful. This has the advantage that it is more transparent and does not allow operations that could be not classical in a way that might not be obvious, but the disadvantage, that there might be some operations that we do not allow, that are still meaningful. Fortunately we can prove that the gap between the theory we propose and the completely abstractly defined maximal possible theory $IQO$ is not that big. The details of this statement is the content of proposition~\ref{prop:gennoninj}.
\begin{prop*}[\ref{prop:gennoninj}]
	Be $\Lambda$ an incoherent-quantum operation on $W_1\otimes Q$, which is exactly the case if it is CPTP and has a Kraus decomposition with Kraus operators of the form $K^{\alpha}=\sum_i \ketbra{f_{\alpha}(i)}{i}_{W_1} \otimes E^{\alpha}(i)$, for some functions $f_{\alpha}$ acting on the labels of the incoherent basis and some operators $E^{\alpha}(i)$ acting on $Q$. Let $d := \dim(W_1)$. If $d=2$, $\Lambda \in LOP$. For $d\geq 3$ $LOP\neq IQO$, but there is a stochastic implementation of the map in $LOP$ with 
	a success rate of at least
	$1/d$; i.e. there is an operation $\Lambda' \in LOP(W_2\otimes W_1\IC Q)$  with $\Lambda'[\ketbra{0}{0}_{W_2}\otimes\rho]=\ketbra{0}{0}_{W_2}\otimes\Lambda_0'[\rho]+\ketbra{1}{1}_{W_2}\otimes\Lambda_1'[\rho]$ with $\Lambda_1'[\rho]=\Lambda[ \rho]/d\; \forall \rho$.
\end{prop*}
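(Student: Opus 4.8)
The plan is to treat the three assertions separately, with the stochastic implementation as the technical core and the other two as, respectively, a degenerate instance and an obstruction read off from the closed form of Prop.~\ref{prop:form}. The governing observation is the contrast between the two Kraus forms: an IQO operator $K^{\alpha}=\sum_i\ketbra{f_{\alpha}(i)}{i}_{W_1}\otimes E^{\alpha}(i)$ may relabel the wire by an \emph{arbitrary} function $f_\alpha$ that moreover varies with the outcome $\alpha$, whereas an elementary LOP step (Eq.~\ref{eq:form}) may only perform the shared ``tail collapse'' $\min[r(\vec\alpha_{t-1}),\cdot]$ followed by an outcome-dependent \emph{injective} relabelling, with the collapse threshold $r$ common to all outcomes of a round. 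I would first record that the stated Kraus form is exactly the defining property of IQO~\cite{Ma2016}, and that its CPTP condition forces $\sum_\alpha E^{\alpha}(i)\mdag E^{\alpha}(i)=\id_Q$ for every $i$ together with the vanishing of the cross terms $\sum_{\alpha:\,f_\alpha(i)=f_\alpha(i')}E^{\alpha}(i)\mdag E^{\alpha}(i')$ for $i\neq i'$.

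For the stochastic implementation I would build $\Lambda'$ from three free moves. (1) Using a wire-controlled unitary (item~(c) of Prop.~\ref{prop:PIO}) that dilates the instrument $\{E^{\alpha}(i)\}_{\alpha}$ --- legitimate because $\sum_\alpha E^{\alpha}(i)\mdag E^{\alpha}(i)=\id_Q$ --- followed by a measurement of the $Q$-ancilla encoding $\alpha$ on a flag wire, I apply $E^{\alpha}(i)$ controlled on the wire population while leaving the wire index $i$, and hence its coherences, untouched. (2) Conditioned on the recorded $\alpha$, I append a wire ancilla $W_a$ of dimension $d$ in $\ket{0}$ and apply the permutation $\ket{i}_{W_1}\ket{0}_{W_a}\mapsto\ket{f_\alpha(i)}_{W_1}\ket{i}_{W_a}$, which is injective in $i$ and hence an element of PIO; this writes the \emph{desired} output label $f_\alpha(i)$ onto $W_1$ while stashing the original index in $W_a$. (3) I measure $W_a$ in its Fourier basis (item~(b)) and post-select the outcome $k=0$, flagging success on $W_2$. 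Since $\braket{\hat 0}{i}=1/\sqrt d$ is independent of $i$ and phase-free, the $k=0$ branch contributes exactly
\begin{align*}
\tfrac1d\sum_{i,i'}\ket{f_\alpha(i)}_{W_1}\,E^{\alpha}(i)\rho_{ii'}E^{\alpha}(i')\mdag\,\bra{f_\alpha(i')}_{W_1}=\tfrac1d\,K^{\alpha}\rho K^{\alpha}\mdag .
\end{align*}
Summing over the recorded $\alpha$ gives $\Lambda_1'[\rho]=\Lambda[\rho]/d$ on $W_1\otimes Q$, with total success weight $\tfrac1d\tr\Lambda[\rho]=1/d$, exactly as claimed. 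The coherent merge that LOP cannot perform on the wire directly is thus paid for by discarding $W_a$ through a Fourier measurement whose favourable branch has state-independent probability $1/d$.

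For $d=2$ the claim $\Lambda\in LOP$ follows by specialising this picture and removing the post-selection. On two letters every $f_\alpha$ is either injective or constant; there is no genuine partial merge. For the injective outcomes $V_\alpha$ is a permutation, realised deterministically by step~(2) alone. For the constant outcomes the output wire is the single basis state $\ket{c}$, carrying no coherence, so the required coherent sum can be pushed entirely onto $Q$: forwarding the wire and applying the measurement operator $\sum_i E^{\alpha}(i)\otimes\bra{i}$ produces $\sum_{i,i'}E^{\alpha}(i)\rho_{ii'}E^{\alpha}(i')\mdag$ on $Q$ deterministically, after which $\ket{c}$ is prepared freely. I would then assemble the full channel by separating the outcome groups with a coarse measurement and applying the appropriate deterministic branch to each; the only delicate point is the bookkeeping forced by the cross-term constraints above, which I expect to reduce, as in the representative mixed example, to a trivial (identity-proportional) coin between a ``keep-coherence'' and a ``reset'' branch.

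For $d\ge 3$ I would exhibit an explicit IQO channel that Prop.~\ref{prop:form} excludes from LOP. On a qutrit wire with trivial $Q$, take Kraus operators whose non-injective parts merge \emph{incompatible} pairs across outcomes --- one outcome collapsing $\{1,2\}$ while keeping $3$, another collapsing $\{2,3\}$ while keeping $1$ --- completed to a CPTP map by suitable amplitudes. In any LOP realisation the first round performing a non-trivial collapse must, by Eq.~\ref{eq:form}, use a single threshold $r$ and hence a single tail $\{r,\dots,d\}$ common to all of its outcomes; no relabelling can present both $\{1,2\}$ and $\{2,3\}$ as that same tail, and the bounded nested structure $d\ge r(0)>r(\vec\alpha_1)>\dots$ leaves no room to repair this later. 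I expect the main obstacle of the whole proof to lie here: turning this structural incompatibility into a rigorous impossibility that rules out \emph{every} LOP decomposition --- including those that first branch through $Q$-measurements --- rather than just the naive one, for which the closed form of Prop.~\ref{prop:form} is the essential tool.
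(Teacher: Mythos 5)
Your stochastic $1/d$ protocol is sound and is essentially the paper's: record $\alpha$, write $f_\alpha(i)$ onto $W_1$ while stashing $i$ injectively in an ancilla, then erase the ancilla by a Fourier measurement and post-select the phase-free outcome. The two remaining claims, however, each contain a genuine gap. For $d=2$, the whole difficulty sits precisely in the step you defer to ``bookkeeping'': you must first decide, \emph{inside} LOP, whether the outcome will fall in the constant group $R$ or the injective group $R^c$, and this deciding instrument is not an identity-proportional coin. The paper builds it explicitly as a first round with Kraus operators $K^{\alpha}$ for $\alpha\in R^c$ together with $K^0=\sum_i\ketbra{i}{i}\otimes E^0(i)$, $E^0(i)=\bigl(\sum_{\alpha\in R}E^{\alpha}(i)\mdag E^{\alpha}(i)\bigr)^{1/2}$ (CPTP only because the cross terms $\sum_{\alpha\in R}E^{\alpha}(1)\mdag E^{\alpha}(2)$ vanish), and then completes the $R$ branch in a second round with the operators $E^{\alpha}(i)E^0(i)^{-1}$ built from the Penrose pseudo-inverse, plus an extra operator supported on the kernel of $E^0(i)$. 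None of this is identity-proportional in general, and without it your ``coarse measurement'' is not shown to be an LOP operation; note also that Prop.~\ref{prop:form} forbids mixing, within one round, outcomes that cut the wire with outcomes that do not, which is exactly why the two-round structure is unavoidable.

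The $d\geq 3$ separation is where your argument actually fails rather than merely being incomplete. The claimed obstruction --- that all outcomes of the first collapsing round share one tail $\{r,\dots,d\}$, so no protocol can merge $\{1,2\}$ in one branch and $\{2,3\}$ in another --- is not an obstruction: an earlier round may apply \emph{outcome-dependent permutations} (e.g.\ a fair coin between $\id$ and the transposition $(13)$, each of the form of Eq.~\ref{eq:form} with no cut), after which a common cut $\min[2,\cdot]$ merges $\{2,3\}$ in one branch and $\{1,2\}$ in the other. So a channel of the shape you describe can perfectly well lie in LOP, and your sketch gives no criterion distinguishing it from one that does not. The paper's counterexample is built differently: four explicit qutrit Kraus operators $K_1,\dots,K_4$ chosen so that no nontrivial incoherent linear combination of them is an incoherent Kraus operator, which forces every operator in any incoherent Kraus representation to be proportional to a single $K_s$; a rank-one lemma then shows that the trace-preserving chain conditions of the closed form would force some intermediate-round operator to be proportional to ${\bf U}K_4$ with $K_4$ the fully collapsing rank-one operator, contradicting the cut structure that operator must have. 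That linear-independence-plus-TP-chain argument, not the ``incompatible tails'' picture, is what rules out every LOP decomposition.
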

\begin{proof}
	The form of the incoherent-quantum operations directly follows from applying each Kraus operators to a product state incoherent on $W$ and requiring that it is still (up to normalization) a product state incoherent on $W$. The converse, namely that any Kraus operator of that form is incoherent-quantum (that is preserves the set of incoherent-quantum states) is a trivial consequence of the convexity of the set of incoherent-quantum states.
	
	The protocol for doing the operation given by the Kraus operators $K^{\alpha}=\sum_i \ketbra{f_{\alpha}(i)}{i}_{W} \otimes E_{\alpha}(i)_Q$, by LOP operations with a probability of $1/d$ is given by (identifying $W=W_1$ and $Q=Q_1$ at the beginning and the end):
	\begin{enumerate}
		\item $\sum_i \ketbra{i}{i}_{W_1}\otimes\ket{i}_{Q_2}$ (see Lem.~\ref{lem:bij}),
		\item $\sum_i E_{\alpha}(i)_{Q_1} \otimes \ketbra{f_{\alpha}(i)}{i}_{Q_2}$,
		\item $\sum_i \ketbra{f_{\alpha}(i)}{i}_{W_1}\otimes\ket{i}_{W_2}$ (a permutation after adding an ancilla),
		\item "Delete" the duplicate $Q_2$, (applying ${\cal B}^{-1}$ of Lem.~\ref{lem:bij} to $W_1$, $Q_2$),
		\item $\sum_i \bra{i}_{W_2}\otimes\ket{i}_{Q_2}$,
		\item $\bra{\hat{k}}_{Q_2}$. 
	\end{enumerate}
In total the operation is given by the Kraus operators $\sum_i \ketbra{f_{\alpha}(i)}{i}_W \otimes E_{\alpha}(i)_{Q} \cdot\braket{\hat{k}}{i}$.
If the outcome is $k=0$, $\braket{\hat{k}}{i}=1/\sqrt{d}$ $\forall i$ and the protocol is successful. Note that the probability for this is $1/d$ independently of the initial state the operation is applied to. If $k\neq 0$ there is an $i$-dependent phase and in general the protocol fails (the information about $i$ is lost, so that at this point there is no way to correct the phases).

Let's consider the case $d=2$. We define the set $R=\{\alpha\mid f_{\alpha}(1)=f_{\alpha}(2)\}$ and $R^c$ its complement, separating the injective from the non-injective functions on $W$. The idea in the following is to first check (on $Q$) whether one has an injective or a non-injective case on $W$ and then change the form on $W$ accordingly, while inverting the check and applying the final operation on $Q$. Let wlog. $\alpha\in \{1,\ldots N\}$. We note that since the $K^{\alpha}$  form a CPTP map we have in particular that $\sum_{\alpha\in R} E_{\alpha}(1)_{B}\mdag \circ E_{\alpha}(2)_{B}=0$. For this reason it makes sense to define the operations $E^0(i)=\sqrt{ \sum_{\alpha\in R} E_{\alpha}(i)_{B}\mdag \circ E_{\alpha}(i)_{B}}$ and $K^0=\sum_i \ketbra{i}{i}\otimes E^0(i)$ and it is easy to check that $K^{\alpha}$ with $\alpha\in \{0\}\cup R^c$, again forms a CPTP map. This map has the form of Eq.~\ref{eq:form} and is therefore an element of $LOP$. For $\alpha \in R$ we then need a second step and we define the operations $E_1^{\alpha}(i)=E^{\alpha}(i)\circ E^0(i)^{-1}$, where $E^0(i)^{-1}$ is the Penrose pseudo-inverse of $E_0(i)$, that is, if $E_0(i)=U(i)\circ  D(i) \circ U(i)\mdag$ is the singular value decomposition of $E_0(i)$ (where we used that $E_0(i)=E_0(i) \mdag$, from the definition of $E_0(i)$), then $E^0(i)^{-1}=U(i) \circ  D(i)^{-1} \circ U(i)\mdag$ (here $D(i)$ is a diagonal matrix and $D(i)^{-1}$ is its diagonal right-inverse on its support and vice versa). We also need the operation $E_1^{0}(i)=\ket{i}\otimes(\id-E^0(i)\circ E^0(i)^{-1})$. Here it is useful to note that  $E^0(i)\circ E^0(i)^{-1}=E^0(i)^{-1}\circ E^0(i)$ is the projection on the image of $E^0(i)$. We can then define in the notation of Prop.~\ref{prop:form} $K_1^{\alpha}= \sum_i \ket{1}\bra{i} \otimes E_1^{\alpha}(i)$ for $\alpha \in R\cup \{0\}$ ($f_1(i)=1$).
We then have that
\begin{align*}
	&\sum_{\alpha} {K_1^{\alpha}}\mdag K_1^{\alpha}\\
	&=\sum_{\alpha} \sum_{i,j} \ketbra{i}{j}\otimes E_1^{\alpha}(i)\mdag E_1^{\alpha}(i)\\
	&=\sum_{i,j} \ketbra{i}{j}\otimes E^0(i)^{-1} \stackrel{\delta_{i,j} E^0(i)^2}{\overbrace{\sum_{\alpha\in R} \left(E^{\alpha}(i)\mdag E^{\alpha}(j)\right)}} E^0(j)^{-1}
	\\&+\sum_{i,j} \ketbra{i}{j}\otimes  \braket{i}{j} (\id-E^0(i)\circ E^0(i)^{-1})\mdag (\id-E^0(i)\circ E^0(i)^{-1})
	\\
	&=\sum_{i} \ketbra{i}{i}\otimes E^0(i)^{-1}E^0(i)\\
	&+ \sum_{i} \ketbra{i}{i}\otimes(\id -2 E^0(i)E^0(i)^{-1} +E^0(i)E^0(i)^{-1})\\
	&=\id
\end{align*}
Noticing that the probability to measure $\alpha=0$ in the second step, provided that the outcome in the first was $\alpha=0$, is $0$ and that $E^{\alpha}(i)\circ E^0(i)^{-1}\circ E^0(i)=E^{\alpha}(i)$ (since the support of $E^0(i)$ contains the support of $E^{\alpha}(i)$), we find that indeed applying the protocol as in Prop.~\ref{prop:form} with the above defined operations yields the right map.

The proof of the statement $LOP\neq IQO$ is done in section~\ref{sec:counterexample}, where we provide an explicit counterexample for a wire with Hilbert-space dimension $3$.
\end{proof}
Whether one can meaningfully call states that are not free in a given resource theory "resources", depends on whether they can be used to do tasks that are impossible under the application of free operations alone.
Theorem~\ref{prop:resource} shows that maximally coherent states are resources in the maximal sense of the word: they enable to do anything within quantum mechanics.
\begin{thm*}[\ref{prop:resource}]
		Let $\Lambda$ be a CPTP map acting on $W_1\otimes Q$, with $W_1$ having dimension $d$. Let $\ket{\psi} = \sum_{i=1}^d \frac{1}{\sqrt{d}} \ket{i}$ be a maximally coherent state on $W_2$. Then there is an operation $\Lambda'\in LOP(W_2\otimes W_1\IC Q)$, with $\tr_{W_2} [\Lambda'[\ketbra{\psi}{\psi}_{W_2}\otimes\rho_{W_1,Q}]]=\Lambda[\rho_{W_1,Q}]$.
\end{thm*}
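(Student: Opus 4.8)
The plan is to use the maximally coherent ancilla $\ket{\psi}_{W_2}$ as a teleportation resource, exploiting the fact that the generalized Pauli corrections required for teleportation are exactly the free wire operations, namely permutations and phases. The strategy rests on two facts that are already free in $LOP$: first, combining an observed quantum operation (item 3) with discarding the resulting incoherent outcome register shows that \emph{any} CPTP map on the quantum side is free; second, by Lem.~\ref{lem:bij} the bijection $\mathcal{B}$ converts a coherent wire state into a maximally correlated state between the wire and a fresh quantum register, which for $\ket{\psi}$ is maximally entangled.

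First I would apply the SWAP permutation $W_1 \leftrightarrow W_2$ of the joint incoherent basis (free, being a permutation), so that the coherent resource $\ket{\psi}$ sits on $W_1$ while the wire-part of the input $\rho_{W_1,Q}$ sits on $W_2$. Next I forward this input wire-part $W_2 \to Q_a$ by classical-to-quantum forwarding (item 4); since that operator is a faithful isometry it transfers the whole input onto the quantum side $Q_a \otimes Q$, preserving all coherences and correlations. Because $\Lambda$ is then a CPTP map on the purely quantum system $Q_a \otimes Q$, with $Q_a$ playing the role of the original $W_1$ slot, I can apply it for free as an observed quantum operation with its outcome register discarded, obtaining $\Lambda[\rho]$ on $Q_a \otimes Q$.

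It remains only to move the wire-part of this output, presently on $Q_a$, back onto the physical wire $W_1$. For this I apply $\mathcal{B}$ of Lem.~\ref{lem:bij} to $\ket{\psi}_{W_1}$, producing the maximally entangled state $\tfrac{1}{\sqrt{d}}\sum_i \ket{i}_{W_1}\ket{i}_{Q_b}$ between $W_1$ and a new quantum register $Q_b$, and then teleport $Q_a \to W_1$: a generalized Bell measurement on $Q_a \otimes Q_b$ (a free observed quantum operation whose outcome $(k,l)$ is stored in an incoherent wire register) leaves $W_1$ holding the teleported state up to a Heisenberg--Weyl operator $X^{k}Z^{l}$. The correction $X^{-k}Z^{-l}$, conditioned on $(k,l)$, is realized on $W_1$ by a cyclic permutation for $X^{-k}$ and a diagonal phase for $Z^{-l}$, both free by items 1 and 2 and legitimately adaptive since $LOP$ protocols may depend on earlier outcomes. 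After the correction the wire-part of $\Lambda[\rho]$ sits on $W_1$ and the $Q$-part on $Q$, while $W_2$ is in a fixed reset state; tracing out $W_2$ then yields precisely $\Lambda[\rho]$ on $W_1 \otimes Q$.

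The only genuinely non-routine point, and the conceptual heart of the statement, is the observation that teleportation's Heisenberg--Weyl corrections coincide with the permitted free wire operations: the shift $X$ is a permutation of the incoherent basis and the clock $Z$ is a diagonal phase. This is exactly why a maximally coherent state is a \emph{universal} resource here: converted to maximal entanglement by $\mathcal{B}$, it allows one to ship an arbitrarily processed quantum state back onto a wire using only corrections that already lie inside $LOP$. Everything else is bookkeeping; one may even invoke Prop.~\ref{prop:phases} to note that the phase corrections could be dispensed with up to arbitrarily high success probability, but as phases are taken to be free here they are used directly.
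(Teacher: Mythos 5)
Your proposal is correct and follows essentially the same route as the paper's proof: forward the wire input to the quantum side, apply $\Lambda$ there, convert the maximally coherent ancilla into a maximally entangled pair via Lem.~\ref{lem:bij}, and teleport the result back onto the wire, with the Heisenberg--Weyl corrections implemented as free permutations and phases. The only differences (the initial SWAP of $W_1$ and $W_2$, and which wire register finally carries the output) are bookkeeping.
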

\begin{proof}
	The trick is to do the operation on the quantum side, that is: send the system $W_1$ to $Q_2$, do the operation $\Lambda$ on $Q_2,Q_1$. Then use Lem.~\ref{lem:bij} to construct a Bell-type state from the ancillary coherent state on $W_2$. Finally teleport the system $Q_2$ back to $W_1$ using the original teleportation protocol~\cite{RDF.03} and using up the ancillary Bell state. In detail (identifying $Q_1=Q$ at the beginning and at the end):
	\begin{enumerate}
		\item Preparation: $\ket{\psi}_{W_2}$,
		\item (free) teleportation to the quantum side $\sum_i \bra{i}_{W_1}\otimes  \ket{i}_{Q_2} $,
		\item Application of $\Lambda$ on the quantum side:  $\Lambda_{Q_2,Q_1}$,
		\item Doubling of $W_2$ (Lem.~\ref{lem:bij}): $\sum_i \ketbra{i}{i}_{W_2} \otimes  \ket{i}_{Q_3}$,
		\item Measurement in the `Bell basis' $\bra{b(k,l)}_{Q_3,Q_2}$, given by $\ket{b(k,l)}=CNOT\circ(\ket{\hat k}\otimes\ket{l})=1/\sqrt{d}\sum_{j} e^{2 \pi \im k j/d} \ket{j} \otimes \ket{l\oplus (j-1)}$, with $a\oplus b=\mod_d (a+b-1)+1$,
		\item Finish with a diagonal unitary $ \sum_j e^{2 \pi \im k j/d} \ketbra{j}{j}_{W_2}$ on $W_2$ followed by a permutation $\sum_j \ket{l\oplus(j-1)}_{W_1}\bra{j}_{W_2}$, both depending on the $d^2$ possible outcomes of the previous measurement, given by the indices $k,l$.
	\end{enumerate}
	Using the Kraus decomposition for the map $\Lambda(\rho)=\sum_{\alpha} K^{\alpha} \rho {K^{\alpha}}\mdag$, we get that the full protocol is given by the ($d^2$) Kraus operators 
	\begin{align*}
	&1/d \sum_{i,j} \ket{l\oplus (j-1)}\bra{i}_{W_1}
	\otimes \left((\bra{l\oplus (j-1)}_{Q_2}\otimes\id_{Q_1}) \circ K^{\alpha}\circ (\ket{i}_{Q_2}\otimes\id_{Q_1})
		\right)
	\\	&=K^{\alpha}/d.
\end{align*}
\end{proof}

\section{Proofs on the coherence cost of entanglement}
Both theorems on the coherence cost of entanglement presented in the main part depend heavily on the remark~\ref{rem:generalization}, which results from just following the respective protocols. To facilitate reading we repeat it here:
\begin{rem*} [\ref{rem:generalization}]
$\forall \rho_{W}=\sum_i p_i \ketbra{i}{i}_{W}$,
\begin{align*}
1. \;&\tr_{W} (\Lambda \circ (\rho_{W}\otimes\id_Q))\in LOCC({Q}^1,\ldots,{Q}^n) \\
&\forall \Lambda \in LOP({Q}^{i(1)} \CI {W}^{j(1)} \IC {Q}^{i(2)} \CI {W}^{j(2) }\IC \ldots),
\\
2. \;&\Lambda\otimes\id_{W} \in LOP({Q}^{i(1)} \CI {W}^{j(1)} \IC {Q}^{i(2)} \CI {W}^{j(2) }\IC \ldots)\\
&\forall \Lambda \in LOCC({Q}^1,\ldots,{Q}^n). 
\end{align*}
\end{rem*}
We then have that,
\begin{thm*}[\ref{thm:bipartite}]
	Let $\eta_{LOCC}=\sum_{ij} r_{i,j} \ketbra{i}{j}_{Q^1_1}\otimes\ketbra{i}{j}_{Q^2_1}$ be a maximally correlated state (in arbitrary orthonormal local bases of $Q^1_1\otimes Q^2_1$) and $\eta_{LOP}=\sum_{ij} r_{i,j} \ketbra{i}{j}_{{W}}$ be a corresponding state in the incoherent basis $\cal Z$ of ${W}$.
If $\Lambda$ is  a CPTP map on ${Q}^1\otimes {Q}^2$, then the following statements are equivalent:
	\begin{align*}
	1.\;\;& \exists 	\Lambda_{LOCC}\in LOCC({Q}^1,{Q}^2): \\
	&\Lambda_{LOCC} \left[ \eta_{LOCC} \otimes \rho_{Q^1_2,Q^2_2} \right] 	=\Lambda[\rho_{Q^1_2,Q^2_2}] \; \forall \rho_{Q^1_2,Q^2_2}\\
	2.\;\;& \exists \Lambda_{LOP}\in LOP({Q}^1\CI {W} \IC {Q}^2): 
	\\&  \Lambda_{LOP} \left[ \eta_{LOP} \otimes \rho_{Q^1_2,Q^2_2} \right] 	=\Lambda[\rho_{Q^1_2,Q^2_2}] \; \forall \rho_{Q^1_2,Q^2_2}.
	\end{align*}
\end{thm*}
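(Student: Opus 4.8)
The plan is to exploit the resource correspondence furnished by Lemma~\ref{lem:bij}: applying the LOP bijection $\mathcal{B}$ to $\eta_{LOP}$ yields $\mathcal{B}[\eta_{LOP}]=\sum_{ij} r_{i,j}\ketbra{i}{j}_W\otimes\ketbra{i}{j}_Q$, a maximally correlated state of exactly the same shape as $\eta_{LOCC}$. Thus the coherent wire resource and the maximally correlated resource are two encodings of the same data $r=\sum_{ij} r_{i,j}\ketbra{i}{j}$, and the theorem reduces to showing that this encoding can be traversed by the respective free operations. I would prove the two implications separately, with $1\Rightarrow 2$ essentially immediate from the available tools and $2\Rightarrow 1$ requiring an explicit step-by-step simulation of an LOP protocol by LOCC.

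For $1\Rightarrow 2$, I would first prepare $\eta_{LOCC}$ on $Q^1_1\otimes Q^2_1$ from $\eta_{LOP}$ on $W$ entirely within $LOP(Q^1 \CI W \IC Q^2)$: apply $\mathcal{B}$ with target $Q^1_1$ to obtain a state maximally correlated between $W$ and $Q^1_1$, and then use classical-to-quantum forwarding of $W$ onto $Q^2_1$; a direct computation gives precisely $\eta_{LOCC}$ and leaves the spectator input $\rho_{Q^1_2,Q^2_2}$ untouched. Since $\Lambda_{LOCC}\in LOCC(Q^1,Q^2)$ is itself an LOP operation by part~2 of Remark~\ref{rem:generalization} (supplying a fresh incoherent ancillary wire for its classical communication, which is free), the composition $\Lambda_{LOP}:=\Lambda_{LOCC}\circ(\text{preparation of }\eta_{LOCC})$ lies in $LOP(Q^1 \CI W \IC Q^2)$ and satisfies $\Lambda_{LOP}[\eta_{LOP}\otimes\rho]=\Lambda_{LOCC}[\eta_{LOCC}\otimes\rho]=\Lambda[\rho]$.

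For $2\Rightarrow 1$, I would run the given $\Lambda_{LOP}$ step by step and replace the single physical wire by two shared registers held by the two parties, maintaining the invariant that whenever the LOP protocol has wire state $\xi$ the simulation holds $\mathcal{B}[\xi]$ distributed as a maximally correlated state across $Q^1_1$ and $Q^2_1$ (which holds initially, since $\eta_{LOCC}=\mathcal{B}[\eta_{LOP}]$). Permutations and phases of $\mathcal{Z}$ are reproduced by both parties acting identically (respectively by one party acting) on their register; an observed quantum operation on $Q^i$ with its incoherent encoding on $W$ is reproduced by having party $i$ measure, broadcast the classical outcome, and both parties append the corresponding incoherent register. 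The only elemental operation that moves coherence across the cut is classical-to-quantum forwarding $W_s\to Q^i$, and here I would use the crucial observation that the doubled wire $\mathcal{B}[\xi]$ is undone by $\mathcal{B}^{-1}$, whose implementation in the proof of Lemma~\ref{lem:bij} (a Fourier-basis measurement followed by a diagonal phase correction) is manifestly LOCC: party $\bar i$ measures its copy $W_s^{(\bar i)}$ in the Fourier basis, broadcasts the outcome $k$, and party $i$ applies the phase correction to $W_s^{(i)}$, leaving the reconstructed state $\xi$ in party $i$'s hands as the forwarded $Q^i_t$. Since forwarding consumes $W_s$, consuming party $\bar i$'s copy is consistent, so the invariant is preserved; tracing out any residual wire at the end becomes local discarding of both copies. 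Composing these LOCC steps yields $\Lambda_{LOCC}$ with $\Lambda_{LOCC}[\eta_{LOCC}\otimes\rho]=\Lambda[\rho]$.

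The main obstacle is exactly this forwarding step in $2\Rightarrow 1$: one must argue that the coherence the physical wire carries across the cut can be recovered by the two parties from their merely correlated registers without any quantum channel between them. The resolution is that the correlated resource is precisely $\mathcal{B}[\xi]$, so recohering it is the LOCC-implementable left inverse $\mathcal{B}^{-1}$ of Lemma~\ref{lem:bij}, and the entanglement available in $\eta_{LOCC}$ is automatically matched to the coherence in $\eta_{LOP}$ because both carry the same $r$, so no transfer beyond what the resource allows is ever demanded. Care is also needed to verify that $\mathcal{B}$ acts correctly on the wire factor of states jointly correlated with $Q^1$, $Q^2$ and the spectator $\rho$, i.e. that the doubling commutes with the $Q$-side operations; this holds because every simulated operation touches either only the $Q$ registers or only the wire registers.
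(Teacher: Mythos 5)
Your proposal is correct and follows essentially the same route as the paper's proof: the forward direction prepares $\eta_{LOCC}$ from $\eta_{LOP}$ via $\mathcal{B}$ and classical-to-quantum forwarding and then invokes Remark~\ref{rem:generalization}, and the converse simulates each elemental LOP operation by LOCC while maintaining the ``doubled wire'' invariant, with forwarding handled by the same Fourier-measurement-plus-phase-correction deletion of the far copy taken from Lemma~\ref{lem:bij}. No substantive difference from the paper's argument.
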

\begin{proof}
	The $``\Rightarrow"$ statement is a direct corollary of Rem.~\ref{rem:generalization} and Lem.~\ref{lem:bij}. The operation $\cal B$ in Lem.~\ref{lem:bij} allows to transform the state $\sum_{ij} r_{i,j} \ketbra{i}{j}_{W} \otimes \rho_{Q^1_2,Q^2_2}$ to $\sum_{ij} r_{i,j} \ketbra{ii}{jj}_{W,Q^2_2} \otimes \rho_{Q^1_2,Q^2_2}$, which by classical to quantum forwarding can in turn be transformed to $\sum_{ij} r_{i,j} \ketbra{ii}{jj}_{Q^1_2,Q^2_2} \otimes \rho_{Q^1_1,Q^2_1}$ (by $LOP(Q^1\CI W \IC Q^2)$ operations). By the Rem.~\ref{rem:generalization} we then get that the $LOCC(Q^1,Q^2)$ operation that reproduces $\Lambda$ is also a $LOP(Q^1\CI W \IC Q^2)$ operation, which concludes the proof of this direction.
	
	For the converse, assume that a protocol is given, implementing $\Lambda$ by elemental operations of $LOP(Q^1\CI W \IC Q^2)$, using an ancillary state $\sum_{ij} r_{i,j} \ketbra{i}{j}_{W}$. To get the equivalent $LOCC(Q^1,Q^2)$ protocol using the ancillary state $\sum_{ij} r_{i,j} \ketbra{ii}{jj}_{{Q_2^1},{Q_2^2}}$, replace the elemental operations of the given protocol in the following way by $LOCC(Q^1,Q^2)$ operations:
	\begin{enumerate}
		\item \emph{Permutations:} $\sum_i \ketbra{\sigma(i)}{i}_{W}$, by permutations on both sides: $\sum_i \ketbra{\sigma(i)}{i}_{Q^1_2}$ followed by $\sum_i \ketbra{\sigma(i)}{i}_{Q^2_2}$.
		\item \emph{Phases:} $\sum_j e^{\im \phi(j)} \ketbra{j}{j}_{W}$, by phases on one side: $\sum_j e^{\im \phi(j)} \ketbra{j}{j}_{Q^1_2}$
		\item \emph{Observed quantum operations:} $\ket{\alpha}_{W_a}\otimes K^{\alpha}_{Q^s}$, by the same operation, with the outcome in an ancilla of the system $Q^s$, classically communicating the result $\alpha$ to the other side $Q^{\neg s}$and encoding it there as well: $\ket{\alpha}_{Q^s_a}\otimes K^{\alpha}_{Q^s}$ followed by $\ket{\alpha}_{Q^{\neg s}_a}$.
		\item \emph{Classical to quantum forwarding:} $\sum_{k} \bra{k}_{W_a}\otimes\ket{k}_{Q^s_a}$, by first deleting the copy on $Q_2^{\neg s}$ (by doing a Fourier measurement, followed by a correction of the phase on $Q_2^{s}$, as in the proof of Lem.~\ref{lem:bij}), followed by the trivial forwarding: $\sum_{k} \bra{k}_{Q^s_2}\otimes\ket{k}_{Q^s_a}$.
	\end{enumerate} 
	Note that by doing these replacements (and merging the ancillary Hilbert spaces with $Q^j_2$, $j=1,2$ after each step) a generic state $\sum_{i,j} s_{i,j} \ketbra{i}{j}_{W} \otimes {\tau(i,j)}_{Q^1_1,Q^2_1}$ in any step of the protocol gets mapped to $\sum_{i,j} s_{i,j} \ketbra{ii}{jj}_{{Q_2^1},{Q_2^2}} \otimes {\tau(i,j)}_{Q^1_1,Q^2_1}$, from which the assertion follows.
\end{proof}
A direct corollary of the remark~\ref{rem:generalization} is Thm.~\ref{thm:conv}, which introduces useful conditions for state transformations in multipartite entanglement.
\begin{thm*}[\ref{thm:conv}]
		Let $\rho$, $\sigma$ be states on ${W}$. \\If $\exists \tau_{W}$, a state on ${W}$, s.t. 
		$$\tau_{W} \stackrel{LOP({Q}^{i(1)} \CI {W}^{j(1)} \ldots)}{\rightarrow} \rho,$$ but $$\tau_{W} \stackrel{LOP({Q}^{i(1)} \CI {W}^{j(1)} \ldots)}{\nrightarrow} \sigma,$$ then it follows that
	$\rho \stackrel{LOCC({Q}^1,\ldots,{Q}^n)}{\nrightarrow} \sigma$
\end{thm*}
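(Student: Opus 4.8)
The plan is to argue by contradiction and to reduce the statement to two facts already in hand: that $LOCC({Q}^1,\ldots,{Q}^n)$ is contained in $LOP({Q}^{i(1)} \CI {W}^{j(1)} \IC \ldots)$, which is the second item of Remark~\ref{rem:generalization}, and that $LOP$ is closed under concatenation of its elemental operations, so that $LOP$-reachability is transitive. Throughout I read the targets $\rho,\sigma$ as the states carried by the quantum systems that are produced from the respective wire resources by an $LOP$ preparation (for instance through Lem.~\ref{lem:bij} followed by classical to quantum forwarding), so that $LOCC$ genuinely acts on them and can be absorbed into the preparation.

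First I would suppose, contrary to the claim, that $\rho\stackrel{LOCC}{\rightarrow}\sigma$, i.e. that there is a map $\Lambda_{LOCC}\in LOCC({Q}^1,\ldots,{Q}^n)$ with $\Lambda_{LOCC}[\rho]=\sigma$. By the second item of Remark~\ref{rem:generalization}, $\Lambda_{LOCC}\otimes\id_{W}$ lies in $LOP({Q}^{i(1)} \CI {W}^{j(1)} \IC \ldots)$, so the very same conversion of $\rho$ into $\sigma$ is realized by an $LOP$ operation; that is, $\rho\stackrel{LOP}{\rightarrow}\sigma$. It is crucial here that this lift keeps the wire untouched through the factor $\id_{W}$ and performs all work on the $Q^k$, so that no non-free quantum-to-wire step is ever invoked.

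Next I would use the hypothesis $\tau_{W}\stackrel{LOP}{\rightarrow}\rho$ through a map $\Phi\in LOP(\ldots)$ with $\Phi[\tau_{W}]=\rho$. Concatenating, $(\Lambda_{LOCC}\otimes\id_{W})\circ\Phi$ is again an $LOP$ map and it sends $\tau_{W}$ to $\sigma$, whence $\tau_{W}\stackrel{LOP}{\rightarrow}\sigma$. This contradicts the hypothesis $\tau_{W}\stackrel{LOP}{\nrightarrow}\sigma$, and the claim $\rho\stackrel{LOCC}{\nrightarrow}\sigma$ follows. I expect no genuine obstacle beyond the bookkeeping that the lifted map $\Lambda_{LOCC}\otimes\id_{W}$ and the preparation $\Phi$ act on matching registers; since Remark~\ref{rem:generalization} is stated for an arbitrary wiring, this matching is automatic, and the entire content of the theorem is the inclusion $LOCC\subseteq LOP$ together with transitivity.
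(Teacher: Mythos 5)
Your proof is correct and takes essentially the same route as the paper's: argue by contradiction, lift the hypothetical LOCC map into LOP via point 2 of Remark~\ref{rem:generalization}, and conclude $\tau_{W}\stackrel{LOP}{\rightarrow}\sigma$ by concatenation, contradicting the hypothesis. Your extra remark about where $\rho,\sigma$ live (wire versus quantum registers) is a sensible clarification of a notational looseness that the paper's own one-line proof glosses over, but it does not change the argument.
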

\begin{proof}
	Assume the theorem is not valid, that is, $$\tau_{W} \stackrel{LOP({Q}^{i(1)} \CI {W}^{j(1)} \ldots)}{\rightarrow} \rho,$$  $$\tau_{W} \stackrel{LOP({Q}^{i(1)} \CI {W}^{j(1)} \ldots)}{\nrightarrow} \sigma$$ and $$\rho \stackrel{LOCC(Q_1,\ldots,Q_n)}{\rightarrow} \sigma.$$ 
	\\
	From Rem.~\ref{rem:generalization} (point 2.) it follows that  $$\rho \stackrel{LOP(Q^{i(1)} \CI W^{j(1)} \ldots)}{\rightarrow} \sigma$$ and therefore $$\tau_{W} \stackrel{LOP(Q^{i(1)} \CI W^{j(1)} \ldots)}{\rightarrow} \sigma,$$ a contradiction.
\end{proof}

\section{Different types of entanglement}
\begin{ex.}
	The least coherent state necessary to produce the $\ket{GHZ_n}=\frac{1}{\sqrt{2}}(\ket{0}^{\otimes n}+\ket{1}^{\otimes n})$ state by $LOP(W \IC Q^{1}\otimes\ldots Q^n)$ is given by $\ket{+}_W=\frac{1}{\sqrt{2}}(\ket{0}+\ket{1})$. Similarly, one can produce $\ket{ {\cal W}_n}=\frac{1}{\sqrt{n}} (\ket{0,\ldots0, 1}+\ldots\ket{1,0,\ldots,0}$ using $LOP(W \IC Q^{1} \otimes\ldots Q^n)$ from $\ket{ +_{\log_2(n)}}_W=\frac{1}{\sqrt{n}}(\ket{0}+\ldots+\ket{n-1})$.
	
	The least coherent state necessary to produce the $\ket{GHZ_n}_{Q^1,\ldots,Q^n}$ state by $LOP(Q^{1}\CI  W^1 \IC Q^{2} \CI \ldots W^{n-1} \IC Q^n)$ is given by $\ket{+_{n-1}}_{W^1,\ldots W^{n-1}}=(\frac{1}{\sqrt{2}}(\ket{0}+\ket{1}))^{\otimes (n-1)}$. Similarly, one can produce $\ket{ {\cal W}_3}$ by $\ket{+_{\cal W}}_{W^1, W^2}:=\frac{1}{\sqrt{2}}(\ket{0}+\ket{1}) \otimes \frac{1}{\sqrt{3}}(\ket{0}+\sqrt{2}\ket{1})$, using $LOP(Q^{1}\CI  W^1 \IC Q^{2} \CI  W^{2} \IC Q^3)$.
\end{ex.}
\begin{proof}
	We start by giving explicit protocols that do the conversions. For the $\ket{GHZ_n}_{Q^1,\ldots,Q^n}$ state, we simply apply a CNOT (a permutation) on $\ket{+}_{W_1} \otimes \ket{0}^{(n-1)}_{W_2,\ldots,W_{n-1}}$ $n-1$ times (on ${W_1,W_j}$), resulting in $\ket{GHZ_n}_W$, then teleport the respective subsystems. For the $\ket{\cal W}$ state we apply a permutation that brings $\ket{i}_{W_1}\otimes\ket{0}^{(n-1)}_{W_2,\ldots,W_{n-1}}$ to $\ket{0}_{W_1}\ldots \ket{0} \ket{1}_{W_i}\ket{0}\ldots\ket{0}_{W_{n-1}}$ to $\ket{ +_{\log_2(n)}}_{W_1}\otimes\ket{0}^{(n-1)}_{W_2,\ldots,W_{n-1}}=\frac{1}{\sqrt{n}}(\ket{0}+\ldots+\ket{n-1})\otimes\ket{0}^{(n-1)}_{W_2,\ldots,W_{n-1}}$ and teleport the subsystems to the different parties. That the generation of the $\ket{GHZ_n}$ state is optimal is simply seen by the fact that $\ket{GHZ_2}$ is the state with the minimal coherence rank having the relative entropy of coherence equal to the relative entropy of entanglement of $\ket{GHZ_n}_{Q^1,\ldots,Q^n}$, namely $1$.
	
	For the case with more than one wire, to prepare the $\ket{GHZ_n}_{Q^1,\ldots,Q^n}$ state from $\ket{+_{n-1}}_{W^1,\ldots,W^{n-1}}$, we do a local CNOT on each wire, after preparing an ancillary state $\ket{0}W^j_2$, effectively "doubling" the states, resulting in $\ket{GHZ_{2}}^{\otimes(n-1)}_{(W^1_1,W^1_2),\ldots,(W^{n-1}_1,W^{n-1}_2)}$. We then forward each half of the system of the respective wires to the quantum systems they connect, resulting in $\ket{GHZ_{2}}^{\otimes(n-1)}_{(Q^1_1,Q^2_1),(Q^2_2,Q^3_1),\ldots,Q^{n-1}_1,(Q^{n-1}_2,Q^{n}_1)}$. We can then double the system $Q^2_1$, resulting in the state $\ket{GHZ_{3}}_{Q^1_1,Q^2_1,Q^2_3}$. We can then use the $\ket{GHZ_{2}}_{(Q^2_2,Q^3_1)}$ to teleport the system $Q^2_3$ to $Q^3_3$, and so on.  Iteratively we get the wanted $\ket{GHZ_n}_{Q^1,\ldots,Q^n}$ state. The optimality follows from any bipartitions being equivalent to $\ket{GHZ_2}$ states.
	
	Starting from the $\ket{+_{\cal W}}_{W^1, W^2}$ state, we first "double" each of the sides, then make a permutation $1\leftrightarrow 0$ on the site $W^2_2$ which puts the system into the state:	$\ket{GHZ_2}_{W^1_1,W^1_2}\otimes\frac{\sqrt{2}}{\sqrt{3}}(\ket{0,1}+\frac{1}{\sqrt{3}}\ket{1,0})_{W^2_1,W^2_2}$. We continue by forwarding the system $W^2_2$ to $Q^2_1$. On this site ($Q^2$) we then continue by applying the operation $\ket{0,0}_{Q^2_1,Q^2_2}\bra{0}_{Q^2_1}+\frac{1}{\sqrt{2}}\ket{0,1}_{Q^2_1,Q^2_2}\bra{1}_{Q^2_1}+\frac{1}{\sqrt{2}}\
	\ket{1,0}_{Q^2_1,Q^2_2}\bra{1}_{Q^2_1}$ leaving us with the $\ket{{\cal W}}$ state on $Q^2_1\otimes Q^2_2 \otimes W^2_1$. The next step is to forward the system $W^2_1$ to $Q^3$. Finally one can distribute the $\ket{GHZ_2}_{W^1_1,W^1_2}$ state to the connected quantum systems (yielding  $\ket{GHZ_2}_{Q^1_1,Q^2_3}$), and use this to teleport via LOCC and hence LOP the system $Q^2_2$ to $Q^1_2$. The protocol hence results in $\ket{{\cal W}}_{Q^1_2,Q^2_1,Q^3}$.	
\end{proof}

\section{All not incoherent-quantum states are resource states}\label{sec:resources}
In this section we show by a very simple (but highly inefficient) protocol that any state that does not have the form $\sum_i \ketbra{i}{i}_W \otimes \rho(i)_Q$ is maximally useful, in the sense that having enough such states as ancillae one can do any operation. If a state does not have this form, it must have the form 
$\sum_{i,j} \ketbra{i}{j}_W \otimes\rho(i,j)_Q$, with $\rho(i_0,j_0)\neq 0$ for some $i_0,j_0$ (wlog $i_0=1$, $j_0=2$). The first step of the protocol is to double the state to get: $\sum_{i,j} \ketbra{i}{j}_W \otimes \ketbra{i}{j}_{Q_1}\otimes\rho(i,j)_{Q_2}$. To simplify the analysis we now note that we can make the measurement $K_1=1/\sqrt{2} (\bra{1}+\bra{2})$, $K_2=\sqrt{\id - K_1\mdag K_1}$ on $Q_1$, which with non-zero probability will result in a state proportional to $\sum_{i,j \in \{1,2\}} \ketbra{i}{j}_W \otimes\rho(i,j)_{Q_2}$. That means that as long as we are not interested in the optimal distillation protocol we can start wlog with the latter state. The next thing to note is that for a state of this form there is always a measurement on $Q_2$ that will yield some state with coherence on $W$ with non-zero probability (the algorithm is given in the proof of Thm. 2 in~\cite{Chitambar2016a}). That means that we can start wlog with a state $\sum_{i,j \in \{1,2\}} \ketbra{i}{j}_W \sigma(i,j)$, with $\sigma(1,2)\neq 0$. By a rotation we get $\sigma(1,2)> 0$ and by the map which is given by the identity with probability $1/2$ and the permutation $1\leftrightarrow 2$ with probability $1/2$, we can assume $\sigma(1,1)=\sigma(2,2)=1/2$. Having a state $\sigma_1(1,2)=p/2$ on $W_1$ and adding a second state with $\sigma_2(1,2)=q/2>0$ $W_2$ one can first do a CNOT with control $W_1$ acting on $W_2$ followed by the measurement ($\bra{+}_{W_1}=(\bra{1}_{W_1}+\bra{2}_{W_1})/\sqrt{2}$, $\bra{-}_{W_1}=(\bra{1}_{W_1}-\bra{2}_{W_1})/\sqrt{2}$). With probability $\frac{1}{2} (1+p q)$ this yields the result ``$+"$ and the state
\[\left(
\begin{array}{cc}
\frac{1}{2} & \frac{p+q}{2 p q+2} \\
\frac{p+q}{2 p q+2} & \frac{1}{2} \\
\end{array}
\right)\]
and with probability $\frac{1}{2} (1-p q)$ one gets the result ``$-"$ and the state
\[\left(
\begin{array}{cc}
\frac{1}{2} & \frac{p-q}{2-2 p q} \\
\frac{p-q}{2-2 p q} & \frac{1}{2} \\
\end{array}
\right).\]

\begin{figure}[h]
	\centering
	\includegraphics[width=9cm]{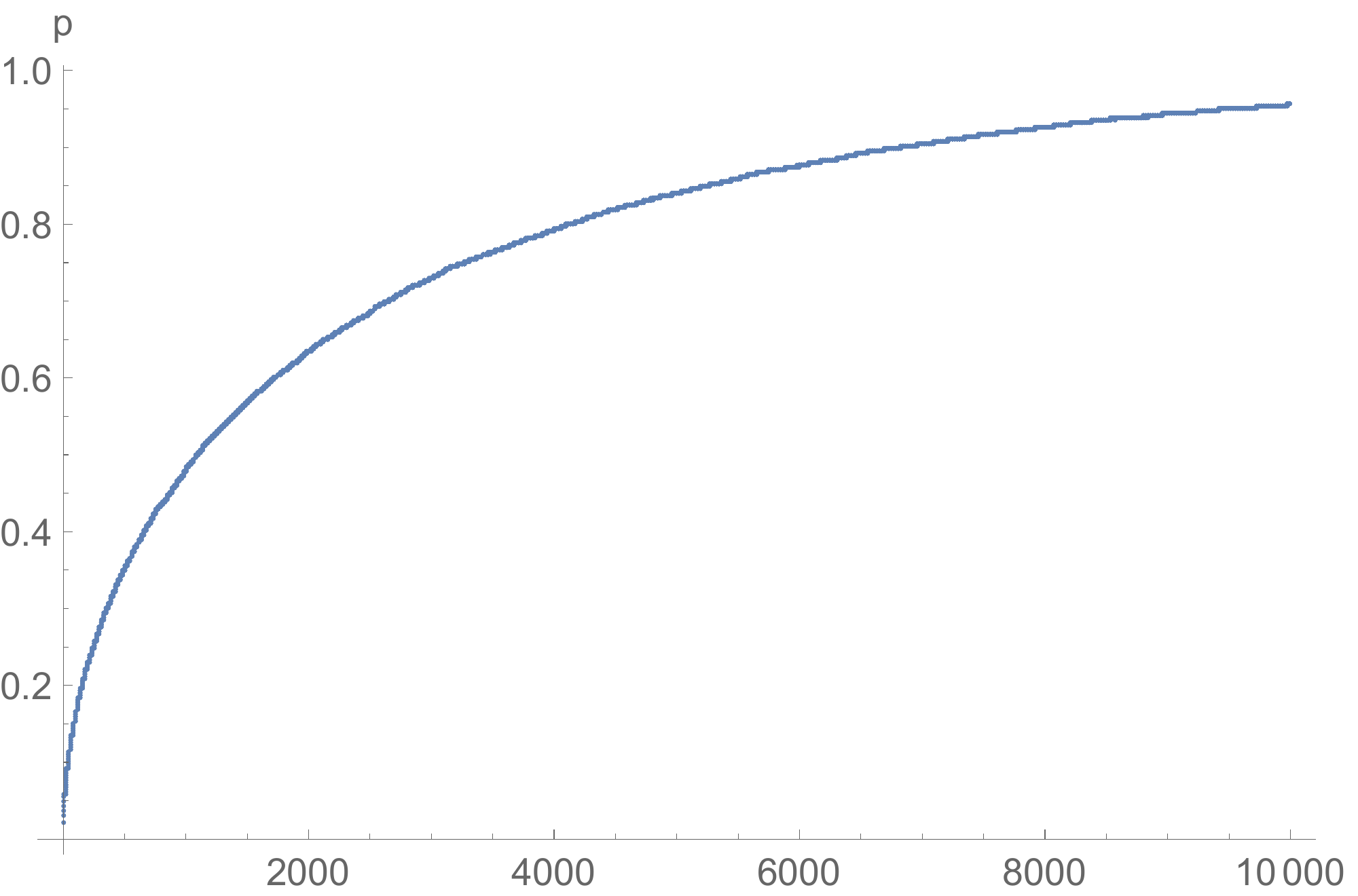}
	\caption{The figure shows a simulation of the protocol discussed in Appendix~\ref{sec:resources}. repeatedly a state with l1-coherence $0.01$ is used to increase the l1-coherence $p/2$ of a given state (also initialized with coherence $0.01$). If the coherence gets negative the state is dropped. This is repeated 10000 times to get the average behaviour. One sees that at some point the protocol saturates with coherence $p/2\approx0.5$}
	\label{fig:distillationofcoherence}
\end{figure}

Notice that repeating the sequence many times (adding many times a state with coherence $q$), if one gets symmetric outcomes (same number of + and -) the resulting state is equal to the initial one, while the probability to get outcomes with more $+$ is higher than to get outcomes with more $-$. There is therefore a bias to get more coherent states over time. Also the protocol only saturates at $p_{final}=1$. This means that for a given $\epsilon$, having enough copies of any state which is not incoherent-quantum, LOP allows to prepare a maximally coherent state with fidelity $f>1-\epsilon$ and probability $p> 1-\epsilon$. A simulation of this protocol is given in Fig.~\ref{fig:distillationofcoherence}. Of course the protocol is sub-optimal in many ways: it only considers two-dimensional coherence and destroys even a part of that by making the state symmetric at the beginning. Furthermore one could maybe improve the algorithm by grouping states. In any case it shows that one can distillate coherence in this setting, similar as it is done in~\cite{Winter2016} using incoherent operations, even though the specific protocol given there does not seem to be easily adapted to LOP. It remains an interesting open question what is the best possible distillation rate of coherence in LOP.

\section{Counterexample}
\label{sec:counterexample}

\begin{lem}\label{lem:rank1}
	Be ${\bf A}$ a rank 1 incoherent Kraus operator, and $\{{\bf B}_s\}$ a set of incoherent Kraus operators such that
	$$
	{\bf A}^\dagger{\bf A}=\sum_s {\bf B}_s^\dagger {\bf B}_s
	$$
	then, ${\bf B}_s =\lambda_s {\bf U}_s {\bf A}$ for certain $\lambda_s >0$ and ${\bf U}_s$ incoherent unitary operations.	
\end{lem}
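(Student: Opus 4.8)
The plan is to exploit the rigidity that rank one forces on both sides of the identity $\mathbf{A}^\dagger\mathbf{A}=\sum_s\mathbf{B}_s^\dagger\mathbf{B}_s$. First I would pin down the structure of a rank one incoherent Kraus operator. Incoherence means that each Kraus operator has at most one nonzero entry per column in the basis $\mathcal{Z}$, so we may write $\mathbf{A}=\sum_i c_i\ketbra{f(i)}{i}$; the rank one condition forces all targets $f(i)$ with $c_i\neq 0$ to coincide with a single basis label $a$, whence $\mathbf{A}=\ket{a}\bra{v}$ with $\ket{v}=\sum_i\bar c_i\ket{i}$ and $\ket{a}$ a basis state. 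Consequently $\mathbf{A}^\dagger\mathbf{A}=\ket{v}\bra{v}$ is a rank one positive operator.

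Next I would use positivity. Each $\mathbf{B}_s^\dagger\mathbf{B}_s$ is positive semidefinite and satisfies $0\le\mathbf{B}_s^\dagger\mathbf{B}_s\le\ket{v}\bra{v}$, so its range lies in $\mathrm{span}\{\ket{v}\}$ and hence $\mathbf{B}_s^\dagger\mathbf{B}_s=\mu_s\ket{v}\bra{v}$ for some $\mu_s\ge 0$ with $\sum_s\mu_s=1$. This says exactly that $\mathbf{B}_s$ annihilates everything orthogonal to $\ket{v}$, so $\mathbf{B}_s=\mathbf{B}_s\,\ket{v}\bra{v}/\braket{v}{v}=\ket{p_s}\bra{v}$ with $\ket{p_s}=\mathbf{B}_s\ket{v}/\braket{v}{v}$. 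Thus each nonzero $\mathbf{B}_s$ is again rank one, and crucially it carries the same right factor $\bra{v}$ as $\mathbf{A}$.

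Now I would invoke incoherence of $\mathbf{B}_s$ a second time. Column $i$ of $\ket{p_s}\bra{v}$ equals $\bar v_i\ket{p_s}$; requiring at most one nonzero entry, together with $\ket{v}\neq 0$, forces $\ket{p_s}=\beta_s\ket{b_s}$ for a single basis state $b_s$ and a scalar $\beta_s$. Hence $\mathbf{B}_s=\beta_s\ket{b_s}\bra{v}$ while $\mathbf{A}=\ket{a}\bra{v}$, and it remains only to build an incoherent unitary carrying $\ket{a}$ to $\ket{b_s}$ with the correct phase. Writing $\beta_s=|\beta_s|e^{i\phi_s}$, I choose any permutation $\pi$ with $\pi(a)=b_s$ and set $\mathbf{U}_s=e^{i\phi_s}\ketbra{b_s}{a}+\sum_{j\neq a}\ketbra{\pi(j)}{j}$, which is manifestly an incoherent unitary. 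Then $\mathbf{U}_s\mathbf{A}=e^{i\phi_s}\ket{b_s}\bra{v}$, so $\mathbf{B}_s=\lambda_s\mathbf{U}_s\mathbf{A}$ with $\lambda_s=|\beta_s|>0$, as claimed.

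The main obstacle is not any single computation but keeping the bookkeeping tight: I must verify that the vector $\ket{v}$ produced from $\mathbf{A}$ is literally the same right factor shared by every $\mathbf{B}_s$, which is precisely what the positivity step delivers, and I must treat the degenerate terms with $\mu_s=0$ separately, simply discarding the corresponding $\mathbf{B}_s=0$ since the statement with $\lambda_s>0$ concerns only the nonzero Kraus operators. The extension of the partial assignment $a\mapsto b_s$ to a full permutation is always possible in finite dimension, so the incoherent unitary $\mathbf{U}_s$ indeed exists.
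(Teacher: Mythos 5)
Your proof is correct, and while it opens with the same move as the paper -- using $0\le\mathbf{B}_s^\dagger\mathbf{B}_s\le\mathbf{A}^\dagger\mathbf{A}$ and the rank-one hypothesis to force $\mathbf{B}_s^\dagger\mathbf{B}_s=\mu_s\,\mathbf{A}^\dagger\mathbf{A}$ -- it then takes a genuinely different route. The paper invokes the abstract polar/singular-value-decomposition fact that $X^\dagger X=Y^\dagger Y$ implies $Y=\mathbf{U}X$ for some unitary $\mathbf{U}$, and then simply asserts that ``because $\mathbf{A}$ and $\mathbf{B}_s$ are incoherent, $\mathbf{U}_s$ must be incoherent.'' That last assertion is the delicate point: the unitary from polar decomposition is only determined on the (one-dimensional) range of $\mathbf{A}$, and one must still check that it admits an incoherent extension. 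Your argument fills exactly this gap by working with the explicit dyadic forms $\mathbf{A}=\ket{a}\bra{v}$ and $\mathbf{B}_s=\beta_s\ket{b_s}\bra{v}$, which you derive directly from the column structure of incoherent Kraus operators, and by constructing $\mathbf{U}_s$ concretely as a phase times a permutation extending $a\mapsto b_s$. You also handle the degenerate $\mathbf{B}_s=0$ terms, which the paper passes over. What the paper's approach buys is brevity and independence from the explicit rank-one normal form; what yours buys is an elementary, self-contained verification of the incoherence of $\mathbf{U}_s$, which is the step the published proof leaves implicit.
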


Proof:  Since ${\bf A}^\dagger{\bf A}$ is rank 1,  and ${\bf B}_s^\dagger{\bf B}_s$ are all positive, then ${\bf B}_s^\dagger{\bf B}_s = \lambda_s^2 {\bf A}^\dagger{\bf A}$ for certain $\lambda_s$, such that $\sum_s \lambda_s^2 =1$. Therefore, from the singular value decomposition  theorem, ${\bf A}^\dagger{\bf A}=({\bf B}_s/\lambda_s)^\dagger({\bf B}_s/\lambda_s)$ iff ${\bf B}_s/\lambda_s = {\bf U}_s {\bf A}$ for certain unitaries ${\bf U}_s$. Finally, because ${\bf A}$ and ${\bf B}_s$ are incoherent,   ${\bf U}_s$ must be incoherent.

\subsection{Lemma: general form for 3 level systems}

From theorem  \ref{thm:conv}, the most general form of an operation $\Lambda\in LOP(W\IC Q)$, with $\dim(W)=3$, $\dim(Q)=1$ is given by
\begin{eqnarray}
\Lambda(\rho)&=& \sum_{\alpha_1 \in I_{1,3}} {\bf F}_{1}^{\alpha_1}\rho {\bf F}_{1}^{\alpha_1\dagger} + 
\mathop{\sum_{\alpha_2\in I_{2,2}}}_{\alpha_1\in I_{1,2}}
{\bf F}_{2,2}^{\alpha_2\alpha_1}\rho {\bf F}_{2,2}^{\alpha_2 \alpha_1 \dagger}+ \nonumber\\
&&
\mathop{\sum_{\alpha_2\in I_{2,1}}}_{\alpha_1\in I_{1,1}}
{\bf F}_{2,1}^{\alpha_2\alpha_1}\rho {\bf F}_{2,1}^{\alpha_2 \alpha_1\dagger}+
\mathop{\mathop{\sum_{\alpha_3\in I_{3,1}}}_{\alpha_2\in I'_{2,2}}}_{\alpha_1\in I_{1,2}}{\bf F}_{3}^{\alpha_3\alpha_2\alpha_1}\rho {\bf F}_{3}^{\alpha_3\alpha_2 \alpha_1\dagger}
\end{eqnarray}$$
$$
with (defining for notational ease $\cut_t(x):= \min(t,x) $)
\begin{subequations}
	\begin{align}
	{\bf F}_1^{\alpha_1} &= \sum_{m} q_m^{\alpha_1} |\sigma^{\alpha_1}(m)\rangle\langle m| \;  \label{3lform:f1} \\
	{\bf F}_{2,2}^{\alpha_2\alpha_1} &= \sum_{m} q_{m}^{\alpha_2\alpha_1} |(\sigma^{\alpha_2\alpha_1}\circ {\rm cut}_2\circ \sigma^{\alpha_1})(m)\rangle\langle m|  \label{3lform:f2} \\
	{\bf F}_{2,1}^{\alpha_2\alpha_1} &= \sum_{m} q_{m}^{\alpha_2\alpha_1} |(\sigma^{\alpha_2\alpha_1}\circ {\rm cut}_1\circ \sigma^{\alpha_1})(m)\rangle\langle m|  \label{3lform:f3}\\
	{\bf F}_3^{\alpha_3\alpha_2\alpha_1} &= \sum_{m} q_{m}^{\alpha_3\alpha_2\alpha_1} |(\sigma^{\alpha_3\alpha_2\alpha_1}\circ {\rm cut}_1
	\circ
	\sigma^{\alpha_2\alpha_1}\circ {\rm cut}_2 \circ  \sigma^{\alpha_1})(m)\rangle\langle m|   \label{3lform:f4}
	\end{align}
\end{subequations}
and $\alpha_1 \in I_{1,3} \cup I_{1,2} \cup I_{1,1}$, $\alpha_2 \in I_{2,1} \cup I_{2,2} \cup I'_{2,2}$ and $\alpha_3 \in I_{3,1}$. Due to the trace preserving condition, these operators must satisfy the constraints
\begin{subequations}
	\label{eq:TPC}
	\begin{align}
	\sum_{\alpha_2 \in I_{2,1}}{\bf F}_{2,1}^{\alpha_2\alpha_1\dagger}{\bf F}_{2,1}^{\alpha_2\alpha_1} &=  {\bf F}_{1}^{\alpha_1\dagger }{\bf F}_{1}^{\alpha_1}
	\;\;, \alpha_1 \in I_{1,1}
	\\
	\sum_{\alpha_2 \in I_{2,2}}{\bf F}_{2,2}^{\alpha_2\alpha_1\dagger}{\bf F}_{2,2}^{\alpha_2\alpha_1} &=   {\bf F}_{1}^{\alpha_1\dagger }{\bf F}_{1}^{\alpha_1}
	\;\;, \alpha_1 \in I_{1,2}\\
	\sum_{\alpha_3 \in I_{3,1}}{\bf F}_{3}^{\alpha_3\alpha_2\alpha_1\dagger}{\bf F}_{3}^{\alpha_3\alpha_2\alpha_1} &=  {\bf F}_{2,2}^{\alpha_2\alpha_1\dagger }{\bf F}_{2,2}^{\alpha_2\alpha_1}\;\;, \alpha_1 \in I_{1,2}, \alpha_{2} \in  I'_{2,2}\\
	\sum_{\alpha_2 \in I_{2,2} \cup I'_{2,2}}{\bf F}_{2,2}^{\alpha_2\alpha_1\dagger}{\bf F}_{2,2}^{\alpha_2\alpha_1} &=  {\bf F}_{1}^{\alpha_1\dagger }{\bf F}_{1}^{\alpha_1}
	\;\;, \alpha_1 \in I_{1,2}\\
	\sum_{\alpha_1 \in I_{1,1} \cup I_{1,2}\cup I_{1,3}}{\bf F}_{1}^{\alpha_1\dagger}{\bf F}_{1}^{\alpha_1} &=  {\bf 1}.
	\end{align}
\end{subequations}

\emph{Proof}: It follows from Thm.~\ref{thm:conv}, if we assume without loss of generality, that the initial global state is a product of the initial state $\rho$ on $W$, and reference ancillary state $\rho_{Q}=|0\rangle\langle 0|_{Q}$. The final state is then given by
$$
\Lambda(\rho)=  {\rm Tr}_Q\Lambda_{WQ}(\rho\otimes |0\rangle\langle 0|_Q) = \sum_{\vec{\alpha}}\sum_{m,m'} Q^{\vec{\alpha}}_{mm'} |f^{\vec{\alpha}}(m)\rangle\langle m |\rho
|m'\rangle\langle f^{\vec{\alpha}}(m') |
$$
being $f^{\vec{\alpha}}=\sigma^{\alpha_k}\circ cut^{r_{\alpha_k}\ldots \alpha_1}\circ \ldots \circ \sigma^{\alpha_1}$  and  $Q^{\vec{\alpha}}_{mm'}={\rm Tr}({\bf E}^{\vec{\alpha}}_{m}|0\rangle\langle 0|{\bf E}^{\vec{\alpha}\dagger}_{m'})=\langle 0|{\bf E}^{\vec{\alpha}\dagger}_{m'}{\bf E}^{\vec{\alpha}}_{m}|0\rangle=Q^{\vec{\alpha}\dagger}_{mm'}$ with 
${\bf E}_{m}^{\vec{\alpha}}= {\bf E}_{f^{\alpha_k\ldots\alpha_1}(m)}^{\alpha_k\ldots\alpha_1}\ldots {\bf E}_{f^{\alpha_1}(m)}^{\alpha_2\alpha_1} {\bf E}_m^{\alpha_1}$ the sequence of conditional operators applied on each step. By construction, $Q^{\vec{\alpha}}_{mm'}$ is a Grahm matrix, and hence, it is positive semidefinite. With $\zeta_m^{\vec{\alpha}\lambda}=\bra{\lambda}{\bf E}^{\vec{\alpha}}_{m}\ket{0}$, it can be 
expended as  $Q_{mm'}^\alpha= \sum_{\lambda} \zeta_m^{\vec{\alpha}\lambda} \zeta_{m'}^{\vec{\alpha}\lambda\dagger}$ and we see that the general form of $\Lambda$ is given by
$$
\Lambda(\rho)=\sum_{\vec{\alpha},\lambda}  \tilde{\bf F}^{\vec{\alpha}\lambda} \rho \tilde{\bf F}^{\vec{\alpha}\lambda\dagger}
$$
with $\tilde{\bf F}^{\vec{\alpha}\lambda}= \sum_{m} \zeta^{\vec{\alpha}\lambda}_{m}|f_{m}^{\vec{\alpha}}\rangle\langle m |$. The $\lambda$ coefficient in the sum can be assimilated to the last set of outcomes, leading to the form presented in the theorem.

\subsection{Proposition:  \texorpdfstring{$LOP \neq IQO$}{LOP != IQO}}
Suppose now that the Incoherent operation  $\Lambda(\rho)=\sum_{s=1}^4{\bf K}_s\rho {\bf K}^\dagger_s$ defined by

$$
{\bf K}_1=\left(
\begin{array}{ccc}
\frac{1}{2} & -\frac{1}{2} & 0 \\
0 & 0 & \frac{1}{2} \\
0 & 0 & 0 \\
\end{array}
\right) \hspace{2cm}
{\bf K}_2=\left(
\begin{array}{ccc}
\frac{1}{2} & 0 & -\frac{1}{2} \\
0 & \frac{1}{2} & 0 \\
0 & 0 & 0 \\
\end{array}
\right)
$$

$$
{\bf K}_3=\left(
\begin{array}{ccc}
0 & \frac{1}{2} & -\frac{1}{2} \\
\frac{1}{2} & 0 & 0 \\
0 & 0 & 0 \\
\end{array}
\right)
\hspace{2cm}
{\bf K}_4=\left(
\begin{array}{ccc}
\frac{1}{2} & \frac{1}{2} & \frac{1}{2} \\
0 & 0 & 0 \\
0 & 0 & 0 \\
\end{array}
\right)
$$
admits that decomposition. Since $\sum_\mu \lambda_\mu {\bf K}_{s_\mu}$ is an  incoherent Kraus operator iff $s_\mu = {\rm cst}$  it follows that
in any possible incoherent Kraus representation of  $\Lambda(\rho)$, all the Kraus operators need to be proportional to some  ${\bf K}_s$, hence,
\begin{eqnarray}
{\bf F}_1^{\alpha_1} &=& 0\;\; \forall \alpha_1\in I_{1,3} \\
{\bf F}_{2,2}^{\alpha_2\alpha_1} &=& \zeta_{2,2}^{\alpha_2,\alpha_1} {\bf K}_{m_{\alpha_2,\alpha_1}} \;\; m_{\alpha_2,\alpha_1} \in \{1,2,3\}   \\
{\bf F}_{2,1}^{\alpha_2\alpha_1} &=& \zeta_{2,1}^{\alpha_2,\alpha_1} {\bf K}_4   \\
{\bf F}_3^{\alpha_3\alpha_2\alpha_1} &=& \zeta_{3}^{\alpha_3,\alpha_2,\alpha_1} {\bf K}_4 
\end{eqnarray}
Plugging this in the trace preserving conditions \ref{eq:TPC} result in
\begin{eqnarray}
{\bf K}_4^\dagger {\bf K}_4\left(\sum_{\alpha_3\in I_{3,1}}|\zeta_3^{\alpha_3 \alpha_2\alpha_1}|^2\right)&=& {\bf F}_{2,2}^{\alpha_2\alpha_1\dagger}{\bf F}_{2,2}^{\alpha_2\alpha_1}\\
{\bf K}_4^\dagger {\bf K}_4\left(\sum_{\alpha_{2,1}}|\zeta_{2,1}^{\alpha_2\alpha_1}|^2\right)&=& {\bf F}_{1}^{\alpha_1\dagger}{\bf F}_{1}^{\alpha_1}
\end{eqnarray}
If the left hand side is non-zero in the first (second) condition, it implies that for certain $\alpha_1,\alpha_2$  ($\alpha_1$)  ${\bf F}_{2,2}^{\alpha_2\alpha_1}$ (${\bf F}_{1}^{\alpha_1}$)  should (by Lem.~\ref{lem:rank1}) be proportional to ${\bf U}{\bf K}_4$ for certain ${\bf U}$ unitary incoherent, but then, 
against the hypothesis,  ${\bf F}_{2,2}^{\alpha_2\alpha_1}$ (${\bf F}_{1}^{\alpha_1}$) needs to be of the form \ref{3lform:f2} (\ref{3lform:f1}).
On the other hand, if both expressions are zero, ${\bf K}_4$ can not appear in the Kraus decomposition of $\Lambda$, leading to a contradiction. Therefore the explicitly incoherent operation $\Lambda$ is not in $LOP(W\IC Q)$, while being an element of $IQO$.

\end{document}